\providecommand{\U}[1]{\protect\rule{.1in}{.1in}}
\newtheorem{thm}{Theorem}\crefname{thm}{Theorem}{Theorems}
\newtheorem{lem}[thm]{Lemma}\crefname{lem}{Lemma}{Lemmas}
\newtheorem{prp}[thm]{Proposition}\crefname{prp}{Proposition}{Propositions}
\newtheorem{cor}[thm]{Corollary}\crefname{cor}{Corollary}{Corollaries}
\crefname{prb}{Problem}{Problems}
\newtheorem{dfn}[thm]{Definition}\crefname{dfn}{Definition}{Definitions}
\newtheorem{fac}[thm]{Fact}\crefname{fac}{Fact}{Facts}
\crefname{section}{Section}{Sections}
\crefname{appendix}{Appendix}{Appendices}
\numberwithin{equation}{section}
\let\oldref\ref
\renewcommand{\ref}[1]{(\oldref{#1})}
\DeclareMathOperator{\tr}{tr}
\DeclareMathOperator{\sgn}{sgn}
\begin{document}

\title{Coordinating Decisions via Quantum Telepathy}
\author[1]{Dawei Ding\thanks{dding@alumni.stanford.edu}}
\affil[1]{Independent}
\author[2]{Liang Jiang}
\affil[2]{Pritzker School of Molecular Engineering, University of Chicago, Chicago, Illinois 60637}
\date{}
\maketitle

\begin{abstract}
Quantum telepathy is the phenomenon where two non-communicating parties can exhibit correlated behaviors that are impossible to achieve using classical resources. This is also known as Bell inequality violation and is made possible by quantum entanglement. In this work, we present a conceptual framework for applying quantum telepathy to real-world problems. In general, the problems involve multiple parties making local observations that need to coordinate their decisions but are unable to communicate. 
We argue this inability is actually quite prevalent in the modern era where the decision-making timescales of computer processors are so short that the speed of light delay is appreciable in comparison. We highlight the example of high-frequency trading (HFT), where trades are made at microsecond timescales, but the speed of light delay between different stock exchanges are on the order of 100 microseconds to 10 milliseconds. Due to the maturity of Bell inequality violation experiments, experimental realization of quantum telepathy schemes that can attain a quantum advantage for real-world problems \emph{is already almost immediately possible}.
We demonstrate this by conducting a case study for a concrete HFT scenario that gives rise to a generalization of the CHSH game and evaluate different possible physical implementations for achieving a quantum advantage. It is well known that Bell inequality violation is a rigorous mathematical proof of a quantum advantage over any classical strategy and does not need any complexity-theoretic assumptions such as $\text{BQP}\neq\text{BPP}$. Moreover, fault tolerance is not necessary to realize a quantum advantage: for example, violating the CHSH inequality only requires single-qubit operations on two entangled physical qubits.
\end{abstract}

\tableofcontents

\section{Introduction}
Quantum entanglement is a unique testament to the strangeness of quantum mechanics. As stated in the celebrated result of John Bell~\cite{bell1964einstein}, no local hidden variable theory can reproduce the correlated behaviors of entangled particles. His prediction, which is known as Bell inequality violation, and its subsequent experimental verification \cite{freedman1972experimental,fry1976experimental,aspect1981experimental,aspect1982experimental}, completely overturned traditional assumptions of how nature works and made mathematically explicit the departure quantum mechanics makes from the familiar classical world. This phenomenon was also described by Einstein, Podolsky, and Rosen as ``spooky action at a distance''~\cite{einstein1935can}: although they are not communicating, entangled particles can display behaviors that seem inconceivable and even uncanny. 

Bell inequality violation is not a just topic of fundamental physics or philosophy. A series of works~\cite{brandenburger2016team,brandenburger2015quantum,hasanpour2017quantum,szegedy2020systems,szegedy2023systems} pointed out that \emph{Bell inequality violation has real-world applications.} 
To better elucidate how Bell inequalities can appear in real-world settings, the patents~\cite{szegedy2020systems,szegedy2023systems} propose a new concept called ``coordinating decisions between non-communicating parties'' (CDNP) problems, which we will henceforth refer to more simply as \textbf{tacit coordination} (TC) problems. We also introduce this terminology because we will make some generalizations and slight deviations from the usual theory of Bell inequalities and the equivalent computer science concept of nonlocal games~\cite{cabello2008necessary,cleve2004consequences}. In short, a TC problem arises when there are multiple parties who each make a local observation and a local decision. The aim is to optimize a global ``utility'' of their collective decisions given their collective observations, but \emph{without communicating during the decision-making process}.\footnote{See~\Cref{sec:nonlocal_games} for a more detailed description and see~\Cref{app:tc} for a technical definition. } Note that for consistency, we will use this terminology throughout the paper.
One very natural real-world setting where TC problems can appear is \emph{high-frequency trading} (HFT). In this setting, each party is an HFT server colocated at different stock exchanges. The servers are owned by the same trader. Each server has access to local information, such as stock price fluctuations at their respective exchange. Now, these stock exchanges are spatially separated by distances ranging from tens of kilometers to thousands of kilometers. The speed of light delay thus ranges from hundreds of microseconds to tens of milliseconds. However, modern HFT is conducted on timescales as short as microseconds and in the future may even shorten to nanoseconds~\cite{haldane2010}. No matter how much effort is put into shortening latencies, with these numbers it is physically impossible for different servers to communicate before having to make a trade decision. However, there could be a nontrivial globally optimal pair of trades given a pair of observations to minimize risk or maximize expected returns. See~\Cref{sec:latency_tc} for an example. Thus, HFT naturally gives rise to TC problems. 

Now, we emphasize Bell inequality violation \emph{is irrefutable proof of a quantum advantage}.\footnote{Quantum advantage means achieving a task with quantum resources that is difficult or even impossible to achieve with classical resources. For most of this paper, we will use this term to refer to a particular type of quantum advantage for TC problems. An explicit definition is given in~\Cref{sec:nonlocal_games}. } It does not require complexity-theoretic assumptions such as $\mathrm{BQP} \neq \mathrm{BPP}$ and is actually a straightforward mathematical argument. In fact, Bell inequality violation is so convincing that it can be used to prove a quantum advantage in other settings, such as shallow quantum circuits~\cite{bravyi2018quantum}. Furthermore, to achieve a quantum advantage in a TC problem using real hardware, we do not necessarily need fault-tolerant quantum computing. For example, violating the CHSH inequality~\cite{clauser1969proposed} only requires two physical qubits, along with single-qubit gates and measurements. Indeed, the hardware and fidelities necessary to violate the inequality have already been achieved half a century ago~\cite{freedman1972experimental,fry1976experimental,aspect1981experimental,aspect1982experimental}. Even for Bell inequalities that require high-dimensional quantum systems to achieve the maximum violation, it is sufficient to find (if it exists) a low-dimensional scheme that also violates the inequality, just not maximally, to attain a quantum advantage. Due to this potential of Bell inequality violation to have real-world applications, we attempt to introduce a term for Bell inequality violation that describes an operational concept, akin to ``quantum computing'' or ``quantum communication.'' We will use the term \textbf{quantum telepathy}.\footnote{There is a related term that already exists in the literature, \textbf{pseudo-telepathy}~\cite{brassard2005quantum}. This is a special case of quantum telepathy where the parties can attain the algebraically largest possible value of the Bell expression with a quantum strategy, for example in the GHZ game or the magic square game. }

The main message of this paper is that \emph{TC problems are actually quite prevalent in the real world and that a quantum advantage can be attained with current or near-future technologies}. 
The core insight is that speed of light delays for distances on the order of 10 km is already 10 to 100 $\mu$s, 
which is a relatively long period of time in our modern world satiated with classical processors that have GHz clock speeds.
In particular, we conduct a case study of a concrete HFT scenario that gives rise to a generalization of the CHSH game and assess different possible physical implementations of quantum telepathy schemes.
Our paper will be organized as follows. In~\Cref{sec:nonlocal_games}, we first give an informal definition of TC problems. Then, in~\Cref{sec:latency_tc}, we present the concrete HFT scenario and compute what quantum advantages can be attained. In~\Cref{sec:physical}, we go into the detailed physical implementations of quantum telepathy schemes. In particular, we evaluate their ability to achieve a quantum advantage for the said HFT scenario, considering practical issues such as photon loss, entanglement generation rates, and robustness to noise. We end with a discussion in~\Cref{sec:discussion}, where we consider other settings where TC problems may arise such as distributed computing and computer architecture.~\Cref{app:tc} gives technical definitions and proves facts about TC problems that we use in our analyses. In particular, we give definitions of a TC problem in a more general setting than the one given in~\Cref{sec:nonlocal_games}, involving an arbitrary number of parties, inputs, and outputs. The contents are also presented in a more formal, mathematical style.
Furthermore, although the effects of loss discussed in~\Cref{app:loss} and of noise in~\Cref{app:noisy} are familiar concepts in the literature, to our knowledge, the explicit mathematical definitions and the results we prove for the general multipartite case have not been written down before.~\Cref{app:numerical} presents a numerical optimizer for computing the quantum advantage attainable for general TC problems by explicitly parameterizing a quantum strategy while reducing the number of parameters as much as possible. We are not aware of any previous work that does this systematically.
Finally,~\Cref{app:arch} gives a ``pseudo-example'' of a TC problem that can arise in a computer architecture setting.

\paragraph{Related work} Quantum telepathy is not a new concept. There are multiple papers discussing applications of quantum telepathy~\cite{brassard2005quantum,ashkenazi2022distributed,broadbent2008can,huberman2003quantum}, but none of these point out that the speed of light delay is a fundamental barrier for communication in many real-world settings. Furthermore, many of the examples given in these works are mathematically motivated and are not manifestly relevant for real-world problems. 

After this paper was posted on arXiv, the paper~\cite{brandenburger2016team} and the related patent~\cite{brandenburger2015quantum} were brought to our attention. To our knowledge, these were the first works to propose using quantum telepathy for HFT as well as to give a concrete trading scenario. 
However, the scenario involves two traders that are at a large distance from each other and from the two markets (New York and Shanghai) they want to trade at. This is inconsistent with the standard practice in HFT of using servers co-located in the stock exchange's data center~\cite{nasdaq}. We rectify this with the example in~\cref{sec:latency_tc}.
The patents~\cite{szegedy2020systems,szegedy2023systems} were filed without knowledge of the earlier works. The core problem in all of these previous works mentioning HFT is that the distance scales considered were tens of thousands of kilometers. Distributing entanglement across such a distance would require a quantum satellite~\cite{yin2017satellite}, a setup with very high rates of photon loss, among other practical problems. Furthermore, the previous works also do not go into much detail about physical implementations nor quantitatively evaluate their ability to achieve a quantum advantage. We note the ideas of this current paper were included in a new patent~\cite{ding2024system}. Another related work is an analysis of using quantum computers to solve Prisoner's Dilemma which is used to model HFT given in~\cite{khan2021quantum}. After our initial arXiv post, we also came upon the paper~\cite{hasanpour2017quantum}, which is an application of the CHSH inequality to ad hoc network routing.

\section{Definition of Tacit Coordination (TC) Problems}
\label{sec:nonlocal_games}
We give an informal definition of TC problems. For a technical exposition, see~\Cref{app:tc}. A TC problem involves multiple \textbf{parties} that each make an \textbf{observation} followed by a \textbf{decision} so as to maximize a global \textbf{utility}. In~\Cref{fig:two_party_game}, we give an illustration of a TC problem. For simplicity, we will assume there are two parties. 
\begin{figure}[htp]
    \centering
    \includegraphics[width=0.23\textwidth]{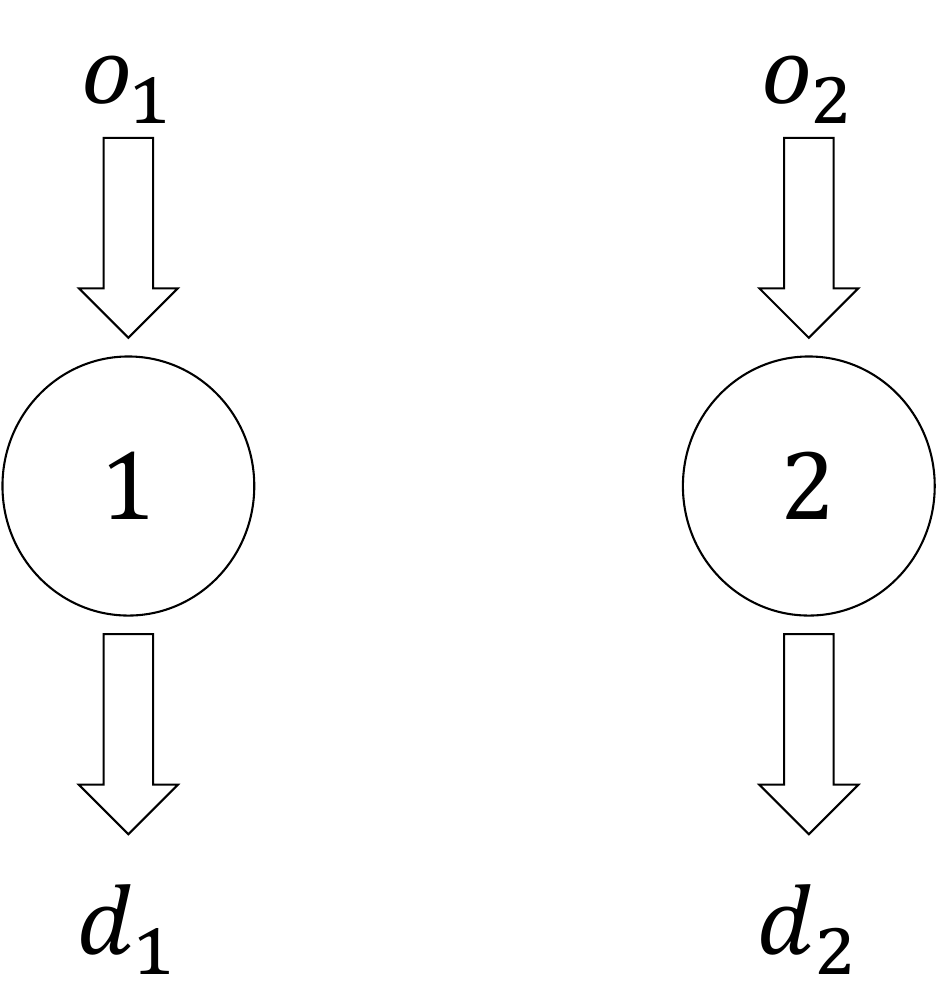}
    \caption{A TC problem with two parties. The first party makes observation $o_1$ and subsequent decision $d_1$ while the second party makes observation $o_2$ and subsequent decision $d_2$. }
    \label{fig:two_party_game}
\end{figure}
A \textbf{utility array} describes what is the utility of a set of decisions given a set of observations. For simplicity, suppose each party can only have two possible observations $\{o,o'\}$ and two possible decisions $\{d,d'\}$. The utility array can be written in matrix form:
\begin{align*}
    \mathcal U =
    \bordermatrix{
    & d,d & d,d'& d',d & d',d' \cr
    o,o & u_{11} & u_{12} & u_{13}& u_{14}\cr
    o,o' & u_{21}& u_{22}& u_{23}& u_{24}\cr
    o',o & u_{31}& u_{32}& u_{33}& u_{34}\cr
    o',o'& u_{41}& u_{42}& u_{43}& u_{44} \cr
    } \in \mathbb{R}^{4\times 4},
\end{align*}
where the rows are labeled by possible pairs of observations and the columns are labeled by possible pairs of decisions. The matrix element is then the utility of that pair of decisions given that pair of observations. The parties are to each make a decision given their observation that maximizes the expected value of the utility, given an \textbf{input distribution}, that is, a probability distribution over the observations. The catch is that the parties cannot communicate during this decision-making process. The reason for this could be various: 
\begin{itemize}
    \item The parties may be spatially separated and have to make decisions faster than the communication latency or even the speed of light delay.
    \item The parties may have lost their communication link due to a technical fault or other problem.
\end{itemize}
The former case can be quite prevalent in modern settings: take the parties to be computers that perform tasks with GHz clock speeds. As long as the spatial separation between the computers is more than 30 cm, communication is physically impossible.

Without communication, how do the parties maximize their utilities? If they're smart, the parties would have agreed on a \textbf{strategy} beforehand given knowledge of the utility array. The implementation of the strategy leads in general to a probability distribution of decisions conditioned on observations, which can also be expressed as a matrix:
\begin{align*}
    p(d\vert o) = 
    \bordermatrix{
    & d,d & d,d'& d',d & d',d' \cr
    o,o & p_{11} & p_{12} & p_{13}& p_{14}\cr
    o,o' & p_{21}& p_{22}& p_{23}& p_{24}\cr
    o',o & p_{31}& p_{32}& p_{33}& p_{34}\cr
    o',o'& p_{41}& p_{42}& p_{43}& p_{44} \cr
    } \in \mathbb{R}_{\geq 0}^{4\times 4}.
\end{align*}
Because these are probabilities, this is a right stochastic matrix, i.e.\ the rows sum to 1. We will call this conditional probability distribution the \textbf{behavior} of the parties. The expected utility is then given by multiplying these probabilities with the entries of the utility array and some input distribution $p_O(o)$.

In a \textbf{deterministic strategy}, each individual party simply makes a decision based on their local observation:
\begin{enumerate}
    \item Before starting, the parties prepare respective functions $f_1, f_2 : \{o,o'\} \to \{d,d'\}$ based on the utility matrix.
    \item After starting, the parties compute their respective functions by taking their respective local observation as input and use the corresponding output as their decision.
\end{enumerate}
In general, a \textbf{classical strategy}, also known as a local hidden variable theory, can involve randomness: both local randomness for each party and shared randomness establishing correlations between different parties. However, the expected utility of any classical strategy is always a convex combination of the expected utilities of deterministic strategies. Hence, to find the maximum possible expected utility function of all classical strategies, it is sufficient to only consider deterministic strategies. 

Now, the phenomenon of Bell inequality violation in the language of TC problems is this: for certain TC problems, \emph{we can devise strategies using quantum mechanics that achieve higher expected utilities than that of any strategy we can devise classically}. Such \textbf{quantum strategies} are of the following form:
\begin{enumerate}
    \item Before starting, the parties share entangled particles. The precise state of the entangled particles is specifically designed for the utility matrix.
    \item After starting, the parties apply a quantum measurement to their particle, the type of measurement being based on their observation.
    \item The parties make decisions based on the measurement result.
\end{enumerate}
Note here that no communication is conducted at any step. However, the parties need to be able to share entangled particles. 

In general, we will be interested in the \textbf{quantum value} and \textbf{classical value}, which are defined as the expected utilities optimized over quantum and classical strategies, respectively. We will refer to the difference between the quantum and classical value as the \textbf{gap}, or \textbf{quantum advantage}. For detailed technical definitions of the concepts in this section, readers can refer to~\Cref{app:tc}. 


\section{High Frequency Trading: A Case Study}
\label{sec:latency_tc}
We currently live in a world where the speed of light delay between different parties can be appreciable compared to the timescales in which decisions need to be made. Classical processors have GHz clock speeds, while light in vacuum can only travel 30 cm in 1 ns. Hence, TC problems can easily arise in real-world scenarios where communication is not possible due to latency. In general, we give the following criteria for a \emph{latency TC problem}:
\begin{enumerate}
    \item Multiple parties are involved that each make local observations and decisions.
    \item There is a global utility associated with a set of decisions given a set of observations.
    \item The parties do not have enough time to communicate with each other before having to make a decision.
\end{enumerate}
The last criterion could be due to fundamental physics constraints such as speed of light (in vacuum) delay or other factors such as having only an optical fiber connection through which light has to travel farther than the displacement between the parties and travels more slowly than in vacuum. We mention that there are other kinds of TC problems where the parties cannot communicate for reasons other than latency, for example the setting in~\cite{hasanpour2017quantum}.

In this paper we will conduct a detailed case study involving a high frequency trading scenario where quantum entanglement can provide an advantage. We will attempt to make the scenario as detailed as possible, but some simplifications will be necessary due to lack of real market data and to keep things conceptually straightforward. The TC problem we present should therefore be interpreted as a toy model. In general, to obtain concrete quantitative predictions of quantum advantage in practical settings, real historical HFT data is needed.  We hope that this HFT scenario and the quantum advantage that can be attained will inspire other interesting examples and attract further research in this area.

\begin{figure}[h!]
    \centering
    \includegraphics[width=0.8\textwidth]{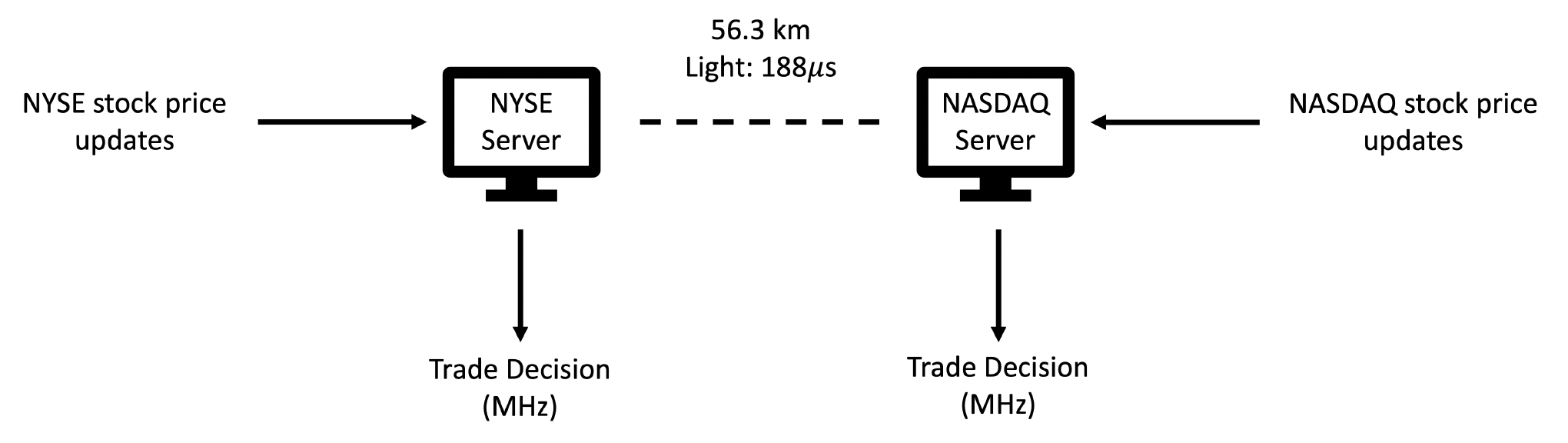}
    \caption{HFT setup between NYSE and NASDAQ exchanges. }
    \label{fig:nyse_nasdaq}
\end{figure}
Consider the HFT setup in~\Cref{fig:nyse_nasdaq}. Suppose a market maker named Zhuo operates in NYSE and NASDAQ, where he respectively trades correlated stocks X and Y. In layman's terms, Zhuo is a supplier: he provides trades (both buying and selling) for a stock X listed in NYSE and a stock Y listed in NASDAQ near current market prices so that trading can proceed. Zhuo has two colocated servers, one at each exchange. Each server receives market data from its local exchange (observations) and modifies its HFT algorithm accordingly (decision). Thus the first criterion for latency TC problems is satisfied. Since the stocks are correlated, the pair of decisions given the observations will determine how favorable Zhuo's position is given the market information at both exchanges (utility). Thus the second criterion is fulfilled. Furthermore, the data centers of these two stock exchanges are around 35 miles, or 56.3 km, apart~\cite{nyse_nasdaq} which means even a line of sight connection in vacuum would entail a speed of light delay of at least about 188 $\mu$s. However, HFT is sometimes conducted on timescales as short as microseconds~\cite{haldane2010}, which means in general the two colocated servers do not have enough time to communicate before they have to make a trade decision. Thus the third criterion is satisfied.

We now provide the full details of this particular TC problem. As Zhuo is a market maker, each server conventionally issues orders in pairs where one order is a bid (buy) and the other an ask (sell). However, the pair of orders is issued sequentially,\footnote{In general this will depend on the order types available at the exchanges. For example, some order types send both the ask and bid orders in one message, but even then the server has to sequentially decide on their respective price values. A possible exception to this is if the order type allows for specifying a spread. } and the first order is in general more likely to be filled. This is due to the first order arriving earlier and having being made with slightly more up-to-date information. Thus, if a bid order is always issued first in the pair, then there will be a slight bias toward successful bid orders over ask orders.\footnote{The significance of the bias due to which order is issued first may depend on the specific trading setting. Another way to increase the bias is to create a very slight price differential (that is, increase price for bids, decrease for asks). Note in this case we would need to evaluate the tradeoff between the loss in revenue and the reduction in risk. } For this TC problem, which order to issue first is the decision to be made for each server. 

Now, assume the price movements of stock X and Y are usually positively correlated. In this case, Zhuo should let the servers make opposite decisions in order to \textbf{hedge}, that is, to reduce risk, commonly measured by the variance of the portfolio value. That is, if one server decides to ask first, the other should bid first. This way Zhuo's portfolio will effectively have a nice tradeoff between two positively correlated stocks, thereby reducing the variance of his position. However, each server will also look for a technical indicator that stock X and Y are now negatively correlated. The precise nature of such an indicator would need to be empirically determined in real-world settings. For this TC problem, this indicator is the observation that each server makes. 
In this case, Zhuo should let the two servers make the same decision, again for hedging.
Now, both exchanges are looking for this indicator. If both see the indicator, then Zhuo should definitely let the servers make the same decision. If only one sees an indicator, there can be slight preference for making the same decision over opposite decisions or vice versa.

At a high level, we see that in this HFT scenario the utility (favorability of Zhuo's position) is only dependent on whether the servers make the same decision or different decisions. Hence, it is an instance of an XOR game~\cite{cleve2004consequences}. Here, the XOR of the decisions of the two servers has a natural trading interpretation as hedging for positively or negatively correlated stocks. This insight could be generalized to other HFT scenarios where what matters is the relationship between the trade decisions of the different servers and not the individual decisions themselves. Such scenarios could be mapped to a unique game~\cite{rao2008parallel} or a $d$-to-$d$ game~\cite{khot2002power}. We leave this to future work.
We next wish to explicitly write out the utility array. Assuming that only one indicator of potential negative correlation is not sufficient to prefer making the same decision over opposite decisions, the utility array is actually that of the opposite winning conditions of the CHSH game:
\begin{align*}
    \mathcal U =
    \bordermatrix{
    & A,A & A,B& B,A & B,B \cr
    N,N & 0 & 1 & 1 & 0 \cr
    N,I & 0 & 1 & 1 & 0 \cr
    I,N & 0 & 1 & 1 & 0 \cr
    I,I & 1 & 0 & 0 & 1
    },
\end{align*}
where the observations $N,I$ correspond to ``No indicator'' versus ``Indicator'' of negative correlation, and the decisions $A,B$ correspond to ``Ask-first'' or ``Bid-first'' order issuing, respectively. We will choose the order of parties as the NYSE server followed by the NASDAQ server. \emph{This establishes that a TC problem such as the CHSH game can manifest in real-world scenarios}.\footnote{For simplicity, we assume that the figure of merit is indeed the expected utility. This is sensible enough for a toy model. Evaluating different figures of merit in detail would require working with real market data. } Now, to quantitatively evaluate a quantum advantage, we also need to consider the input distribution. Suppose the input distribution is independent Bernoulli distributed with parameter $p$ for both servers, where with probability $p$ the indicator is observed (I). We analytically solve this TC problem in~\Cref{app:qadv} where we find that a quantum advantage exists iff $p \in (1 - \frac{1}{\sqrt{2}}, \frac{1}{\sqrt{2}})$.

We can also consider a natural extension where one server observing the indicator leads to a partial preference of making the same over opposite decisions. In this case we can introduce a parameter $\beta \in [0,1]$ and the utility array is given by
\begin{align}
\label{eq:hon_utility}
    \mathcal U =
    \bordermatrix{
    & A,A & A,B& B,A & B,B \cr
    N,N & 0 & 1 & 1 & 0 \cr
    N,I & \beta & 1-\beta & 1-\beta & \beta \cr
    I,N & \beta & 1-\beta & 1-\beta & \beta \cr
    I,I & 1 & 0 & 0 & 1
    }.
\end{align}
Assuming again an independent Bernoulli distribution for the input distribution with parameter $p$, we can numerically evaluate the quantum advantage. The result is shown in~\Cref{fig:anti_toxic} for different values of $p, \beta$, as well as different cross-sections. These and other numerical results are obtained via a combination of the standard semidefinite programming algorithm for XOR games~\cite{cleve2004consequences} and a general-purpose numerical optimizer applicable to any TC problem outlined in~\Cref{app:numerical}.
\begin{figure}
    \centering
    \begin{subfigure}[b]{0.85\textwidth}
    \centering
    \includegraphics[width = 0.95\textwidth, right]{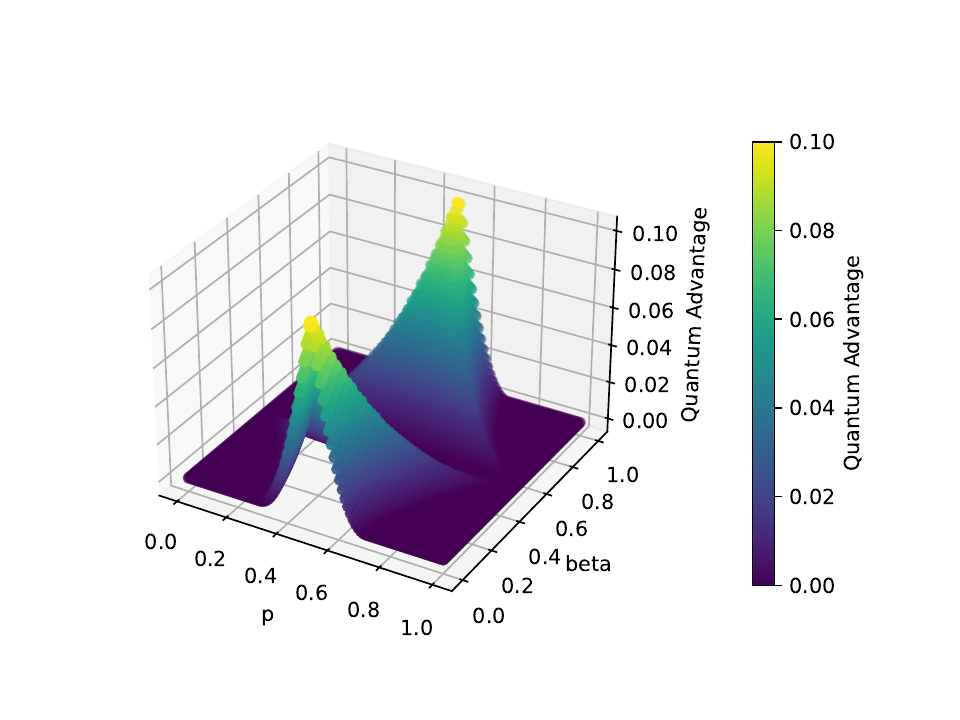}
    \caption{}
    \end{subfigure}
    \\
    \begin{subfigure}[b]{0.48\textwidth}
    \centering
    \includegraphics[width = 0.90\textwidth, left]{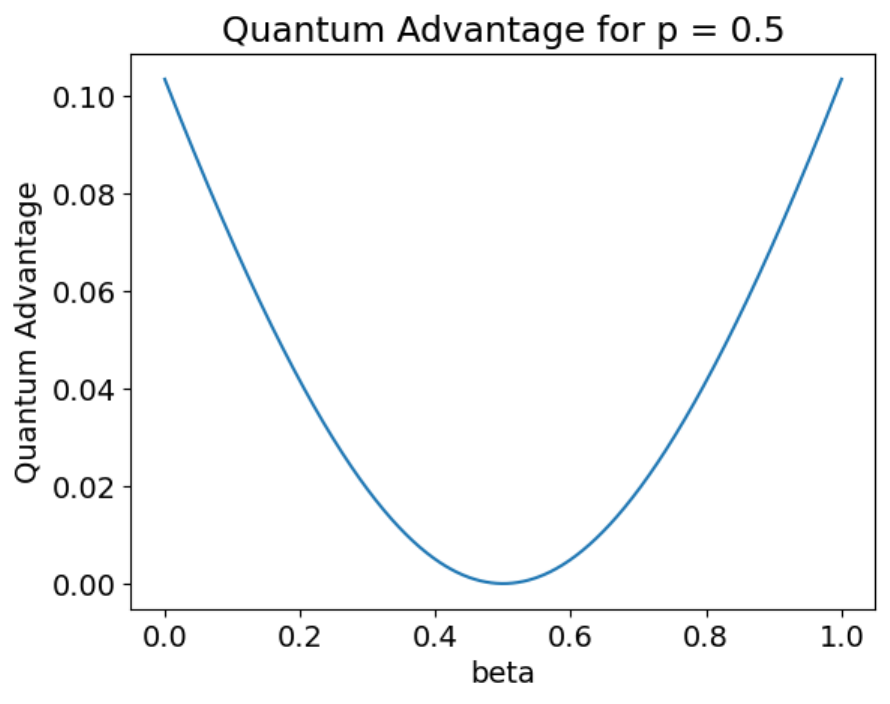}
    \caption{}
    \end{subfigure}
    \quad
    \begin{subfigure}[b]{0.48\textwidth}
    \centering
    \includegraphics[width = 0.9 \textwidth, left]{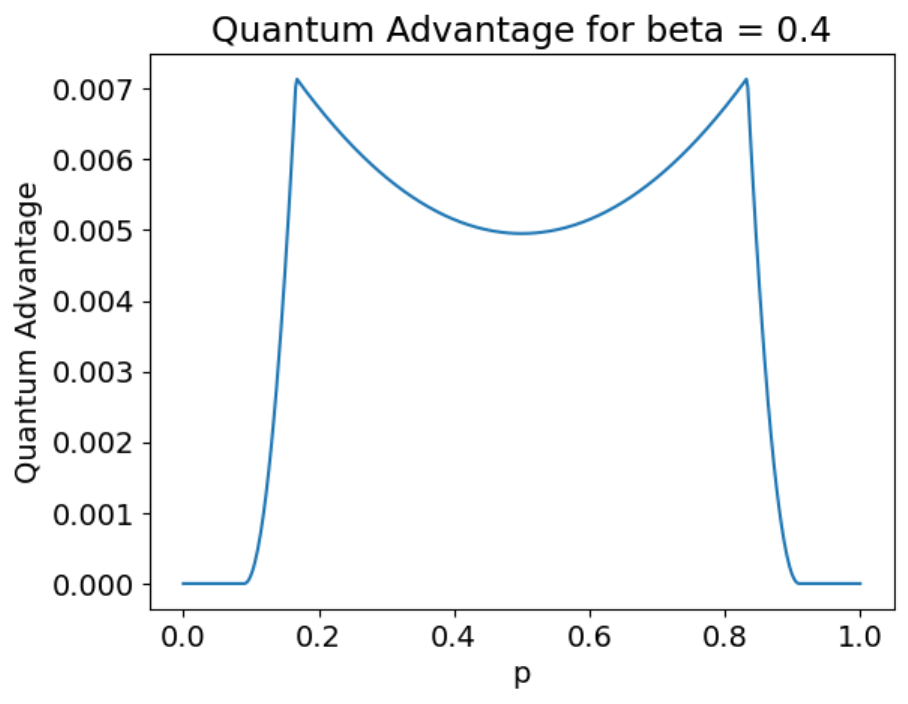}
    \caption{}
    \end{subfigure}
    \caption{(a) Quantum advantage for the hedging problem as a function of $p$ and $\beta$. 
    (b) Quantum advantage as a function of $\beta$ for $p = 0.5$. (c) Quantum advantage as a function of $p$ for $\beta = 0.4$. }
    \label{fig:anti_toxic}
\end{figure}
We see that when $\beta=0$, we obtain the previous results where there is a quantum advantage for $p \in (1-\frac{1}{\sqrt{2}}, \frac{1}{\sqrt{2}})$. Interestingly, as $\beta$ increases to 0.5, the range of $p$ for which there is a quantum advantage increases, although the value of the quantum advantage decreases. At $\beta =0.5$ there is no quantum advantage. For $\beta > 0.5$ we observe a mirror image by taking $\beta \mapsto 1-\beta$. This is because the corresponding utility array can be obtained by appropriately relabeling the observations and decisions, which will lead to the same gap as per~\Cref{lem:perm} in~\Cref{app:tc}. Hence, we see a family of real-world scenarios where quantum entanglement can provide an indubitable advantage. For convenience of reference, we will refer to this TC problem with the utility array given in~\Cref{eq:hon_utility} and with an independent Bernoulli distributed input with parameter $p$ as the \textbf{hedging problem}. The hedging problem can be easily generalized to multiple parties (more than 2) by considering more than 2 stocks, each at different exchanges, whose pairwise correlations can be positive or negative. Then, the decisions of the each corresponding colocated server have to match or or not match according to the observed correlations. The number of observations can also be increased by considering different indicators or other relevant local information. We leave further investigation of this generalized problem to future work. 

\section{Physical Implementation}
\label{sec:physical}
In this section we explore the concrete details of physically implementing a quantum strategy for a latency TC problem. We will in particular take the hedging problem as a case study. \emph{Our results suggest that there is a tantalizing possibility for quantum entanglement to bring practical gains to real-life TC problems such as those encountered in HFT with current or near-future quantum technologies}. 

In general, we will classify possible physical implementations into two types: direct photonic connection (Type I) and quantum memory (Type II). 

\subsection{Direct Photonic Connection}
\label{subsec:typeI}
In a Type I implementation, the two parties directly receive entangled photons from an intermediate source node. This is exhibited pictorially in~\Cref{fig:direct_photonic}.
\begin{figure}
    \centering
    \includegraphics[width=0.85\textwidth]{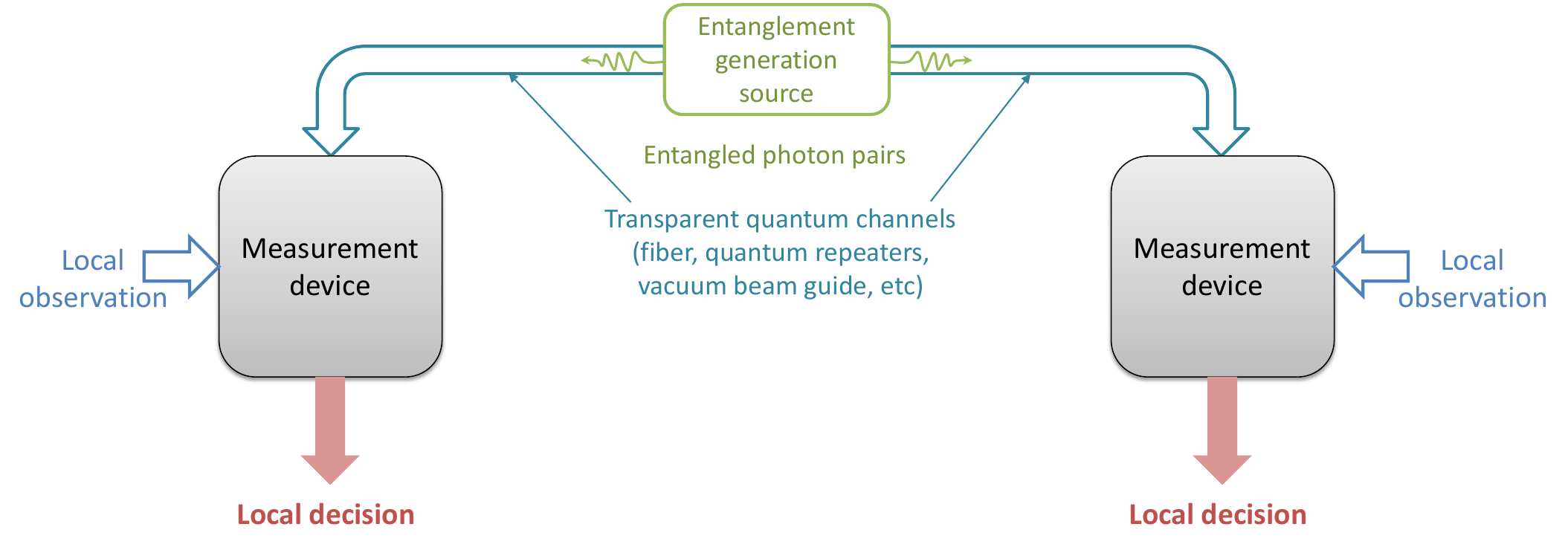}
    \caption{A physical implementation of a quantum strategy using a direct photonic connection (Type I). The two parties each have photon measurement devices that receive entangled photon pairs from an intermediate source. The measurement bases are determined by local observations and the local decisions are determined by the measurement outcomes. }
    \label{fig:direct_photonic}
\end{figure}

We will focus on two key specs for this setup to successfully implement a quantum strategy for a latency TC problem:
\begin{enumerate}
    \item Entangled photon generation rate
    \item Photon loss.
\end{enumerate}
A latency TC problem arises due to the extremely short time window in which decisions have to be made. Furthermore, for simplicity \emph{we will assume the same TC problem repeats itself}. This is sensible for many HFT scenarios: for instance, market makers such as Zhuo would be solving the same TC problem over and over again because their method of taking a favorable position or getting returns is always the same. The hedging problem is one such example.
For cases where the utility matrix changes with time, the parties and entangled photon source could synchronize changing the quantum strategy over time.
Nevertheless, this leads to the first requirement for the physical setup: the entangled photons are generated at a sufficiently high frequency so that a quantum strategy can be executed within the time window. In HFT, these time windows are on the order of microseconds, which means entangled photons need to be generated at MHz rates. This is already possible with current technologies using a continuous-wave laser and a nonlinear crystal to produce photon pairs maximally entangled in polarization via parametric down-conversion. These pairs can be produced at MHz rates (see for instance~\cite{yin2017satellite}). It is possible to accommodate even shorter time windows by using pumped pulses, thereby achieving generation rates up to $50$ GHz~\cite{wakui2020ultra}. We can also simply increase the number of entangled photon sources. These techniques may be necessary due to photon loss. However, note that it is \emph{not strictly necessary} that the entanglement generation rate be sufficiently high to attain a quantum advantage overall. The parties can simply execute the optimal deterministic strategy during times when entanglement is unavailable, and execute the optimal quantum strategy when it is.\footnote{Note that this would require the photons to be emitted at predetermined time intervals, that is, the source is deterministic. This was demonstrated in~\cite{muller2014demand}. } 

The second spec is photon loss. This is a key difficulty for a Type I implementation that could even preclude a quantum advantage~\cite{mermin1986new}. Because of photon loss, not only are photons sometimes unavailable, but  each party also does not know whether the other party received their photon.\footnote{There is also the problem of dark counts where one party erroneously detects a photon, which would constitute a different type of error. } If they both knew that one of the photons was lost, they could revert to the best classical strategy and thereby effectively achieve a nontrivial convex combination of the classical and quantum values. This would retain a quantum advantage. However, because they each do not know whether a photon was lost for the other party, they only have probabilistic control over what strategy they implement.  We explicitly define in~\Cref{app:loss} how exactly loss affects the resulting behavior of a quantum strategy. In general, the resulting behavior will be a convex combination of behaviors where the weight is the probability of photon loss and the parties that experience loss fall back to a deterministic strategy. As the probability of loss increases, the quantum advantage may eventually be lost.

Given a TC problem, assume each party successfully obtains his photon with probability $\eta$, which we call the \textbf{efficiency}. We can then define a \textbf{threshold efficiency} $\eta^*$ as the minimal value such that for $\eta > \eta^*$, we can attain a quantum advantage. If the TC problem had no quantum advantage even without loss, we define $\eta^* := 1$. Consider for example the hedging problem. We use the numerical optimizer outlined in~\Cref{app:numerical} and a modified NPA algorithm~\cite{navascues2008convergent,tba} to compute $\eta^*$ for various values of $p$ and $\beta$. We do this via a binary search on the highest value of $\eta$ where the classical value and the quantum value in the presence of loss are equal. Since the NPA algorithm yields upper bounds on the quantum value, this leads to lower bounds on $\eta^*$. Our numerical optimizer technically yields a lower bound on the quantum value, leading to an upper bound on $\eta^*$. We checked that there was reasonable agreement between the results from the modified NPA algorithm and those of the numerical optimizer for a subset of $p$ and $\beta$ values.
The results are shown in~\Cref{fig:eta_star}. 
\begin{figure}
        \centering
    \begin{subfigure}[b]{0.85\textwidth}
    \centering
    \includegraphics[width = 0.95\textwidth, right]{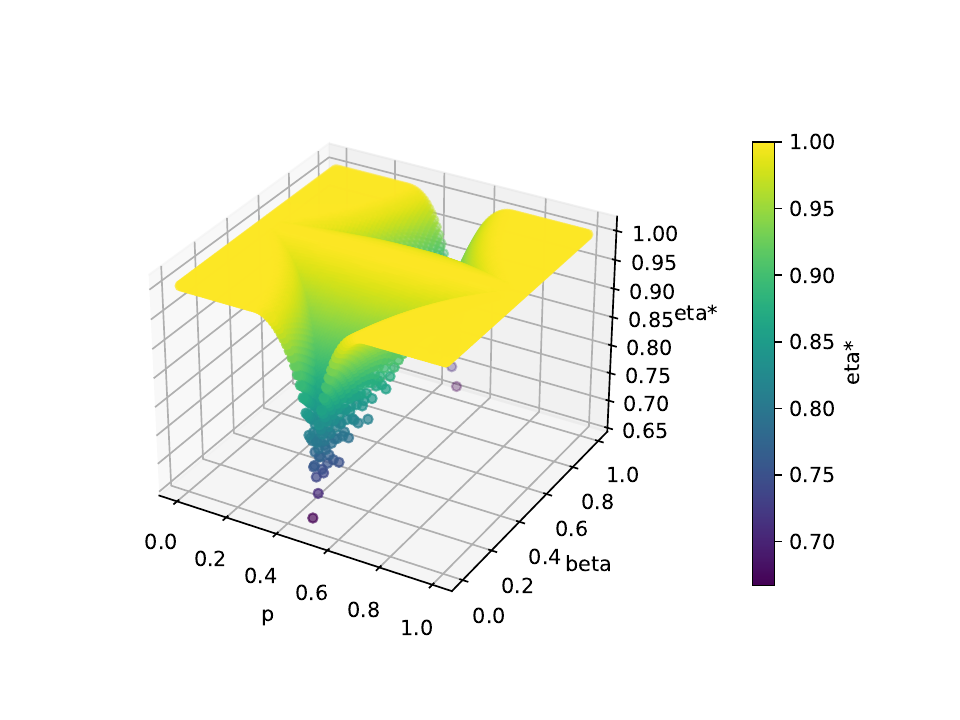}
    \caption{}
    \end{subfigure}
    \\
    \begin{subfigure}[b]{0.48\textwidth}
    \centering
    \includegraphics[width = 0.90\textwidth, left]{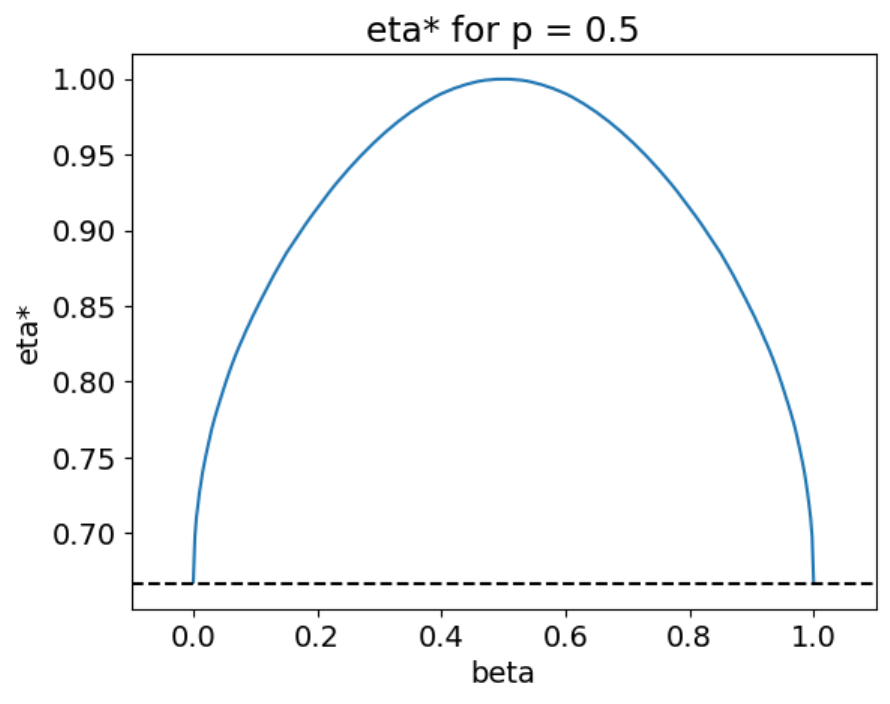}
    \caption{}
    \end{subfigure}
    \quad
    \begin{subfigure}[b]{0.48\textwidth}
    \centering
    \includegraphics[width = 0.9 \textwidth, left]{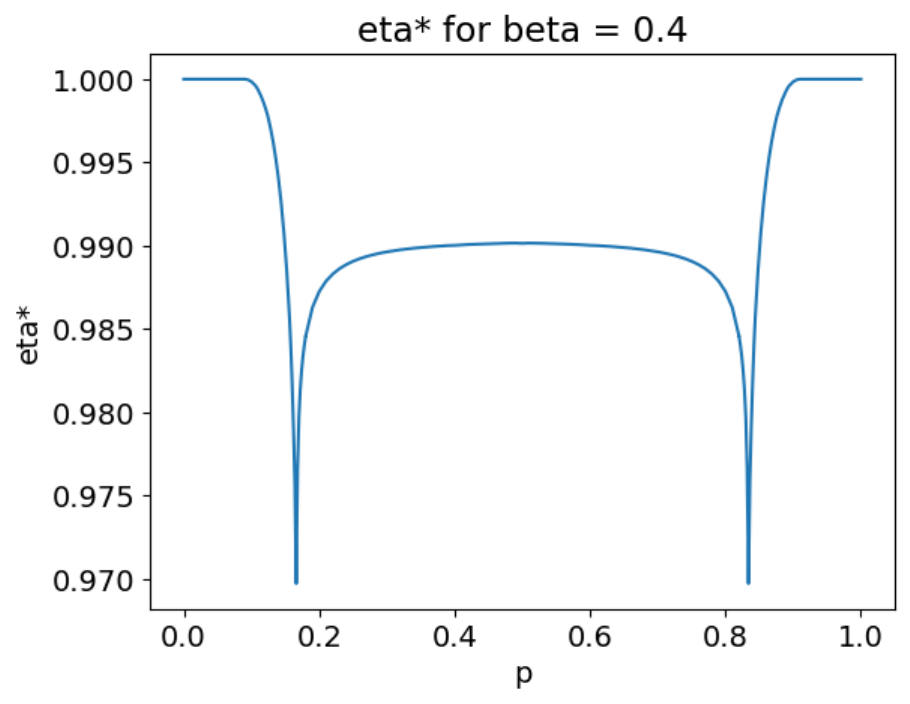}
    \caption{}
    \end{subfigure}
    \caption{(a) $\eta^*$ for the hedging problem as a function of $p$ and $\beta$ computed using the modified NPA algorithm. We only used a $101\times 101$ grid of points because of the long runtimes.
    (b) $\eta^*$ as a function of $\beta$ when $p = 0.5$ computed using our numerical optimizer. The lower bound of $\frac 2 3$ is plotted as a dashed line. (c) $\eta^*$ as a function of $p$ when $\beta =0.4$. }
    \label{fig:eta_star}
\end{figure}
In general, we expect that $1-\eta^*$ will behave similarly to the quantum advantage, since a larger advantage should yield a larger permissible loss rate. At a qualitative level, in~\Cref{fig:eta_star}(a), computed using our modified NPA algorithm, we see roughly the same pattern as in~\Cref{fig:anti_toxic}(a): $1-\eta^*$ values are larger for $p$ close to $0.5$, while $1-\eta^*$ values strictly greater than 0 is achievable for a wider range of $p$ for $\beta$ close to 0.5 (but not exactly). Note that we only use a $101\times 101$ grid of points (increments of 0.01) in~\Cref{fig:eta_star}(a) because computing $\eta^*$ involves computing quantum values at many different values of loss, leading to long runtimes.\footnote{We plotted~\Cref{fig:eta_star}(a) with the NPA algorithm because it is much faster than our numerical optimizer, which uses brute force search to guarantee the output is a global optimum. Even with this faster algorithm, the grid size that we can compute in a reasonable time is limited. } Cross-sections are plotted in~\Cref{fig:eta_star}(b) and (c), which were computed using our numerical optimizer.

It is known that for TC problems with two parties where each party has two possible observations, the threshold efficiency is at least $\frac 2 3$~\cite{massar2003violation}. This is also apparent in~\Cref{fig:eta_star}. If we use optical fiber to distribute the photons, this efficiency is already difficult to achieve for the distances involved for HFT. In general, $\eta$ exponentially decreases with the length of the channel $l$:
\begin{align*}
    \eta = 10^{-0.1 \alpha l},
\end{align*}
where $\alpha$ is the \textbf{attenuation rate} of the physical medium used for the channel. Now, the lowest attenuation rate achievable by optical fiber for the optimal wavelength of 1550 nm is 0.17 dB/km~\cite{cisco}. However, we compute for NYSE to NASDAQ
\begin{align*}
    \eta = 10^{-0.1 \cdot 0.17 \text{ dB/km} \times 56.3/2 \text{ km}} \approx 0.33 < \frac 2 3.
\end{align*}
Note that this is the distance from the source to each party is therefore half the distance between the two exchanges. To achieve an efficiency of at least $\frac 2 3$, the maximum length $l$ allowed is around 10.4 km. Furthermore, the efficiency computed should also include detector efficiency, inter-component coupling efficiency, and other possible sources of photon loss. This places an even stronger bound on the distance. In fact, we see $\eta^*$ values considerably higher than $\frac 2 3$ in~\Cref{fig:eta_star}, which shows that using optical fiber in a type I physical implementation is insufficient for the hedging problem when the two servers are separated by distances larger than $10.4$ km, such as in the case of NYSE to NASDAQ. 

There are other physical media to consider, however. The authors of~\cite{huang2023vacuum} propose using vacuum beam guides to distribute entangled photons. An attenuation rate as low as $5 \times 10^{-5}$ dB/km can be achieved according to numerical simulations. In this case, an efficiency of $\frac 2 3$ is achievable for distances less than about 35,000 km, which is about the circumference of the Earth. Thus this would allow for quantum strategies to be executed at \emph{continental distance scales}, for which speed of light delays can be significant. For example, the geodesic distance between the NYSE data center and the HKEX data centre is about $12.9 \times 10^3$ km, which is a delay of about 43.1 ms. This is very long compared to the time scale of HFT. For the case of NYSE and NASDAQ, using vacuum beam guides we can achieve $1-\eta$ of about $3.24 \times 10^{-4}$, which is sufficient for the majority of the points in~\Cref{fig:eta_star}. Note that optical fiber can still be useful for TC problems with shorter distances, or with a higher number of parties~\cite{larsson2001strict} or possible observations~\cite{massar2002nonlocality}.

For demonstration purposes, we next go into the fine details of strategies for the hedging problem, setting $p = 0.3$, $\beta =0.3$ for concreteness. In this case the threshold efficiency is computed to be $\eta^* \approx 0.941$. So that we attain a noticeable quantum advantage, we set $\eta = 0.95$. 
We run the numerical optimizer and find that the classical value is
\begin{align*}
    c^* = 0.79,
\end{align*}
where an optimal deterministic strategy is for the NYSE server to always bid first and the NASDAQ server to ask first when it sees no indicator and to bid first when it sees an indicator. The highest expected utility for a quantum strategy with efficiency $\eta = 0.95$ is
\begin{align*}
    q^*(\eta = 0.95) \approx 0.792.
\end{align*}
We prove in~\Cref{app:loss} that when there are only two possible observations and decisions for each party, qubit systems are sufficient for achieving the quantum value, even in the presence of loss. In~\Cref{app:numerical}, we show that we can parameterize each measurement basis with only one parameter. Define the states
\begin{align*}
    \vert \psi(\theta)\rangle & := \cos \theta \vert 0 \rangle  - \sin \theta \vert 1\rangle\\
    \vert \psi^\perp(\theta)\rangle & := \sin \theta \vert 0 \rangle  + \cos \theta \vert 1\rangle.
\end{align*}
One optimal quantum strategy uses the following measurement operators. When neither server sees an indicator, both servers measure in the computational basis\footnote{Since we only care about the largest eigenvalue of the Bell operator (see~\Cref{app:tc}), there is a local unitary degree of freedom and therefore we can without loss of generality always assume the measurement for the first observation is in the computational basis. }
\begin{align*}
    \{ \vert 0 \rangle\langle 0\vert, \vert 1 \rangle \langle 1 \vert \}.
\end{align*}
When either server sees an indicator, it instead uses the measurement 
\begin{align*}
    \{\vert \psi(-0.590) \rangle \langle \psi(-0.590) \vert, \vert \psi^\perp(-0.590) \rangle \langle \psi^\perp(-0.590) \vert\},
\end{align*}
where the angle is in radians and three significant figures are kept. The shared entangled state is
\begin{align*}
    \vert \Phi \rangle = 0.0401 \vert 00 \rangle - 0.902 \vert 01 \rangle - 0.428 \vert 10 \rangle - 0.0401 \vert 11\rangle,
\end{align*}
keeping three significant figures for the coefficients. 
To realize this quantum physically, we perform a Schmidt decomposition:
\begin{align}
\label{eq:schmidt}
    \vert \Phi \rangle = 0.903 \vert u_0 v_0 \rangle + 0.429 \vert u_1 v_1 \rangle,
\end{align}
where
\begin{align*}
    \vert u_0 \rangle := 0.9995 \vert 0 \rangle + 0.0301 \vert 1 \rangle, \vert u_1 \rangle := 0.0301 \vert 0 \rangle - 0.9995 \vert 1\rangle
\end{align*}
and
\begin{align*}
    \vert v_0 \rangle := 0.0301 \vert 0 \rangle + 0.9995 \vert 1 \rangle, \vert v_1 \rangle := -0.9995 \vert 0\rangle + 0.0301 \vert 1 \rangle.
\end{align*}
The state in~\Cref{eq:schmidt} can for example be realized by letting $\vert u_i \rangle, \vert v_i\rangle$ be the vertical-horizontal polarization basis and using the technique of~\cite{giustina2013bell}. The single-qubit rotations needed before the measurement to switch back to the computational basis can be realized using linear optics. Lastly, the optimal fallback deterministic strategy is for the NYSE server to always ask first and the NASDAQ server to always bid first. 



\subsection{Using Quantum Memory}
The main shortcoming of Type I implementations is that the magnitude of photon loss in optical fiber, the predominant physical medium for photonic communication in industrial applications, is too high for many TC problems. One solution for this is for the parties to each have a quantum memory, which could be a simple (few qubits) quantum computer with a long coherence time. The memories are entangled via photons in a heralded entanglement scheme~\cite{duan2001long}. This constitutes a Type II implementation. The setup is shown in~\Cref{fig:quantum_memory}.
 \begin{figure}
     \centering
     \includegraphics[width=0.85\textwidth]{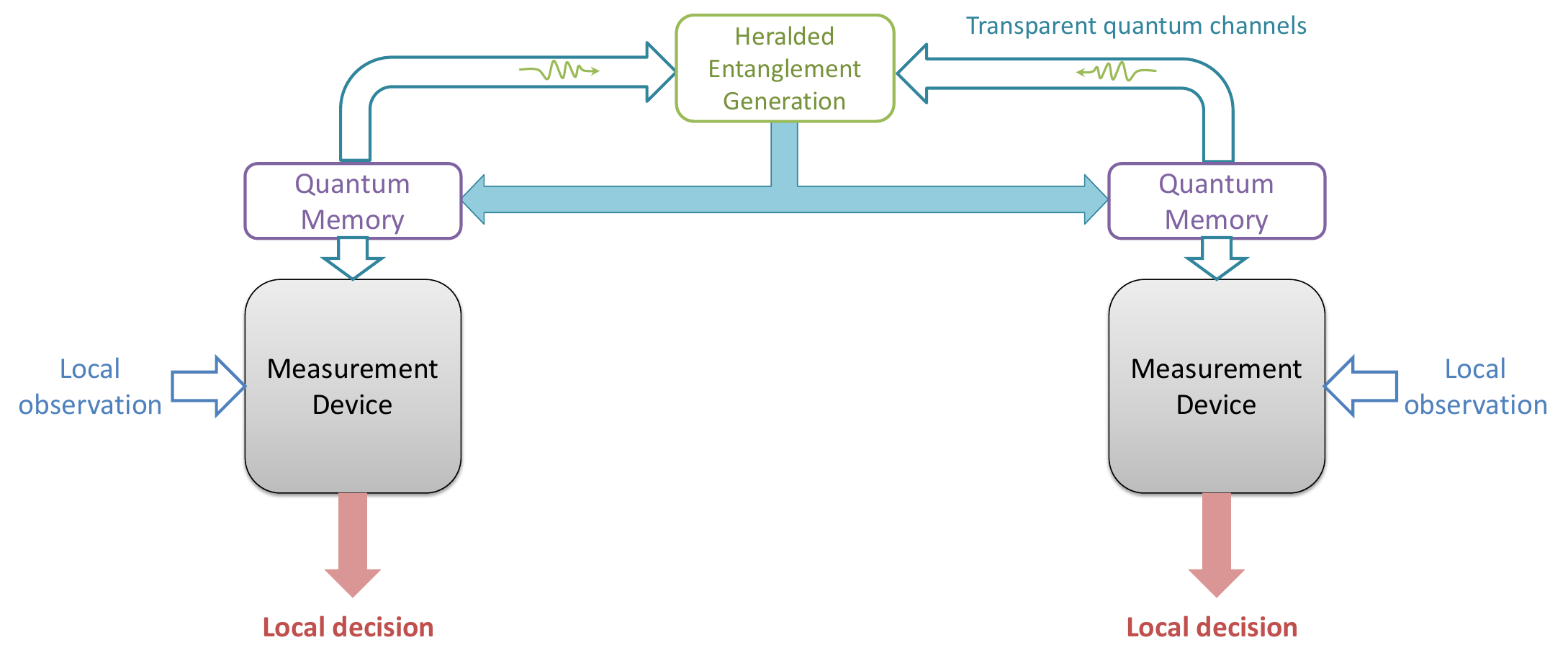}
     \caption{A physical implementation of a quantum strategy via quantum memories (Type II). Each party has a quantum memory, which are entangled via photons in a heralded entanglement scheme. The solid blue arrow indicates the heralding signal. The measurement bases are determined by local observations and the local decisions are determined by the measurement outcomes. }
     \label{fig:quantum_memory}
\end{figure}
Because the entanglement is heralded, we can avoid the problem in Type I implementations where the parties do not know if entanglement was successfully distributed. Furthermore, by using quantum memories we can realize more complicated and higher dimensional entangled states that may be difficult to realize with photons. 

The main specs we need to consider for Type II implementations are
\begin{enumerate}
    \item Effective entanglement generation rate $r_e$
    \item Fidelity of quantum operations $F$.
\end{enumerate}
The effective entanglement generation rate is determined by the following parameters:
\begin{itemize}
    \item Entanglement attempt time $t_a$
    \item Heralded entanglement success probability $p_s$
    \item Quantum memories multiplicity $M$
\end{itemize}
Here, $t_a$ is the time needed to generate entanglement for each quantum memory. This includes the time for the physical operation to create memory-photon entanglement, the time for the photons to travel to the intermediate node, and the time for the heralded entanglement information to be received. Usually $t_a$ is dominated by the traversal times. Next, $p_s$ includes the probability of successfully creating memory-photon entanglement, the probability of the photons reaching the intermediate node, and the probability of correctly projecting the photons into the desired state. Lastly, $M$ is the number of quantum memories available to each party. This increases the effective entanglement generation rate by a multiplicative factor. In summary, 
\begin{align*}
    r_e = \frac{M p_s}{t_a}.
\end{align*}
We would like a Type II implementation to have an effective generation rate sufficiently high for the TC problem's demand for entanglement. Again, like Type I implementations, this isn't necessary to get an overall quantum advantage. We simply implement the optimal deterministic strategy when entanglement is unavailable. Doing this is simple for Type II implementations since the entanglement is heralded.

We compute the effective entanglement generation rate for HFT between NYSE and NASDAQ as a function of $M$. Assuming $t_a$ is dominated by photon traversal time, we split it into two terms. The entangled photon traversal we will assume is through fiber with velocity $v_f$, while to minimize time for heralding we can use free space transmission with velocity $v_s$.\footnote{This is consistent with the use of microwave towers in HFT~\cite{arstechnica}. } Then, 
\begin{align*}
    t_a = \frac{56.3}{2} \text{ km} \div v_f + \frac{56.3}{2} \text{ km} \div v_s \approx 230 \, \mu\text{s},
\end{align*}
where we take 
\begin{align*}
    v_f = \frac{2}{3} \cdot 3 \times 10^8 \text{ m/s} = 2 \times 10^8 \text{ m/s},
\end{align*}
assuming an index of refraction of $3/2$, and
\begin{align*}
    v_s = 3 \times 10^8 \text{ m/s}.
\end{align*}
Next, we assume $p_s$ is dominated by photon loss, the probability of successful photon state projection which is usually $p_p \approx \frac 1 2$~\cite{calsamiglia2001maximum}, the photon collection efficiency $p_c$ during entanglement swapping, and the detector efficiency $p_d$. Thus, using the optical fiber attenuation rate of 0.17 dB/km,
\begin{align*}
    p_s = p_p p_c p_d \cdot (10^{-0.1 \cdot 0.17 \text{ dB/km} \cdot 56.3/2 \text{ km}})^2 \approx 0.0248,
\end{align*}
where we assume $p_c \approx 0.5$~\cite{schupp2021interface} and $p_d \approx 0.9$~\cite{ma2024drone}.
We therefore have
\begin{align*}
    r_e \approx M \cdot 106 \text{ Hz}.
\end{align*}
In general, letting $d$ be the distance between the parties, with the above assumptions on $t_a$ and $p_s$, the effective entanglement rate is \begin{align*}
    r_e = M \cdot \frac{p_p p_c p_d 10^{-0.1 \alpha d} }{(\frac{d}{2v_f} + \frac {d}{2v_s} )} = M \cdot \frac{2 p_p p_c p_d 10^{-0.1 \alpha d} }{d(\frac{1}{v_f} + \frac {1}{v_s} )}.
\end{align*}

Now, 106 Hz does not seem fast enough for HFT. However, we need to differentiate between the \textbf{reaction time} and the \textbf{event time} in HFT. The reaction time is the time between the local observation and local trade decision, which can be on the order of microseconds or less. The event time, in contrast, is defined to be the time between events that trigger a trade decision, which can be on the order of tens of milliseconds or longer~\cite{clark2013three}. When both parties know when to implement the quantum strategy,\footnote{We mention here that unlike Bell experiments, events in HFT can be asynchronous. However, in many settings we can still pair events at different exchanges in a natural way so that they take part in the same TC problem. For example, for NYSE and NASDAQ, many of the triggers for high-frequency trades come from the exchanges in Secaucus, New Jersey~\cite{nasdaq_speed}. These triggers are analogous to questions from a referee in a nonlocal game sent to the two different players. Although the triggers can arrive at two different times (close to the difference of the speed-of-light delays from Secaucus to NYSE and NASDAQ), they can be locally recognized as belonging to the same TC problem, prompting the two servers to coordinate their trades in response. Note that the arrival of the trigger at NYSE and that of NASDAQ are often spacelike separated events. } 
the entanglement generation rate should be at least the inverse of the event time,
which is only 100 Hz or lower. Hence, we can achieve the desired generation rate to conduct coordinated HFT between NYSE and NASDAQ \emph{with a single memory, $M=1$}.\footnote{Note that for Type I implementations, we cannot store entanglement and thus entanglement needs to be available during every microsecond time window. The requirements on generation rate can be significantly relaxed for Type II implementations. } We mention that the technology for scaling to many memories has been demonstrated: see for instance~\cite{lee2022quantum,golter2022multiplexed}. 
As for reaction time, the quantity to be compared with is the local operation time (local gates and measurement). Sub-microsecond operations have already been demonstrated in Bell experiments that close the locality loophole. See for example~\cite{hensen2015loophole,shalm2015strong,giustina2015significant}.
We therefore conclude that \emph{current quantum technologies can already accommodate the speeds of HFT}. As mentioned above, note again that this is not strictly necessary to obtain a quantum advantage. 

For Type II implementations, fidelity is also an important specification. Entanglement generation involves noisy memory-photon entanglement, imperfect photon detection, as well as decoherence of the memories during photon traversal. Measurement of the memories themselves can also be noisy. We will evaluate how much the quantum advantage computed for the hedging problem in~\Cref{fig:anti_toxic} is robust to noise. For simplicity, we will assume the noise can be modeled by the entangled state going through a depolarizing channel. In realistic scenarios, a detailed physics simulation should be conducted. We define the \textbf{robustness} $\nu^*$ as how much depolarizing noise can be tolerated before the quantum advantage disappears. 

Now, the effect of depolarizing noise
\begin{align*}
    \rho \mapsto (1-\nu) \rho + \nu \pi,
\end{align*}
where $\pi$ is the maximally mixed state and $\nu$ is the magnitude of the noise, on the behavior of a quantum strategy for the hedging problem is the following:
\begin{align}
\label{eq:noisy_hedgeornot}
    p(d \vert o) \mapsto (1-\nu) p(d \vert o) + \nu \frac 1 4,
\end{align}
where we assume the quantum strategy achieves the quantum value and uses only qubits. Using higher dimensional systems could improve the robustness, but at the cost of incurring larger error rates for the physical implementation. For a derivation of these results, see~\Cref{app:noisy}. See also~\cite{perez2008unbounded} for a general study on noise tolerance.  Using~\Cref{eq:noisy_hedgeornot} and the utility array in~\Cref{eq:hon_utility}, we can directly compute the effect of depolarizing noise on the expected utility:
\begin{align}
    \bar u & \mapsto (1-\nu) \bar u + \nu \frac 1 4 \times [(1-p)^2 (1+1) + 2 p(1-p) (\beta + \beta + 1- \beta +1-\beta) + p^2 (1+1)] \nonumber \\
    & = (1-\nu) \bar u + \nu \frac 1 2.
\label{eq:noisy_utility}
\end{align}
Conveniently, this transformation does not depend on $p$ or $\beta$. 
We observe the second term in~\Cref{eq:noisy_hedgeornot} corresponds to a constant classical behavior where each party independently chooses one of the possible decisions $A,B$ with probability $\frac 1 2$ regardless of the observation, which achieves the expected utility of $\frac 1 2$. 
Furthermore, letting the NYSE server always issue ask orders first and the NASDAQ server issue the bid order first when where is no indicator and the ask order first when there is, we find that the expected utility is $1-p(1-p)$. Thus,
\begin{equation*}
    q^* \geq c^* \geq 1- p(1-p) \geq \frac 3 4 > \frac 1 2.
\end{equation*}
Hence, we can simply compute the robustness as
\begin{align*}
    \nu^* = \frac{q^* - c^*}{q^* - \frac 1 2}.
\end{align*}
We plot the robustness for the hedging problem as a function of $p$ and $\beta$ in~\Cref{fig:robustness}.
\begin{figure}
    \centering
    \begin{subfigure}[b]{0.85\textwidth}
    \centering
    \includegraphics[width = 0.95\textwidth, right]{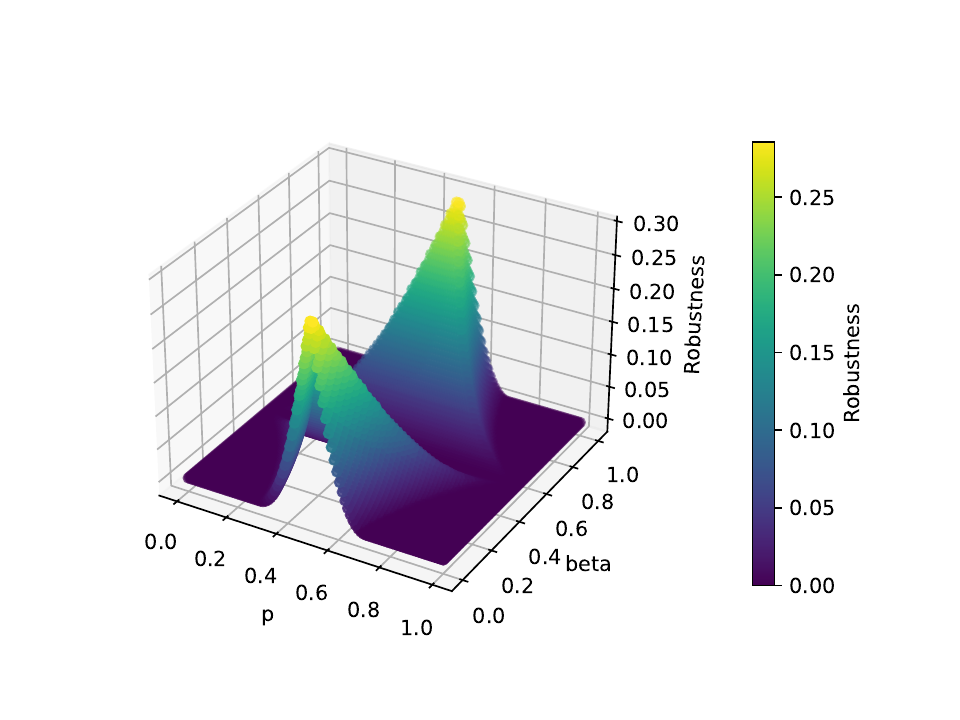}
    \caption{}
    \end{subfigure}
    \\
    \begin{subfigure}[b]{0.48\textwidth}
    \centering
    \includegraphics[width = 0.90\textwidth, left]{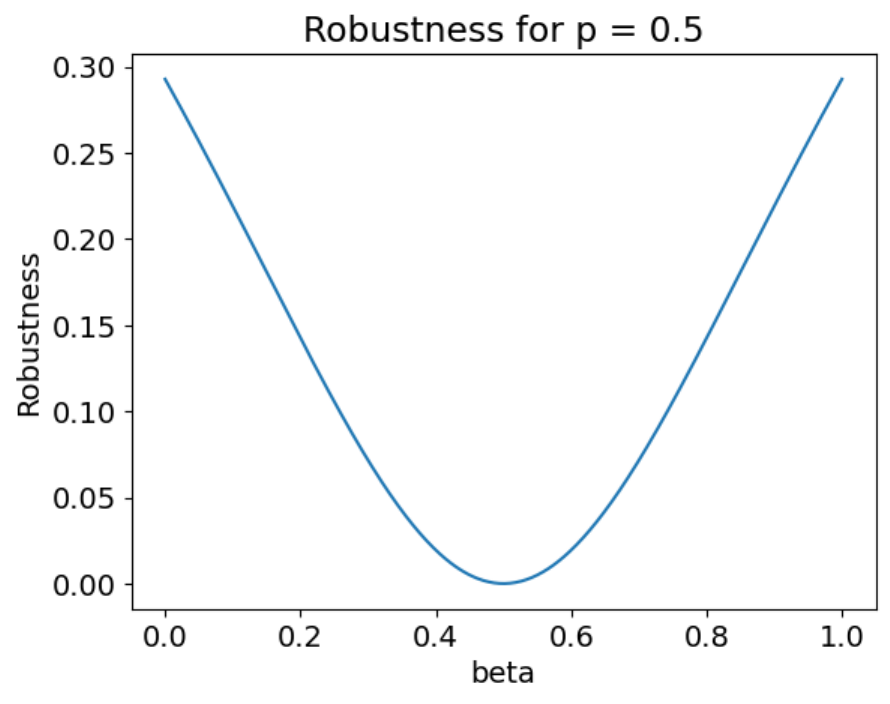}
    \caption{}
    \end{subfigure}
    \quad
    \begin{subfigure}[b]{0.48\textwidth}
    \centering
    \includegraphics[width = 0.9 \textwidth, left]{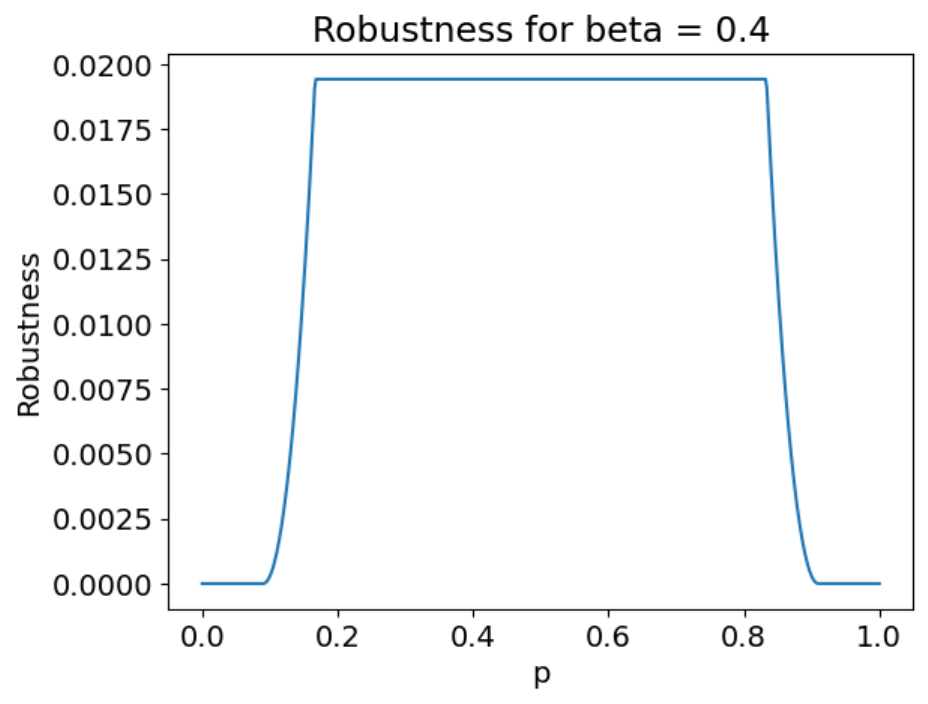}
    \caption{}
    \end{subfigure}
    \caption{(a) Robustness values computed for the hedging problem with different values of $p$ and $\beta$. (b) Robustness as a function of $\beta$ when $p=0.5$. (c) Robustness as a function of $p$ when $\beta =0.4$. }
    \label{fig:robustness}
\end{figure}
We compute robustness values of up to around 0.3. That is, optimal quantum strategies using qubits can tolerate depolarizing noise up to (not including) that magnitude and a quantum advantage would remain.
For fixed $\beta$, we also observe some regions with constant robustness, which implies in those regions the quantum value $q^*$ is a fixed affine function of the classical value $c^*$. In general, noise robustness can be interpreted as some form of normalized maximal Bell inequality violation~\cite{perez2008unbounded}. We also plot the quantum advantage for different values of $p$ and $\beta$ under various noise levels in~\Cref{fig:noisy_qadv}. As expected, the range of $p$ and $\beta$ for which there is a quantum advantage shrinks as the noise level increases.
\begin{figure}
        \centering
    \begin{subfigure}[b]{0.5\textwidth}
    \centering
    \includegraphics[width = 0.99\textwidth, right]{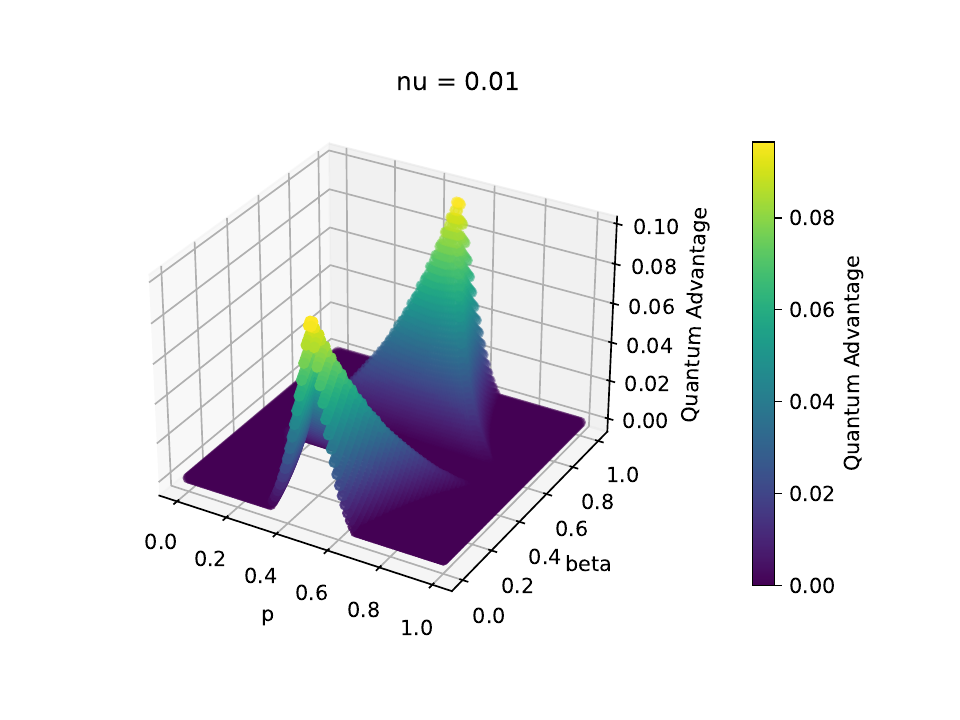}
    \caption{}
    \end{subfigure}
    \\
    \begin{subfigure}[b]{0.5\textwidth}
    \centering
    \includegraphics[width = 0.99\textwidth, left]{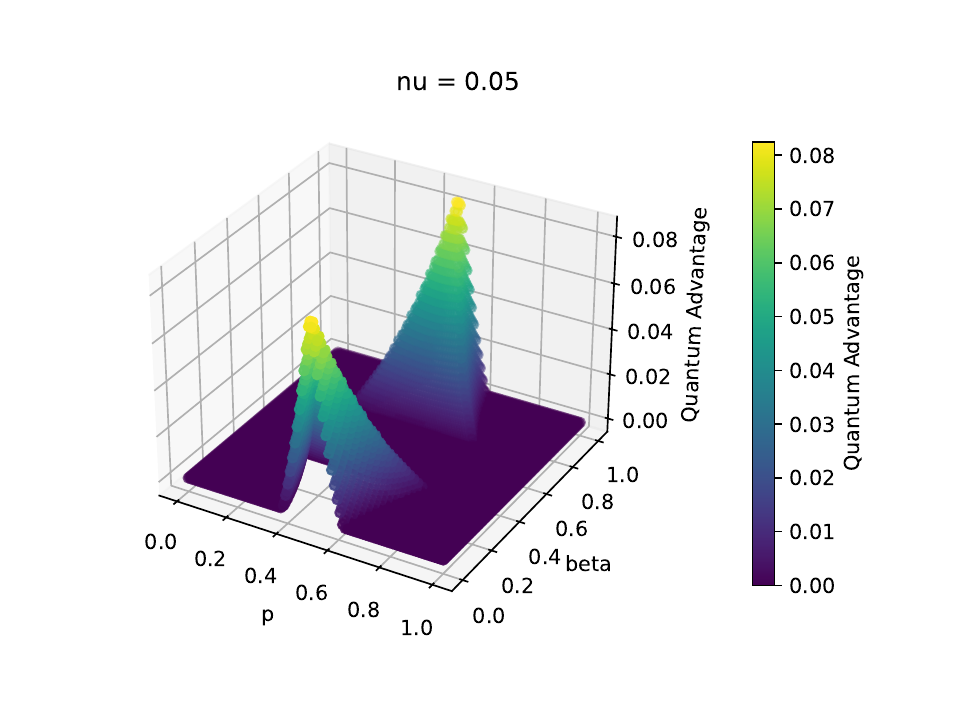}
    \caption{}
    \end{subfigure}
    \\
    \begin{subfigure}[b]{0.5\textwidth}
    \centering
    \includegraphics[width = 0.99 \textwidth, left]{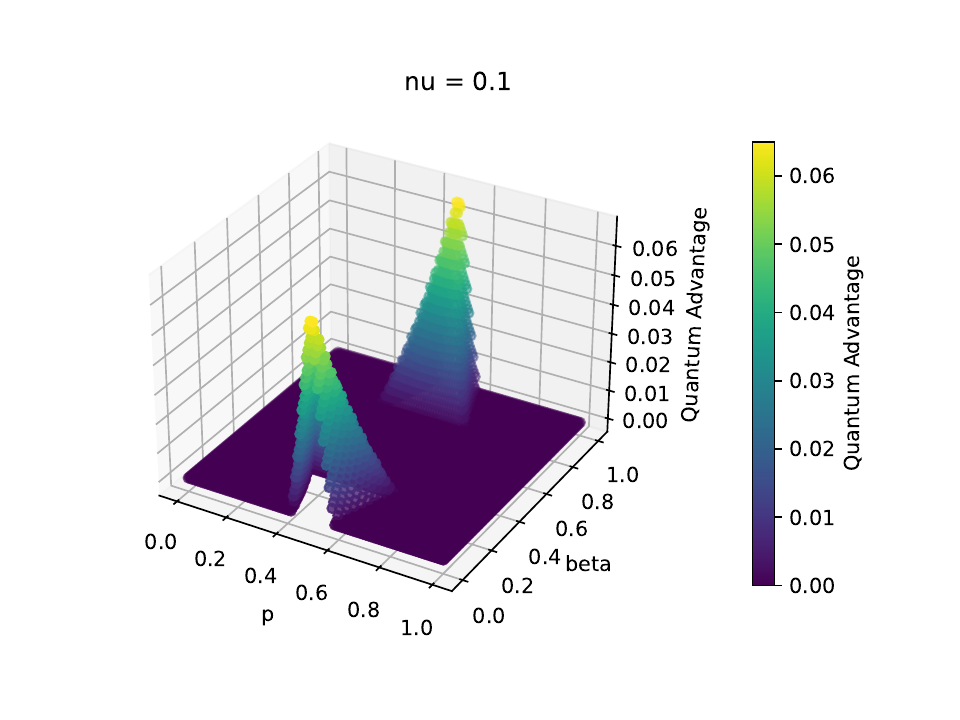}
    \caption{}
    \end{subfigure}
    \caption{Quantum advantage for the hedging problem as a function of $p$ and $\beta$ for various levels of depolarizing noise.}
    \label{fig:noisy_qadv}
\end{figure}

\section{Discussion}
\label{sec:discussion}
Quantum telepathy is a potential new application of quantum technologies. In our modern era of electronic computing where timescales are measured in microseconds or even nanoseconds, the speed of light delay is actually appreciable in many scenarios. Coordinating decisions at these timescales, which often involve space-like separated events, could benefit from using quantum entanglement. High frequency trading is a notable example. 

Aside from HFT, other possible applications include (classical) distributed computing and computer architecture. While HFT is a rather specialized field, these latter two settings are much more commonplace for anyone who uses a computer, and any interesting usage of quantum telepathy may have a greater impact. We consider the first application. Distributed computing may involve computers with a spatial separation of around 100 meters, the square root of the area of an average data center in 2024~\cite{techjury}. At this distance, any computational processes faster than 0.1 microseconds would be space-like separated. Current entangled photon sources can achieve generation rates at this timescale~\cite{yin2017satellite}, and optical fiber can achieve an efficiency of $\eta \approx 0.998$ at this distance scale.\footnote{Note in this case, the dominant source of loss would instead be from coupling. With current technologies, coupling efficiencies of 90\% or higher can be achieved~\cite{bhaskar2020experimental}. }
Hence this is the best regime to make use of current technologies. As for computer architecture, distance scales may be on the order of 1 cm or shorter in a CPU. For example, an Intel i7 CPU core package size is $50\times 25$ mm~\cite{intel}. Although it would be difficult to find any processes faster than the speed of light delay at such a distance scale, in general signal delays in a microprocessor cache can be dominated by resistive-capacitive time delays in long on-chip interconnects~\cite{nurmi2004interconnect}. Furthermore, modern computer memories can have latencies significantly longer than the speed of light delay: DDR memories can have latencies on the order of 10 ns~\cite{micron}, while SSD memories can have latencies of 10 to 100 $\mu$s~\cite{intel_ssd}. Needing to coordinate processes in different parts of the memory at these timescales may therefore engender a TC problem. However, for such timescales, we may need to use photon sources with very high generation rates such as~\cite{wakui2020ultra} to obtain an appreciable quantum advantage, as well as faster local quantum operations. Now, in the case of computer architecture, for integration purposes it is more practical to use optical waveguides to distribute photons. Compared to optical fiber, waveguides have a much higher attenuation rate of about 0.2 dB/cm~\cite{buschow2001encyclopedia}, which implies an efficiency of $\eta \approx 0.955$ for a 1 cm distance. Note that the speed of light in a waveguide is around the same as that of optical fiber: $2/3$ of that in vacuum. We propose a ``pseudo-example'' of a computer architecture scenario that can be mapped to the CHSH game in~\Cref{app:arch}.

Taking a step back, whether the application is HFT, distributed computing, or computer architecture, the crucial next step for assessing the practical usefulness of quantum telepathy is to analyze a TC problem encountered in an industrial setting with a utility array based on real data, not just a toy model such as the hedging problem. It would also be necessary to comprehensively analyze achievable efficiencies, fidelities, and entanglement generation rates to evaluate how big of a quantum advantage can be attained. We leave this for future research. Indeed, as a potentially important application of quantum technologies, we anticipate quantum telepathy will inspire new lines of theoretical and experimental research to make the technology practically viable. On the theory side, a key research direction is to figure out techniques to compute the quantum value and a quantum strategy that realizes it in an efficient way for general TC problems, or special classes of TC problems encountered in real-life scenarios. Such techniques should ideally also be able to consider photon loss. In particular, stronger results regarding the minimum number of continuous variables needed in the general-purpose optimizer outlined in~\Cref{app:numerical} would be very useful. It would also be interesting to theoretically take into account the multi-round nature~\cite{barrett2002quantum} of TC problems encountered in practice. In addition, we have only considered cooperative games so far. Identifying applications of the theory of non-cooperative multipartite games with quantum resources~\cite{auletta2021belief} in real-world scenarios where parties cannot communicate (possibly due to latency) is also a direction worth exploring.

On the experimental side, it is important to devise ways to share programmable entangled states between the different parties. 
For Type I implementations, one possible approach is for the entanglement source to use quantum computers such as atoms~\cite{firstenberg2016nonlinear} or superconducting circuits~\cite{blais2021circuit} to produce optical or microwave entangled photons with a programmable quantum state.
For Type II implementations, quantum teleportation~\cite{bennett1993teleporting} is a natural solution. It may also be useful to devise heralded entanglement schemes that produce more general entangled states and not just the maximally entangled state.
Regardless of implementation type, the authors of~\cite{szegedy2020systems,szegedy2023systems} mention it would also be particularly interesting to physically realize quantum embezzling states~\cite{van2003universal} which can be used to produce any bipartite entangled state without communication. Such an entanglement source can accommodate all possible TC problems, which is very convenient for standardization purposes. It could also be used in more complicated scenarios where the utility array is changing over time.


\paragraph{Acknowledgment} We would like to thank Haakam Aujla, Jianxin Chen, Dalu Ding, Yong Ding, Jack Gidding, Ronald Hanson, Patrick Hayden, Zhengfeng Ji, Rongzhi Jiang, Matt Meinel, Anand Natarajan, Pierre Pocreau, Francisco Horta Ferreira da Silva, Mario Szegedy, Zhaoyou Wang and Qiming Wu for helpful discussions. We thank John Gardiner and Javier Lopez for pointing out an issue with the initial HFT scenario. We also thank Yuqing Li and Xinyu Xu for helping develop the SDP and NPA algorithms that we also used in this paper. DD would like to thank God for all of His provisions. LJ acknowledges support from the Packard Foundation (2020-71479). 

\bibliographystyle{unsrt}
\bibliography{ref}

\newpage

\appendix

\section{General TC Problems: Definitions and Facts}
\label{app:tc}
In this section we present the theory of TC problems. Most of the concepts before~\Cref{app:loss} are well known in the literature on Bell inequalities (see for example~\cite{brunner2014bell}), although we do make some generalizations and slight deviations~\cite{szegedy2020systems,szegedy2023systems} to capture essential elements of a real-world TC problem and give a convenient framework to prove some results. We will make such deviations explicit. The content in~\Cref{app:loss} and~\Cref{app:noisy} also involve concepts already in the literature, but we are not aware of any work that explicitly writes down the mathematical definitions for the general, multipartite case. Also, we are not aware if previous work proved the theoretical results in these sections.

We first define a general TC problem.
\begin{dfn}
    Let $n \geq 2$ be an integer. An \textbf{n-party TC problem} is a tuple 
    $$(\{\mathcal O_i\}_{i=1}^n, \{\mathcal D_i\}_{i=1}^n, p_\mathcal{O}(o), u),$$
    where $\mathcal{O}_i, \mathcal{D}_i$ are non-empty finite\footnote{We could also consider infinite sets~\cite{aharon2013continuous,cavalcanti2007bell}. } sets, $p_\mathcal{O}(o)$ is a probability distribution over $\mathcal O := \mathcal{O}_1 \times  \mathcal{O}_2 \times \cdots \times \mathcal{O}_n$, and $u$ is a multidimensional array indexed by observations and decisions: $u_{o_1, o_2, \cdots, o_n}^{d_1, d_2, \cdots d_n} \in \mathbb{R}$, $o_i \in \mathcal O_i$ and $d_i \in \mathcal D_i$. We also define $\mathcal D := \mathcal D_1 \times \mathcal D_2 \times \cdots \times \mathcal D_n$.
\end{dfn}
\noindent $\mathcal O_i$ and $\mathcal D_i$ are the sets of \textbf{observations} and \textbf{decisions} for party $i$. $p_\mathcal{O}(o)$ is the input distribution, a probability distribution over the set of possible observation tuples across all parties. The marginal distributions of observations for each party in general may not be independent. This is an additional element that is present in nonlocal games but not general Bell expressions as presented in~\cite{brunner2014bell}. Lastly, $u$ is the \textbf{utility array}, where\footnote{For convenience here and in other places we will often use the shorthand $u_o^d:= u_{o_1, o_2,\cdots, o_n}^{d_1, d_2, \cdots d_n}$. } $u_o^d \in \mathbb{R}$ is the utility of making a combination of decisions $d \in \mathcal D$ given the combination of observations $o \in \mathcal O$. We deviate from the theory in~\cite{brunner2014bell} by proposing a \emph{multidimensional array structure} instead of a vector structure for their equivalent notion of a Bell expression. This is more natural since for one we should differentiate the observation indices from the decision indices and each party's index from each other. We also introduce the terminology that when each party has the same number of observations $m$ and the same number of decisions $\Delta$, we have an \textbf{$(n,m,\Delta)$ problem}.
\begin{dfn}
For any $i \in [n]$, where $[n]:=\{1, 2, \cdots, n\}$, let $o_i \in \mathcal O_i, d_i \in \mathcal D_i$. We define a \textbf{behavior} as a multidimensional array $p_{o_1, o_2, \cdots o_n}^{d_1, d_2, \cdots d_n}$ as conditional probabilities $p(d \vert o)$. The \textbf{expected utility} of a behavior is given by 
\begin{equation}
\label{eq:exp_utility}
    \sum_{o \in\mathcal O} p_\mathcal{O}(o) 
    \sum_{d \in \mathcal D} p_o^d u_o^d.
\end{equation} 
\end{dfn}
\noindent The conditional probability distribution $p(d\vert o)$ describes how parties make decisions given the observations. Again, we deviate from~\cite{brunner2014bell} by choosing to describe this as a multidimensional array instead of a vector. Now, in a TC problem, the parties are not in communication. Hence, their behavior has to satisfy a \textbf{no-signaling} condition, that is, for all $i \in [n]$, $o_1, o_2, \cdots, o_{i-1}, o_{i+1}, \cdots, o_n$, $d_1, d_2, \cdots, d_{i-1}, d_{i+1}, \cdots, d_n$ in their respective alphabets,
\begin{align*}
\sum_{d_i \in \mathcal D_i} p^{d_1, d_2, \cdots, d_i, \cdots d_n}_{o_1, o_2, \cdots, o_i, \cdots, o_n} 
\end{align*}
is the same for all $o_i \in \mathcal O_i$.
In words, the marginal distribution over decisions of all parties other than party $i$ is independent of the observation of party $i$. Here, our choice using a array structure for the behavior gives the no-signaling condition a more natural interpretation as a form of multidimensional array symmetry.
Note that the TC problem formalism generalizes nonlocal games that usually only have only binary outcomes of winning or losing. The formalism can capture this notion as a special case by letting $u_o^d=1$ if $(o,d)$ satisfies the winning conditions and equal $0$ otherwise. Then, the expected utility of a behavior is the probability of winning the nonlocal game.

We next define what behaviors are possible for parties with classical and quantum resources. This part is very similar to the concepts in~\cite{brunner2014bell}, so we will be brief.
\begin{dfn}
    A \textbf{deterministic strategy} is given by $\{f_i\}_{i=1}^n$ where $f_i : \mathcal O_i \to \mathcal D_i$. 
\end{dfn}
\noindent In a deterministic strategy, each player has a local function $f_i : \mathcal O_i \to \mathcal D_i$ which they use to make decision $d_i$ given observation $o_i$. The corresponding \textbf{deterministic behavior} is given by
\begin{align*}
    p_o^d := \prod_{i=1}^n \delta_{d_i, f_i(o_i)},
\end{align*}
where $o := (o_1, o_2, \cdots, o_n)$, $d := (d_1, d_2, \cdots, d_n)$, and $\delta$ is the Kronecker delta function. The expected utility of a deterministic behavior is given by
\begin{align*}
    \sum_{o \in \mathcal O} p_\mathcal{O}(o) u_o^{f(o)},
\end{align*}
where $f:\mathcal O \to \mathcal D$ is given by $f((o_1, o_2, \cdots, o_n)) := (f_1(o_1), f_2(o_2), \cdots, f_n(o_n))$. More generally, a \textbf{classical behavior} is defined by convex combinations of deterministic behaviors, which are attained by \textbf{classical strategies}: probabilistic mixtures of deterministic strategies where shared randomness can be used.
\begin{dfn}
\label{dfn:q_strategy}
    A \textbf{quantum strategy} is a tuple $(\{q_i\}_{i=1}^n, \{M_i\}_{i=1}^n, \vert \psi\rangle)$, where $q_i\geq \vert \mathcal D_i\vert$ is a positive integer, $M_i :\mathcal O_i \to S_M(q_i, \vert\mathcal D_i\vert)$, $S_M(q_i, \vert\mathcal D_i\vert)$ being the set of all projective measurements on a $q_i$-dimensional Hilbert space consisting of $\vert \mathcal D_i\vert$ projectors, and $\vert \psi \rangle$ is a $\prod_{i=1}^n q_i$-dimensional quantum state. 
\end{dfn}
\noindent In a quantum strategy, the players share a global quantum state $\vert \psi \rangle$, where each of their shares is of dimension $q_i$. They apply a projective measurement $M_i(o_i)$ on their share when they make the observation $o_i$. There are $\vert\mathcal D_i\vert$ possible measurement outcomes, which correspond to the possible decisions each party can make. We can more conveniently denote the projectors of $M_i(o_i)$ as $\Pi_i^{(d_i \vert o_i)}$, indexed by $d_i \in \mathcal D_i$. The corresponding \textbf{quantum behavior} is then given by
    \begin{align*}
        p_o^d := \langle \psi \vert \bigotimes_{i=1}^n \Pi_{i}^{(d_i\vert o_i)} \vert \psi \rangle.
    \end{align*}
We can therefore express the expected utility as
\begin{align*}
    \langle \psi \vert \sum_{o \in \mathcal O} p_\mathcal{O}(o) \sum_{d \in \mathcal D} \bigotimes_{i=1}^n \Pi_{i}^{(d_i\vert o_i)} u_o^d \vert \psi\rangle.
\end{align*}
We will call the operator 
\begin{align}
\label{eq:bell_op}
    \sum_{o \in \mathcal O} p_\mathcal{O}(o) \sum_{d \in \mathcal D} \bigotimes_{i=1}^n \Pi_{i}^{(d_i\vert o_i)} u_o^d 
\end{align}
the \textbf{Bell operator} of a quantum strategy. For the purposes of computing the quantum value it is sufficient to compute the largest eigenvalue of the Bell operator. Note in general we can also allow density matrices and POVM elements, but such strategies can always be considered to be a quantum strategy according to~\Cref{dfn:q_strategy} but in a higher dimensional space via Naimark's dilation theorem.

\subsection{Quantum Advantage}
\label{app:qadv}
Now, we want to optimize the expected utility over different strategies. We can see from~\Cref{eq:exp_utility} that for such a purpose the key object of interest is the \textbf{weighted utility array} $w$ where $w_o^d := p_\mathcal{O}(o) u_o^d$.
Indeed, we can express the expected utility of a behavior $p(d\vert o)$ as simply
\begin{align*}
    \sum_{o \in \mathcal O, d \in \mathcal D} p_o^d w_o^d.
\end{align*}
We denote the set of all possible deterministic behaviors as $\mathbf D$ and quantum behaviors as $\mathbf Q$. 
Then, we define the \textbf{classical value} and \textbf{quantum value} of $w$ as 
\begin{align*}
    c^*(w) & := \max_{p_o^d \in \mathbf D} \sum_{o \in \mathcal O, d \in \mathcal D} p_o^d w_o^d, \\
    q^*(w) &:= \max_{p_o^d \in \mathbf Q} \sum_{o \in \mathcal O, d \in \mathcal D} p_o^d w_o^d.
\end{align*}
We will call $g(w) := q^*(w) - c^*(w)$ the \textbf{gap}. 

Now, since deterministic strategies form a subset of quantum strategies, $g(w) \geq 0$. We will particularly be interested in cases for which $g(w) > 0$, in which case we call $w$ \textbf{gapped}. Otherwise, $g(w) = 0$ and we call it \textbf{gapless}. We denote the set of all gapped weighted utility arrays by $\mathcal G$. We establish some basic properties of $\mathcal G$. We in particular explore transformations of an array that preserves the gapped property.
\begin{prp}
    $\mathcal G$ is a cone, that is, it is closed under positive scalar multiplication.
\end{prp}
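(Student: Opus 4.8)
The plan is to exploit the fact that both $c^*$ and $q^*$ are defined as maxima of \emph{linear} functionals of $w$ over feasible behavior sets $\mathbf{D}$ and $\mathbf{Q}$ that do not themselves depend on $w$. First I would observe that for any fixed behavior $p(d\vert o)$, the quantity $\sum_{o \in \mathcal O, d \in \mathcal D} p_o^d w_o^d$ is linear in the array $w$, so for any scalar $\lambda > 0$ we have $\sum_{o,d} p_o^d (\lambda w)_o^d = \lambda \sum_{o,d} p_o^d w_o^d$.

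Next I would take the maximum over $\mathbf{D}$ (respectively $\mathbf{Q}$) on both sides. Since $\lambda > 0$ is a positive constant, it factors out of the maximization without changing which behavior is optimal, yielding $c^*(\lambda w) = \lambda\, c^*(w)$ and $q^*(\lambda w) = \lambda\, q^*(w)$. The point that justifies this step is that neither $\mathbf{D}$ nor $\mathbf{Q}$ depends on the weighted utility array: these sets are determined solely by the observation and decision alphabets $\{\mathcal{O}_i\}_{i=1}^n$ and $\{\mathcal{D}_i\}_{i=1}^n$ through the no-signaling and deterministic/quantum realizability constraints, so rescaling $w$ leaves the feasible set over which we optimize unchanged.

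Subtracting the two identities gives $g(\lambda w) = q^*(\lambda w) - c^*(\lambda w) = \lambda\big(q^*(w) - c^*(w)\big) = \lambda\, g(w)$. Hence if $w \in \mathcal{G}$, meaning $g(w) > 0$, then for every $\lambda > 0$ we have $g(\lambda w) = \lambda\, g(w) > 0$, so $\lambda w \in \mathcal{G}$, which is exactly the statement that $\mathcal{G}$ is a cone.

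There is no genuine obstacle here: the entire content is the positive homogeneity of the ``max of a linear functional'' under positive scaling, and the only thing to verify carefully is the $w$-independence of the feasible sets. I would note explicitly that the argument requires $\lambda$ to be \emph{strictly} positive, since multiplying by $\lambda = 0$ would send $g$ to $0$; this is why the claim concerns positive scalar multiplication rather than general vector-space closure.
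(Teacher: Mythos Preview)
Your proof is correct and follows essentially the same approach as the paper: both arguments pull the positive scalar through the maximum by linearity to obtain $c^*(\lambda w)=\lambda\,c^*(w)$ and $q^*(\lambda w)=\lambda\,q^*(w)$, then subtract to conclude $g(\lambda w)=\lambda\,g(w)>0$. Your additional remarks on the $w$-independence of $\mathbf{D}$ and $\mathbf{Q}$ and the necessity of strict positivity are correct elaborations, but the core reasoning is identical.
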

\begin{proof}
Let $\alpha>0$. Then, let $w \in \mathcal G$. It is clear that
\begin{align*}
    c^*(\alpha \cdot w) = \max_{p(d\vert o) \in \mathbf D} \sum_{o \in \mathcal O, d \in \mathcal D} p(d\vert o) \alpha w(o,d) = \alpha \max_{p(d \vert o) \in \mathbf D} \sum_{o \in \mathcal O, d \in \mathcal D} p(d\vert o) w(o,d) = \alpha c^*(w).
\end{align*}
Similarly, $q^*(\alpha \cdot  w) = \alpha q^*(w)$. Thus, $g(\alpha \cdot w) = \alpha g(w) >0$ and so $\alpha \cdot w \in \mathcal G$.
\end{proof}
\noindent We also define the constant array $e \in \mathcal G$ where $e_o^d = 1$. It is clear translation by a multiple $e$ also preserves $\mathcal G$:
\begin{prp}
    Let $w$ be a weighted utility array. Then, $g(w + x e) = g(w)$ for $x \in \mathbb R$. In particular, if $w \in \mathcal G$, then $w + xe \in \mathcal G$.
\end{prp}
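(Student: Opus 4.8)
The plan is to show that adding $xe$ to the weighted utility array shifts the objective functional $p \mapsto \sum_{o,d} p_o^d w_o^d$ by an additive constant that is \emph{the same for every admissible behavior}, so that both the classical and quantum values pick up an identical shift and their difference is left unchanged. The entire argument rests on the normalization of behaviors, and the constant array $e$ is precisely the object for which this works.

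First I would record the elementary identity: for any behavior $p(d\vert o)$ and any $x \in \mathbb R$,
\begin{align*}
    \sum_{o \in \mathcal O, d \in \mathcal D} p_o^d (w + xe)_o^d = \sum_{o,d} p_o^d w_o^d + x \sum_{o,d} p_o^d,
\end{align*}
using $e_o^d = 1$. The second sum factorizes as $\sum_{o} \sum_{d} p_o^d = \sum_{o} 1 = \abs{\mathcal O}$, since a behavior is a conditional probability distribution and hence $\sum_{d \in \mathcal D} p_o^d = 1$ for each fixed $o$. The crucial point is that this constant $\abs{\mathcal O}$ does not depend on $p$ at all.

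Next I would verify that this normalization holds uniformly across both classes of behaviors we maximize over, since that is the only place the argument could plausibly break. For a deterministic behavior $p_o^d = \prod_{i} \delta_{d_i, f_i(o_i)}$, exactly one decision tuple contributes for each $o$, so $\sum_d p_o^d = 1$. For a quantum behavior $p_o^d = \langle \psi \vert \bigotimes_i \Pi_i^{(d_i\vert o_i)} \vert \psi \rangle$, summing over $d$ and using completeness of each projective measurement, $\sum_{d_i} \Pi_i^{(d_i\vert o_i)} = \id$, collapses the operator to $\bigotimes_i \id$ and leaves $\langle \psi \vert \psi\rangle = 1$. Convex combinations of deterministic behaviors inherit the property, so every element of $\mathbf D$ and $\mathbf Q$ satisfies $\sum_{o,d} p_o^d = \abs{\mathcal O}$.

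Finally I would conclude by taking maxima: since the objective for $w + xe$ equals the objective for $w$ plus the fixed constant $x\abs{\mathcal O}$ on the entire feasible set, we obtain $c^*(w + xe) = c^*(w) + x\abs{\mathcal O}$ and $q^*(w + xe) = q^*(w) + x\abs{\mathcal O}$; subtracting gives $g(w + xe) = g(w)$, so in particular $g(w) > 0$ is preserved and $w \in \mathcal G$ implies $w + xe \in \mathcal G$. I do not anticipate a real obstacle here; the only subtlety worth flagging is that the invariance hinges entirely on normalization being common to deterministic and quantum behaviors alike, which is exactly why translating by the constant array $e$ (rather than by an arbitrary array) is the right move.
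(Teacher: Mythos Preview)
Your argument is correct and matches the paper's own proof, which simply asserts $c^*(w+xe) = c^*(w) + x\abs{\mathcal O}$ and $q^*(w+xe) = q^*(w) + x\abs{\mathcal O}$ as trivial and subtracts. You have merely unpacked the normalization step $\sum_d p_o^d = 1$ that the paper leaves implicit.
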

\begin{proof}
    We have trivially
    \begin{align*}
        c^*(w+\eta e) &= c^*(w) + x \vert \mathcal O\vert \\
        q^*(w+\eta e)&= q^*(w) + x \vert \mathcal O\vert.
    \end{align*}
    Hence, $g(w+x e) = g(w)$. The second statement is immediate.
\end{proof}

\noindent Another observation is that the ordering of observations and decision is arbitrary. Hence, the following holds.
\begin{lem}
\label{lem:perm}
    The gap of a weighted utility array is invariant under permutations of observations and decisions. That is, letting $w$ be a weighted utility array, define the multidimensional array $v$ whose elements are
    \begin{align*}
        v_{o_1, o_2, \cdots, o_n}^{d_1, d_2, \cdots, d_n} := w_{\pi_1(o_1), \pi_2(o_2), \cdots, \pi_n(o_n)}^{\sigma_1(d_1), \sigma_2(d_2), \cdots, \sigma_n(d_n)},
    \end{align*}
    where $\pi_i, \sigma_i$ are permutations of $\mathcal O_i, \mathcal D_i$, respectively. Then, $g(v) = g(w)$. In particular, if $w \in \mathcal G$, then $v \in \mathcal G$.
\end{lem}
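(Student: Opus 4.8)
The plan is to prove the two halves $c^*(v) = c^*(w)$ and $q^*(v) = q^*(w)$ separately, after which $g(v) = q^*(v) - c^*(v) = q^*(w) - c^*(w) = g(w)$ follows by subtraction and the ``in particular'' claim is immediate. The unifying observation is that relabeling observations and decisions does not change the \emph{sets} of achievable behaviors $\mathbf D$ and $\mathbf Q$; it only permutes the indices of each behavior. Concretely, I would introduce the relabeling map $T$ on behaviors defined by $(Tp)_o^d := p_{\pi(o)}^{\sigma(d)}$, using the shorthand $\pi(o) := (\pi_1(o_1), \ldots, \pi_n(o_n))$ and likewise $\sigma(d) := (\sigma_1(d_1), \ldots, \sigma_n(d_n))$, and then establish that $T$ is a bijection of $\mathbf D$ onto itself and of $\mathbf Q$ onto itself.

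The reduction itself is a change of summation variables. Since $v_o^d = w_{\pi(o)}^{\sigma(d)}$, the expected utility of a behavior $p$ against $v$ is $\sum_{o,d} p_o^d v_o^d = \sum_{o,d} p_o^d w_{\pi(o)}^{\sigma(d)}$, and substituting $o' = \pi(o)$, $d' = \sigma(d)$ (both bijections of the respective index sets) rewrites this as $\sum_{o',d'} (T^{-1}p)_{o'}^{d'} w_{o'}^{d'}$, i.e.\ the expected utility of $T^{-1}p$ against $w$. Given that $T$ is a bijection of the relevant behavior set, as $p$ ranges over $\mathbf D$ (resp.\ $\mathbf Q$) so does $T^{-1}p$, so the two maxima coincide and $c^*(v) = c^*(w)$ (resp.\ $q^*(v) = q^*(w)$).

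The substantive content is therefore the $T$-invariance of the two behavior sets. For a deterministic behavior $p_o^d = \prod_i \delta_{d_i, f_i(o_i)}$, injectivity of each $\sigma_i$ lets me rewrite $(Tp)_o^d = \prod_i \delta_{d_i, \hat f_i(o_i)}$ with $\hat f_i := \sigma_i^{-1} \circ f_i \circ \pi_i$, again a valid local function, so $Tp \in \mathbf D$; invertibility of the permutations makes $T$ a bijection of $\mathbf D$. For a quantum behavior $p_o^d = \langle \psi \vert \bigotimes_i \Pi_i^{(d_i\vert o_i)} \vert \psi \rangle$, I would set $\hat\Pi_i^{(d_i\vert o_i)} := \Pi_i^{(\sigma_i(d_i)\vert \pi_i(o_i))}$ and observe that, for each fixed $o_i$, this family is merely a relabeling by $\sigma_i$ of the projectors of the projective measurement $M_i(\pi_i(o_i))$, hence still a complete orthogonal family; keeping the same state $\vert\psi\rangle$ and the same dimensions $q_i$ then exhibits $Tp$ as the quantum behavior of a legitimate strategy, so $Tp \in \mathbf Q$.

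The main obstacle, and really the only place requiring care, is this last verification: that the relabeled operators $\hat\Pi_i^{(\cdot\vert\cdot)}$ genuinely form a valid quantum strategy in the sense of \cref{dfn:q_strategy} --- that permuting the decision outcomes preserves both orthogonality and completeness of each measurement, and that the dimension constraint $q_i \geq \vert \mathcal D_i\vert$ survives (which it does, since $\vert\mathcal D_i\vert$ is unchanged under relabeling). Once this bookkeeping is in place, the classical case and the change of summation variables are entirely routine.
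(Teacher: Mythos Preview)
Your proposal is correct and follows essentially the same approach as the paper: the paper's proof is a single sentence (``immediate via simply relabeling observations and decisions according to $\pi_i$ and $\sigma_i$, respectively, for all strategies''), and your argument is precisely the careful unpacking of that relabeling, exhibiting the explicit bijections on $\mathbf D$ and $\mathbf Q$ and the change-of-variables identity for the expected utility.
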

\begin{proof}
    This is immediate via simply relabeling observations and decisions according to $\pi_i$ and $\sigma_i$, respectively, for all strategies.
\end{proof}

We also state properties that \emph{do not hold} regarding $\mathcal G$. 
\begin{fac}
\label{fac:add}
    $\mathcal G$ is not closed under addition.
\end{fac}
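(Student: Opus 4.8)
The plan is to disprove closure by exhibiting two gapped arrays whose sum is gapless, taking as candidates the weighted CHSH array and its complement. Concretely, I would fix the uniform input distribution $p_\mathcal{O}(o) = \tfrac14$ on two binary observations per party, let $w_1$ be the weighted utility array of the standard CHSH game (winning condition $d_1 \oplus d_2 = o_1 \wedge o_2$), and let $w_2$ be the weighted array of the \emph{opposite} CHSH game appearing in \cref{eq:hon_utility} at $\beta = 0$ (winning condition $d_1 \oplus d_2 \ne o_1 \wedge o_2$). Both lie in $\mathcal G$: $w_1$ is gapped because CHSH is the canonical Bell violation at $p = \tfrac12$, and $w_2$ is gapped because it is obtained from $w_1$ by flipping the two decision labels of a single party — a uniform permutation of $\mathcal D_1$ — so \cref{lem:perm} gives $g(w_2) = g(w_1) > 0$.

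The second step is to compute the sum. Writing $(w_i)_o^d = p_\mathcal{O}(o)\,(u_i)_o^d$ with $u_1, u_2$ the two $0/1$ utility arrays, the key observation is that for every fixed $(o,d)$ exactly one of the two winning conditions holds, so $(u_1)_o^d + (u_2)_o^d = 1$ identically. Hence $(w_1 + w_2)_o^d = p_\mathcal{O}(o)$, independent of $d$, which at the uniform distribution is precisely the constant array $\tfrac14 e$. The final step is to note that any array constant in $d$ is gapless: for an arbitrary behavior $p_o^d$ the expected utility collapses to $\sum_{o,d} p_o^d (w_1+w_2)_o^d = \sum_o p_\mathcal{O}(o) \sum_d p_o^d = \sum_o p_\mathcal{O}(o) = 1$, so $c^*(w_1+w_2) = q^*(w_1+w_2)$ and $g(w_1+w_2) = 0$. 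Equivalently, one may invoke the cone property, since $\tfrac14 e$ is a positive multiple of $e$ and $g(e) = 0$. Thus $w_1, w_2 \in \mathcal G$ while $w_1 + w_2 \notin \mathcal G$, which is the claim.

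There is no serious obstacle beyond choosing the right pair; the only point needing care is confirming that the relabeling connecting $w_2$ to $w_1$ is a single uniform permutation of one party's decisions (swap $A \leftrightarrow B$ for, say, the NYSE server, independent of its observation), so that \cref{lem:perm} legitimately certifies $w_2 \in \mathcal G$ rather than some observation-dependent swap the lemma does not cover. Conceptually, the example clarifies why additivity must fail in general: both $q^*$ and $c^*$ are maxima of linear functionals and hence convex in $w$, so their difference $g$ has no reason to be additive, and complementary games are engineered precisely so that their separate violations cancel, leaving a sum that every behavior scores identically.
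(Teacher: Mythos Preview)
Your proof is correct and essentially identical to the paper's: both exhibit the CHSH array and its anti-array (opposite winning condition) as elements of $\mathcal G$ whose sum is $\tfrac14 e \notin \mathcal G$, and both justify membership of the anti-array via the decision-relabeling of \cref{lem:perm}. Your write-up is slightly more explicit in verifying the gaplessness of the sum, but the argument is the same.
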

\noindent That is, $\mathcal G$ is a cone but not a convex cone. To see this, consider the weighted utility matrix $w_\text{CHSH}$ corresponding to the CHSH game~\cite{clauser1969proposed}:
\begin{align*}
    4 (w_\text{CHSH})_{o_1, o_2}^{d_1, d_2} := (o_1 \land o_2) \oplus (d_1 \oplus d_2) \oplus 1.
\end{align*}
Here, we let $o_i, d_i \in \{0,1\}$ so that we can use logical and bitwise addition operators.  Intuitively, the two parties have to make opposite decisions when both observations are $1$ and have to make the same decision otherwise. It is well known that $w_\text{CHSH} \in \mathcal G$. Then, define the ``anti-CHSH game'' weighted utility matrix $\bar w_\text{CHSH}$ as
\begin{align*}
    4 (\bar w_\text{CHSH})_{o_1, o_2}^{d_1, d_2} := (o_1 \land o_2) \oplus (d_1 \oplus d_2).
\end{align*}
That is, it has the opposite winning condition. By a similar argument as that of the CHSH game, $\bar w_\text{CHSH} \in \mathcal G$. However, 
\begin{align*}
    w_\text{CHSH} + \bar w_\text{CHSH}= \frac 1 4 e \not \in \mathcal G.
\end{align*}
This establishes~\Cref{fac:add}.

Furthermore, counter to intuition, scaling the weighted utility by a different positive constant for each set of observations does not always preserved gappedness. This kind of scaling can be interpreted as changing the input distribution.
\begin{fac}
\label{fac:scale}
    $\exists w \in \mathcal G, \alpha_o > 0$ such that the array $w'$, defined as $(w')_o^d := \alpha_o w_o^d$, is not in $\mathcal G$.
\end{fac}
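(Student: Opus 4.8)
The plan is to exhibit an explicit counterexample, and the observation that makes it transparent is that multiplying $w_o^d$ by a positive constant $\alpha_o$ depending only on the observation tuple $o$ is precisely a reweighting of the input distribution. Since $w_o^d = p_\mathcal{O}(o)\, u_o^d$, setting $\alpha_o := p'_\mathcal{O}(o)/p_\mathcal{O}(o)$ for any second full-support distribution $p'_\mathcal{O}$ yields $(w')_o^d = p'_\mathcal{O}(o)\, u_o^d$, i.e.\ the same utility array $u$ equipped with the new input distribution $p'_\mathcal{O}$. Thus \cref{fac:scale} reduces to producing a single utility array together with two full-support input distributions, one gapped and one gapless.

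The hedge-or-not problem supplies exactly such a pair. I would take the utility array of \cref{eq:hon_utility} with $\beta = 0$ and an independent Bernoulli input of parameter $p$; as recalled in \cref{subsec:hft}, this problem is gapped precisely when $p \in (1 - \tfrac{1}{\sqrt 2}, \tfrac{1}{\sqrt 2})$. I would fix $p = \tfrac 1 2$, which lies in this interval, so that the associated array $w$ lies in $\mathcal G$. I would then choose a target parameter outside the interval, say $p' = \tfrac{9}{10} > \tfrac{1}{\sqrt 2}$, and define the scaling factors as the ratios of the two product Bernoulli distributions:
\begin{align*}
\alpha_{N,N} = \frac{(1-p')^2}{(1-p)^2},\quad \alpha_{N,I} = \alpha_{I,N} = \frac{p'(1-p')}{p(1-p)}, \quad \alpha_{I,I} = \frac{(p')^2}{p^2}.
\end{align*}
These are all strictly positive, and by the reduction above $w'$ is the same hedge-or-not array with Bernoulli input of parameter $p' = \tfrac{9}{10}$. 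Since $\tfrac{9}{10} \notin (1 - \tfrac{1}{\sqrt 2}, \tfrac{1}{\sqrt 2})$, the array $w'$ is gapless, i.e.\ $w' \notin \mathcal G$, which is exactly the assertion of \cref{fac:scale}.

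The only nontrivial ingredient is the sharp threshold for the hedge-or-not problem, and this is where I expect the real work to sit. The identification of per-observation scaling with input reweighting is immediate, but one still needs the gap to vanish \emph{exactly} (not merely become small) once $p$ crosses $\tfrac{1}{\sqrt 2}$. I would either invoke the analytical solution already quoted in \cref{subsec:hft}, or, for a self-contained argument, compute $c^*$ directly as a maximum over the finitely many deterministic behaviors and bound $q^*$ via the largest eigenvalue of the Bell operator in \cref{eq:bell_op}, showing the two coincide for $p \ge \tfrac{1}{\sqrt 2}$. A cruder alternative avoiding the exact threshold concentrates almost all of the input weight on one observation — for a single effective observation per party no quantum advantage is possible — but a continuity argument of this kind only drives the gap to zero in the limit, so recovering a genuinely gapless instance still needs the sharp threshold. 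Hence the hedge-or-not analysis is the crux.
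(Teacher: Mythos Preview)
Your proposal is correct and follows essentially the same route as the paper: the paper's own proof of \cref{fac:scale} is precisely the Bernoulli-input CHSH analysis (carried out in full in \cref{subsec:qadv}), yielding the sharp threshold $p\in(1-\tfrac{1}{\sqrt2},\tfrac{1}{\sqrt2})$, and your hedge-or-not instance with $\beta=0$ is the anti-CHSH game, which has the identical gap by \cref{lem:perm}. Your identification of the threshold computation as the only substantive step matches the paper exactly.
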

\noindent The example is again CHSH, but this time we will make use of the correlation form:
 \begin{align*}
     \frac 1 4 \langle A_0 B_0 \rangle + \frac 1 4 \langle A_0 B_1 \rangle + \frac 1 4 \langle A_1 B_0 \rangle - \frac 1 4 \langle A_1 B_1 \rangle .
 \end{align*}
\noindent The scaling we choose is a natural scenario in which the input distribution for each party is i.i.d.\ according to a Bernoulli distribution with parameter $p \in [0,1]$ instead of the uniform distribution as is usually assumed for the CHSH game. In this case, we want to instead compute
\begin{align*}
     (1-p)^2 \langle A_0 B_0 \rangle + p(1-p) \langle A_0 B_1 \rangle + p(1-p) \langle A_1 B_0 \rangle -p^2\langle A_1 B_1 \rangle .
\end{align*}
For conciseness, we will denote the correlations with $a, b, c, d \in [-1,1]$, respectively. Then, the quantum value is given by~\cite{cirel1980quantum,tsirelson1987problems,landau1988empirical,masanes2003necessary}
\begin{align}
\label{eq:chsh_scaled}
    \max_{a,b,c,d \in [-1,1]} (1-p)^2 a + p(1-p) b + p(1-p) c - p^2 d
\end{align}
subject to 
\begin{align}
\label{eq:arcsin}
    | \arcsin a + \arcsin b +\arcsin c - \arcsin d| \leq \pi
\end{align}
and its possible permutations of $a,b,c,d$.
We will compute this in full generality. 
First, we see that if~\Cref{eq:arcsin} is strictly satisfied, we can always increase $a,b,c$ and decrease $d$ until
\begin{align}
\label{eq:chsh_scaled_cons}
    \arcsin a + \arcsin b + \arcsin c - \arcsin d = \pi
\end{align}
and the expression in~\Cref{eq:chsh_scaled} can only increase. Thus, it is sufficient to consider the equality condition~\Cref{eq:chsh_scaled_cons}. We will also see that this is sufficient to satisfy all other permuted versions of~\Cref{eq:arcsin}, so the result must be the maximum.
We make use of the method of Lagrange multipliers:
\begin{align*}
    \mathcal L = (1-p)^2 a + p(1-p)b + p(1-p) c - p^2 d - \lambda (\arcsin a + \arcsin b + \arcsin c - \arcsin d - \pi).
\end{align*}
We solve for the stationary points:
\begin{align*}
    \frac{\partial \mathcal L}{\partial a} = (1-p)^2 -  \lambda \frac{1}{\sqrt{1-a^2}} = 0,
\end{align*}
so
\begin{align*}
    a = \pm \sqrt{1- \frac{\lambda^2}{(1-p)^4}}.
\end{align*}
Similarly,
\begin{align*}
    b = \pm \sqrt{1  - \frac{\lambda^2}{p^2(1-p)^2}},
    c = \pm \sqrt{1  - \frac{\lambda^2}{p^2(1-p)^2}},
    d = \pm \sqrt{1  - \frac{\lambda^2}{p^4}}.
\end{align*}
In particular, we see that $b  = \pm c$. If $b = -c$, the correlation expression becomes
\begin{align*}
    (1-p)^2 a - p^2 d \leq (1-p)^2 + p^2.
\end{align*}
The RHS can be obtained by the classical behavior where $a = 1, b = 1, c = -1, d = -1$, which also does not attain the classical value (It is clear that depending on the value of $p$, we either want to set either the first or last term in~\Cref{eq:chsh_scaled} to be negative.). Thus, WOLOG, we will set $c$ to equal $b$.
Thus, we have the simpler optimization
\begin{align*}
    \max_{a,b,d \in [-1,1]} C,
\end{align*}
where
\begin{align}
\label{eq:chsh_bern}
    C:=(1-p)^2 a + 2 p (1-p)b - p^2 d
\end{align}
and
\begin{align}
\label{eq:chsh_bern_cons}
    \arcsin a + 2 \arcsin b - \arcsin d = \pi.
\end{align}
Let
 \begin{align*}
     \sgn(x) =
     \begin{cases}
         -1 & x< 0\\
         0 & x= 0\\
         1 & x>0
     \end{cases}
 \end{align*}
 be the sign function.
 We consider the following cases
 \begin{enumerate}
     \item $\sgn(a) \sgn(d) \geq 0$: Then, taking the cosine of both sides of
     \begin{align*}
         \arcsin a -\arcsin d = \pi - 2\arcsin b,
     \end{align*}
     we obtain
     \begin{align*}
        & \cos \arcsin \sqrt{1- \frac{\lambda^2}{(1-p)^4}} \cos \arcsin \sqrt{1- \frac{\lambda^2}{p^4}} + \sqrt{1- \frac{\lambda^2}{(1-p)^4}} \sqrt{1- \frac{\lambda^2}{p^4}}  \\
        & = - \cos 2\arcsin \sqrt{1-\frac{\lambda^2}{p^2(1-p)^2}}
     \end{align*} 
     since cosine is an even function. We simplify this equation to get
     \begin{align}
     \label{eq:lambda}
         \frac{\lambda}{(1-p)^2} \frac{\lambda}{p^2} + \sqrt{1- \frac{\lambda^2}{(1-p)^4}} \sqrt{1- \frac{\lambda^2}{p^4}} = 2 (1 - \frac{\lambda^2}{p^2(1-p)^2}) - 1.
     \end{align}
     We plug this into Mathematica~\cite{mathematica} and obtain
     \begin{align*}
       \lambda = 0 , \pm \frac{1}{2\sqrt{2}} \sqrt{(2p^2-1)(2p^2-4p+1)} =: \pm \lambda^*.
     \end{align*}
     The $\lambda =0$ solutions are classically attainable solutions, which might be optimal for certain values of $p$. 
     
     \item $\sgn(a) \sgn(d) < 0$: 
     We follow similar steps to get
     \begin{align*}
       \lambda = \pm \lambda^*.
     \end{align*}
     
 \end{enumerate}
We need to check what conditions on $p \in [0,1]$ guarantee that $\lambda^* \in \mathbb{R}$. We see that this is true if
\begin{align*}
    p \in [1- \frac{1}{\sqrt{2}}, \frac{1}{\sqrt{2}}].
\end{align*}
We can easily check that $(\lambda^*)^2 \leq p^4, (1-p)^4$ and so is $\leq p^2(1-p)^2$, their geometric mean. This ensures that $a,b,d \in \mathbb{R}$, which means when $\lambda^* \in \mathbb R$, we obtain feasible solutions.

Now, we observe that to satisfy~\Cref{eq:chsh_bern_cons}, we must have at least one of $a,b \geq 0$. Furthermore, at least one of $b,d \geq 0$. We therefore analyze 5 cases. We will see that in every case where $\pm \lambda^*$ is feasible, 
\begin{align}
\label{eq:dsmall}
    \arcsin d \leq \arcsin a, \arcsin b.
\end{align}
Since $2 (\arcsin x - \arcsin y) \geq -2\pi$ for all $x,y$ and~\Cref{eq:chsh_scaled_cons} holds, all permutations of~\Cref{eq:arcsin} are satisfied.
 \begin{enumerate}
     \item $a,b,d \geq 0$: Then $\sgn(a) \sgn(d) \geq 0$. Hence, 
     \begin{align*}
     C\vert_{\lambda = \pm \lambda^*} =  \frac{1}{2\sqrt{2}}\left(|2p^2-6p+3|+ 4p^2 - 4p +2  - |2p^2+2p-1|\right),
     \end{align*}
     which is feasible when $p\in [1-\frac{1}{\sqrt{2}}, \frac{-1+\sqrt{3}}{2}]$ by checking if~\Cref{eq:chsh_bern_cons} is satisfied. Note that the three-term expression is written in the same order as~\Cref{eq:chsh_bern}. In this range for $p$, we can further simplify this to
     \begin{align*}
         C \vert_{\lambda = \pm \lambda^*} = \sqrt{2} [1-2p(1-p)].
     \end{align*}
     Note that since $p \leq \frac{-1+\sqrt{3}}{2} < 0.5$, 
     $\vert a \vert \geq \vert b\vert \geq \vert d\vert.$
     As $a,b,d$ are all non-negative,~\Cref{eq:dsmall} holds. 
     There is also the classical solution which is possible for all $p \in [0,1]$:
     \begin{align*}
         C_{\lambda = 0} = -2p^2+1.
     \end{align*}
     \item $a,b \geq 0$, $d <0$: 
     \begin{align*}
         C \vert_{\lambda = \pm \lambda^*} = \sqrt{2} [1-2p(1-p)],
     \end{align*}
     feasible when $p \in [\frac{-1+\sqrt{3}}{2}, \frac{3-\sqrt{3}}{2}]$. Due to the signs of $a,b,d$ we clearly have~\Cref{eq:dsmall}.
     \item $a,d \geq 0$, $b <0$: One can check the $\lambda^*$ solution is not feasible for any $p \in [1-\frac{1}{\sqrt{2}},\frac{1}{\sqrt{2}}]$. The classical solution in this case is
     \begin{align*}
         C\vert_{\lambda=0} = 2p^2-4p+1.
     \end{align*}
     \item $b,d\geq 0$, $a<0$: One can check the $\lambda^*$ solution is not feasible for any $p \in [1-\frac{1}{\sqrt{2}},\frac{1}{\sqrt{2}}]$.
     \item $b \geq 0$, $a,d<0$: 
     \begin{align*}
         C \vert_{\lambda = \pm \lambda^*} = \sqrt{2} [1-2p(1-p)],
     \end{align*}
     feasible when $p \in [\frac{3-\sqrt{3}}{2},\frac{1}{\sqrt{2}}]$. Since $p \geq \frac{3-\sqrt{3}}{2} > 0.5$, 
         $\vert d \vert \geq \vert b \vert \geq \vert a \vert.$
     Due to the signs of $a,b,d$, we again can conclude~\Cref{eq:dsmall}.
     The classical solution is
     \begin{align*}
         C_{\lambda=0} = -2p^2+4p-1.
     \end{align*}
\end{enumerate}
 
\noindent We combine the above results and compare the $\lambda = \pm\lambda^*$ and $\lambda =0$ solutions and find the former solutions do better for $p \in (1- \frac{1}{\sqrt{2}}, \frac{1}{\sqrt{2}})$. Outside of this range the quantum value equals the classical value, the latter of which can be easily computed. Since the CHSH inequality with an independent Bernoulli distributed input is a natural generalization of the original inequality, we summarize our result as a theorem.
\begin{thm}
    Consider the utility array $w_{\text{CHSH},p}$ for the CHSH game with each input distribution being an independent Bernoulli random variable with probability $p$. Then, the classical value is given by
    \begin{align*}
        c^* =
    \begin{cases}
        1-p^2 & 0 \leq p \leq \frac 1 2\\
        -p^2 +2p & \frac 1 2 \leq p \leq 1.
    \end{cases}
    \end{align*}
    while the quantum value is given by
    \begin{align*}
        q^* =
        \begin{cases}
            1-p^2 & 0 \leq p \leq 1 -\frac{1}{\sqrt{2}}\\
            \frac{1}{\sqrt{2}}[1-2p(1-p)] + \frac 1 2 & 1- \frac{1}{\sqrt{2}} \leq p \leq \frac{1}{\sqrt{2}}\\
            -p^2 +2p & \frac{1}{\sqrt{2}} \leq p \leq 1.
        \end{cases}
    \end{align*}
    In particular, $w_{\text{CHSH},p} \in \mathcal G$ when $p \in (1-\frac{1}{\sqrt{2}}, \frac{1}{\sqrt{2}})$ and $\not \in \mathcal G$ otherwise.
\end{thm}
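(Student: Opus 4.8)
The plan is to reduce everything to the correlation (Bell-expression) form already set up above, by observing that for \emph{any} behavior the expected utility of $w_{\text{CHSH},p}$ is an affine image of the bracketed correlation expression:
\[
\bar u = \tfrac12 + \tfrac12\Big[(1-p)^2\langle A_0B_0\rangle + p(1-p)\langle A_0B_1\rangle + p(1-p)\langle A_1B_0\rangle - p^2\langle A_1B_1\rangle\Big].
\]
This follows by writing the CHSH utility as $u_o^d = \tfrac12\big(1+(-1)^{o_1\wedge o_2}(-1)^{d_1\oplus d_2}\big)$ and taking expectations against the Bernoulli input, so that the constant $\tfrac12$ splits off and the strategy-dependent remainder is exactly the scaled correlation expression whose classical and quantum optima were analyzed above. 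Hence both $c^*$ and $q^*$ are obtained by optimizing the bracketed quantity $C$ over the relevant set of correlations and then applying the map $C\mapsto\tfrac12+\tfrac12 C$.

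For the classical value I would optimize over deterministic strategies. Each such strategy assigns $A_0,A_1,B_0,B_1\in\{-1,+1\}$, yielding correlations $(a,b,c,d)=(A_0B_0,A_0B_1,A_1B_0,A_1B_1)$ that always obey the parity constraint $abcd=1$, and conversely every sign tuple with $abcd=1$ is realizable. Enumerating the eight admissible tuples and evaluating $(1-p)^2 a + p(1-p)(b+c) - p^2 d$, the maximum is attained either at $(a,b,c,d)=(1,1,1,1)$, giving $1-2p^2$, or at $(a,b,c,d)=(-1,1,1,-1)$, giving $-1+4p-2p^2$, the remaining tuples being dominated. Comparing these two as functions of $p$ shows the former is larger for $p\le\tfrac12$ and the latter for $p\ge\tfrac12$, with crossover at $p=\tfrac12$; applying $C\mapsto\tfrac12+\tfrac12 C$ then yields $c^*=1-p^2$ and $c^*=2p-p^2$ on the two ranges.

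For the quantum value I would invoke the Tsirelson characterization $|\arcsin a+\arcsin b+\arcsin c-\arcsin d|\le\pi$ together with the Lagrange-multiplier computation carried out above, which (after the reduction to $c=b$) produces the stationary multipliers $\lambda=0$, corresponding to classical solutions, and $\lambda=\pm\lambda^*$. The nonclassical branch is real and feasible exactly when $\lambda^*\in\R$, i.e.\ when $p\in[1-\tfrac1{\sqrt2},\tfrac1{\sqrt2}]$, and there it attains $C=\sqrt2\,[1-2p(1-p)]$; outside this interval only the classical branch survives, so $q^*=c^*$. Mapping through $C\mapsto\tfrac12+\tfrac12 C$ gives the stated three-piece formula for $q^*$. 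The gapped characterization then follows by subtraction, $g=q^*-c^*=\tfrac12(C^*_{\mathrm q}-C^*_{\mathrm c})$: on the open interval one checks $\sqrt2\,[1-2p(1-p)]$ strictly exceeds both classical candidates, with equality precisely at the endpoints $p=1-\tfrac1{\sqrt2}$ and $p=\tfrac1{\sqrt2}$, so $w_{\text{CHSH},p}\in\mathcal G$ iff $p\in(1-\tfrac1{\sqrt2},\tfrac1{\sqrt2})$.

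I expect the main obstacle to be the quantum step rather than the classical enumeration or the final subtraction, which are routine. Specifically, one must certify that the $\lambda^*$ stationary point is a global maximizer and not merely a saddle, track which of the five sign cases is \emph{simultaneously} feasible (satisfying the equality constraint $\arcsin a+2\arcsin b-\arcsin d=\pi$) and optimal on each subinterval of $p$, and confirm the exact feasibility transition at $p=1-\tfrac1{\sqrt2}$ and $p=\tfrac1{\sqrt2}$ where the nonclassical branch merges into the classical one and pins down the open interval.
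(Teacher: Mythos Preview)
Your proposal is correct and follows essentially the same route as the paper: the paper's proof \emph{is} the Lagrange-multiplier analysis preceding the theorem statement, and you are invoking it directly, supplementing only the explicit affine bridge $\bar u = \tfrac12 + \tfrac12 C$ and a cleaner deterministic enumeration (the paper is terser on both points, remarking only that one ``want[s] to set either the first or last term \ldots\ to be negative''). The sign-case feasibility tracking you flag as the main obstacle is precisely the five-case analysis the paper carries out before stating the theorem.
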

\noindent Hence, for $p \not \in (1- \frac{1}{\sqrt{2}}, \frac{1}{\sqrt{2}})$, we obtain~\Cref{fac:scale}. For convenience, we also provide the numerical values for the bounds: 
\begin{align*}
        q^* =
        \begin{cases}
            1-p^2 & 0 \leq p \leq 0.293\ldots\\
            \frac{1}{\sqrt{2}}[1-2p(1-p)] + \frac 1 2 & 0.293\ldots \leq p \leq 0.707\ldots\\
            -p^2 +2p & 0.707\ldots \leq p \leq 1.
        \end{cases}
    \end{align*}
We can check that indeed for $p = \frac 1 2$,
\begin{align*}
    q^* = \frac{1}{\sqrt{2}}(1-\frac 1 2)+\frac 1 2 = \frac{1+\sqrt{2}}{2\sqrt{2}} = \cos^2(\frac \pi 8)
\end{align*}
as expected. 


\subsection{XOR Arrays}
An interesting class of TC problems have utilities that only depend on the XOR of the decisions. We make the following definition.
\begin{dfn}
    Suppose we have a TC problem where $\forall i, \mathcal D_i = \{0,1\}$. An array $m$ indexed by $o \in \mathcal O, d \in \mathcal D$ is an \textbf{XOR array} if 
    \begin{align*}
        m_o^d = f(o, \bigoplus_{i=1}^n d_i) \in \mathbb{R}.
    \end{align*}
\end{dfn}
\noindent We will call TC problems whose utility array is an XOR array an \textbf{XOR problem}.  Note that this implies the weighted utility array is also an XOR array.
Such problems are equivalent to correlation expressions~\cite{brunner2014bell} and the quantum values can be explicitly computed via a semidefinite program (SDP)~\cite{cleve2004consequences,wehner2006tsirelson}. 

An interesting observation is that we can conclude $g(\bar w_\text{CHSH}) = g(w_\text{CHSH})$ via~\Cref{lem:perm} applied to $w_\text{CHSH}$ and setting $\pi_1$ be a bit flip while letting all else be equal. We can generalize this observation by making the following definition.
\begin{dfn}
    Let $m$ be an XOR array, where $m_o^d = f(o, \bigoplus_{i=1}^n d_i)$. Then, define $\bar m$ as the \textbf{anti-array} that has elements
    \begin{align*}
        \bar m_o^d := f(o, \lnot \bigoplus_{i=1}^n d_i),
    \end{align*}
    where $\lnot x$ denotes the bit flip of $x$.
\end{dfn}
\noindent Intuitively, if $w$ is a nonlocal XOR game, then $\bar w$ is the same game with the opposite winning conditions. We can now easy establish the following
\begin{prp}
    Suppose $w$ is an XOR array. Then, $g(\bar w) =g(w)$. 
\end{prp}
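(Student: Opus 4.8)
The plan is to obtain $\bar w$ from $w$ by a single decision-relabeling and then invoke \cref{lem:perm}, which already guarantees that relabelings of observations and decisions preserve the gap. The key structural fact about XOR arrays is that flipping the decision bit of a single party flips the total parity $\bigoplus_{i=1}^n d_i$, so precomposing $w$ with a bit flip on one decision alphabet reproduces exactly the anti-array. This is the same trick used earlier to observe $g(\bar w_\text{CHSH}) = g(w_\text{CHSH})$, now carried out for a general XOR array.

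Concretely, first I would specialize the permutations in \cref{lem:perm} as follows: take $\pi_i = \mathrm{id}$ for all $i$, take $\sigma_1$ to be the nontrivial permutation of $\mathcal D_1 = \{0,1\}$ (the bit flip, $\sigma_1(d_1) = \lnot d_1$), and take $\sigma_i = \mathrm{id}$ for $i \geq 2$. The array $v$ produced by \cref{lem:perm} then has entries $v_o^d = w_o^{\lnot d_1, d_2, \ldots, d_n}$. Next I would verify that $v = \bar w$: since $w$ is an XOR array, $w_o^{\lnot d_1, d_2, \ldots, d_n} = f\bigl(o, (\lnot d_1) \oplus d_2 \oplus \cdots \oplus d_n\bigr)$, and flipping a single input bit flips the XOR, i.e.\ $(\lnot d_1) \oplus d_2 \oplus \cdots \oplus d_n = \lnot \bigl(d_1 \oplus d_2 \oplus \cdots \oplus d_n\bigr)$, so this equals $f(o, \lnot \bigoplus_{i=1}^n d_i) = \bar w_o^d$. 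Hence $v = \bar w$, and \cref{lem:perm} gives $g(\bar w) = g(v) = g(w)$.

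The ``hard part'' here is essentially nonexistent: the only nontrivial ingredient is the parity identity that flipping one decision bit flips the global XOR, which is precisely the defining feature making the anti-array a mere relabeling rather than a genuinely different problem. Everything else is an immediate specialization of the already-established \cref{lem:perm}, so the proof reduces to choosing the right permutation and checking this one-line parity computation.
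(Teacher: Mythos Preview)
Your proposal is correct and matches the paper's proof essentially line for line: the paper also flips the first party's decision bit, verifies via the parity identity that this turns $w$ into $\bar w$, and invokes \cref{lem:perm}. If anything, your notation is cleaner, since you use $\sigma_1$ for the decision permutation in accordance with the statement of \cref{lem:perm}, whereas the paper's proof writes $\pi_1$ for that bit flip.
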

\begin{proof}
    Let $\pi_1$ be the bit flip permutation on $\{0,1\}$. Then,
    \begin{align*}
        \bar w_o^d = f(o, \lnot \bigoplus_{i=1}^n d_i) = f(o, \pi_1(d_1) \oplus \bigoplus_{i=2}^n d_i) = w_o^{\pi_1(d_1), d_2, d_3, \cdots, d_n}.
    \end{align*}
    Hence, by~\Cref{lem:perm}, the conclusion follows.
\end{proof}
\noindent Intuitively, for XOR problems, the procedure of reversing the winning conditions is equivalent to local index permutation of the utility array, thereby preserving the gap. 


A central result for XOR problems with two parties is the following theorem by Tsirelson.
\begin{thm}[Tsirelson~\cite{cirel1980quantum}]
\label{thm:tsirelson}
    The following four conditions for real numbers $c_{kl}$, $k = 1,\cdots ,m$, $l = 1, \cdots, n$ are equivalent.
    \begin{enumerate}
        \item There are a C$^*$ algebra $\mathcal A$ with identity $I$, Hermitian $A_1,\cdots, A_m,B_1,\cdots,B_n \in \mathcal A$, and a state $f$ on $\mathcal A$ such that, for every $k,l$,
        \begin{align*}
            A_k B_l = B_l A_k; \quad -I \leq A_k \leq I; \quad -I \leq B_l \leq I; \quad f(A_k B_l) = c_{kl}.
        \end{align*}
        \item There are Hermitian operators $A_1, \cdots, A_m, B_1, \cdots B_n$ and a density matrix $\rho$ in a Hilbert space $\mathcal H$ such that, for every $k,l$,
        \begin{align*}
            A_k B_l = B_l A_k; \quad \mathrm{Spec}(A_k) \subseteq [-1,+1]; \quad \mathrm{Spec}(B_l)\subseteq [-1,+1]; \quad \tr[A_k B_l \rho] = c_{kl}.
        \end{align*}
        \item The same as 2 and in addition for every $k,l$; and $\mathcal H = \mathcal H_1 \otimes \mathcal H_2$, $A_k = A_k^{(1)} \otimes I_2$ and $B_l = I_1 \otimes B_l^{(2)}$ act on $\mathcal H_1$ and $\mathcal H_2$, respectively; furthermore
        \begin{align*}
         (A_k^{(1)})^2 = I_1, (B_l^{(2)})^2 = I_2, \tr[(A_k^{(1)} \otimes I_2) \rho]=0, \tr[(I_1 \otimes B_l^{(2)}) \rho]=0;
        \end{align*}
        besides that 
        \begin{align*}
            \{A_k^{(1)}, A_{k'}^{(1)}\} \propto I_1, \{B_l^{(2)}, B_{l'}^{(2)}\} \propto I_2;
        \end{align*} 
        $\mathcal H_1, \mathcal H_2$ are finite dimensional, obeying
        \begin{align*}
            \mathrm{dim}(\mathcal H_1) \leq 
            \begin{cases}
                2^{\frac m 2} & m \text{ even}\\
                2^{\frac{m+1}{2}} & m \text{ odd}
            \end{cases}
        \end{align*}
        and similarly for $\mathcal H_2$.
        \item There are unit vectors $x_1,\cdots,x_m,y_1,\cdots,y_n \in \mathbb{R}^{m+n}$ such that, for every $k,l$,
        \begin{align*}
            \langle x_k, y_l\rangle = c_{kl}.
        \end{align*}
    \end{enumerate}
\end{thm}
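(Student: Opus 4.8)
The plan is to prove the four conditions equivalent by running the cycle $1 \Rightarrow 2 \Rightarrow 4 \Rightarrow 3 \Rightarrow 2 \Rightarrow 1$, in which the passages between the $C^*$-algebraic formulation (1) and the Hilbert-space formulation (2) are soft, the implication $3 \Rightarrow 2$ is a trivial specialization, and the substance lives in $2 \Rightarrow 4$ and $4 \Rightarrow 3$. For $1 \Rightarrow 2$ I would invoke the GNS construction: the state $f$ on $\mathcal A$ yields a representation $\pi$ on a Hilbert space $\mathcal H$ with cyclic vector $\Omega$ such that $f(X) = \langle \Omega, \pi(X)\Omega\rangle$; setting $\rho := \ketbra{\Omega}$ and replacing each $A_k, B_l$ by $\pi(A_k), \pi(B_l)$ transports commutation, the spectral bounds, and the correlation values $c_{kl}$ verbatim into (2). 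The converse $2 \Rightarrow 1$ is immediate by taking $\mathcal A = B(\mathcal H)$ with $f(\cdot) = \tr[\,\cdot\,\rho]$, and $3 \Rightarrow 2$ holds because (3) is literally an instance of (2) carrying extra structure.

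The first real step is $2 \Rightarrow 4$, which I would handle by passing to the Hilbert–Schmidt space of operators with inner product $\langle X, Y\rangle = \tr[X^\dagger Y]$ and setting $x_k := A_k \rho^{1/2}$ and $y_l := B_l \rho^{1/2}$. Commutativity $A_k B_l = B_l A_k$ together with cyclicity of the trace gives $\langle x_k, y_l\rangle = \tr[\rho^{1/2} A_k B_l \rho^{1/2}] = \tr[A_k B_l \rho] = c_{kl}$, while $A_k^2 \leq I$ yields $\|x_k\|^2 = \tr[A_k^2 \rho] \leq 1$, and similarly for the $y_l$. These vectors are only of norm at most one, so I would pad them with mutually orthogonal extra coordinates, disjoint between the $x$'s and the $y$'s so that the cross inner products are untouched, upgrading them to genuine unit vectors. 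Since all $c_{kl}$ are real, a real realization in $\R^{m+n}$ exists by standard PSD factorization once the associated block Gram matrix with the chosen unit diagonals is verified to be positive semidefinite of rank at most $m+n$.

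The main obstacle, and the genuinely constructive step, is $4 \Rightarrow 3$, which I would attack with a Clifford-algebra (spin) representation. Fix Hermitian matrices $\gamma_1, \dots, \gamma_{m+n}$ with $\{\gamma_i, \gamma_j\} = 2\delta_{ij}\, I$, and set $A_k^{(1)} := \sum_i (x_k)_i \gamma_i$ on $\mathcal H_1$ and $B_l^{(2)} := \sum_j (y_l)_j \overline{\gamma_j}$ on $\mathcal H_2$. The anticommutation relations instantly force $(A_k^{(1)})^2 = \|x_k\|^2 I = I_1$, then $\{A_k^{(1)}, A_{k'}^{(1)}\} = 2\langle x_k, x_{k'}\rangle I_1 \propto I_1$, and $\tr A_k^{(1)} = 0$ because each $\gamma_i$ anticommutes with some $\gamma_j$ and is therefore traceless, with the same three identities holding for the $B_l^{(2)}$. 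Taking $\ket{\psi}$ to be the maximally entangled state and using the transpose trick together with $\tfrac1D \tr[\gamma_i \gamma_j] = \delta_{ij}$ gives $\langle \psi | A_k^{(1)} \otimes B_l^{(2)} | \psi\rangle = \langle x_k, y_l\rangle = c_{kl}$, while the maximally mixed marginals of $\rho = \ketbra{\psi}$ supply the vanishing one-party expectations.

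The delicate point I expect to fight with is the sharp dimension bound $\dim \mathcal H_1 \leq 2^{\lceil m/2\rceil}$ (and its analogue for $\mathcal H_2$). The clean construction above uses a single shared $2^{\lfloor (m+n)/2\rfloor}$-dimensional Clifford module on both factors, which gives a symmetric bound depending on $m+n$ rather than on $m$ and $n$ separately; one cannot simply project the $x_k$ onto their span and the $y_l$ onto theirs, since the inner products $\langle x_k, y_l\rangle$ couple the two spans and the transpose trick requires equal factor dimensions. Obtaining the stated separate bounds therefore requires invoking the representation theory of the (complex) Clifford algebras $\mathbb{C}l_m$ and $\mathbb{C}l_n$ and selecting minimal modules on each side while preserving all the identities above, and this accounting is the technical crux of the argument. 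Closing the cycle with the trivial $3 \Rightarrow 2 \Rightarrow 1$ then yields the full equivalence.
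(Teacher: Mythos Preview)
The paper does not contain a proof of this statement: Theorem~\ref{thm:tsirelson} is quoted as a known result of Tsirelson, attributed to~\cite{cirel1980quantum}, and used as a black box (e.g.\ to obtain Corollary~\ref{cor:nondegenerate}). There is therefore nothing in the paper to compare your proposal against.

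On its own merits, your sketch tracks Tsirelson's original argument closely. The soft equivalences $1 \Leftrightarrow 2$ via GNS and $3 \Rightarrow 2$ are fine. For $2 \Rightarrow 4$, your Hilbert--Schmidt embedding $x_k = A_k \rho^{1/2}$, $y_l = B_l \rho^{1/2}$ and the padding to unit norm are correct; one small point you glossed is that the within-block inner products $\langle x_k, x_{k'}\rangle = \tr[A_k A_{k'}\rho]$ need not be real, so to land in $\R^{m+n}$ you should replace the full Gram matrix by its real part (which is still PSD and leaves the already-real cross block $c_{kl}$ untouched) before factoring. For $4 \Rightarrow 3$, the Clifford construction with the maximally entangled state is exactly the right mechanism, and you are also right that the naive version only yields a bound in $m+n$.

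The gap you flag is genuine and is precisely where the work in Tsirelson's paper lies. The resolution is not to put both families of vectors into a common Clifford module but to pick separate orthonormal bases $e_1,\dots,e_m$ for a space containing $\mathrm{span}\{x_k\}$ and $f_1,\dots,f_n$ for one containing $\mathrm{span}\{y_l\}$, build $A_k^{(1)}$ from $m$ generators on $\mathcal H_1$ and $B_l^{(2)}$ from $n$ generators on $\mathcal H_2$, and then absorb the overlap matrix $(\langle e_i, f_j\rangle)_{i,j}$ (which has operator norm at most $1$) into the choice of the bipartite state $\rho$ rather than into the operators. That last step is what forces one to leave the maximally entangled state behind and is the part you would still need to supply; since the paper itself simply cites~\cite{cirel1980quantum}, consulting that reference for the explicit state construction is the appropriate route.
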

\noindent The set $\{c_{kl}\}_{k,l}$ is called a \textbf{quantum correlation matrix}. In terms of the TC problem formalism, we can relate a quantum correlation matrix $c_{kl}$ corresponding to the quantum behavior $p_o^d$ via
\begin{align}
\label{eq:q_corr}
    c(p_o^d)_{kl} := (p_{(k,l)}^{(0,0)} + p_{(k,l)}^{(1,1)}) - (p_{(k,l)}^{(0,1)} + p_{(k,l)}^{(1,0)}) = 2 p( \mathrm{XOR}=0 \vert (k,l)) - 1.
\end{align}
Note for XOR problems the expected utility of a quantum behavior $p_o^d$ only depends on the quantities $p(\mathrm{XOR}=0\vert (k,l))$. Furthermore, we can relate the operators $A_k^{(1)}, B_l^{(2)}$ with the measurement operators $\Pi_i^{(d_i \vert o_i)}$ by
\begin{align*}
    A_k^{(1)} = 2 \Pi_1^{(0 \vert k)} - I_1, B_l^{(2)} = 2 \Pi_2^{(0 \vert l)} - I_2.
\end{align*}

We make the following definition.
\begin{dfn}
    Define a projection operator $\Pi$ as \textbf{trivial} if $\Pi$ is the zero or identity operator. We define a quantum strategy to be \textbf{degenerate} if one party uses a trivial measurement operator. Otherwise, we call it \textbf{non-degenerate}.
\end{dfn}
\noindent Note that if a party with only two possible decisions uses a trivial measurement operator, they are locally implementing a deterministic strategy. We can thereby obtain the following corollary.
\begin{cor}
\label{cor:nondegenerate}
    Any possible quantum correlation matrix $c_{kl}$ can be realized using a non-degenerate quantum strategy involving two parties. Moreover, such a strategy can attain the quantum value of any two-party XOR problem.
\end{cor}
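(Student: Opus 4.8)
The plan is to read the statement off the structure of condition~3 in \cref{thm:tsirelson}. I would first reduce the ``moreover'' clause to the main claim: for a two-party XOR problem the expected utility of a quantum behavior depends on the measurement operators only through the correlation matrix $c_{kl}$, via $p(\mathrm{XOR}=0\vert(k,l))$ as in \cref{eq:q_corr}. Consequently, once I know that every quantum correlation matrix is realizable non-degenerately, I can apply this to the correlation matrix of a value-optimal quantum strategy and obtain a non-degenerate strategy with the same $c_{kl}$, hence the same expected utility, hence attaining the quantum value.

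For the main claim, fix a quantum correlation matrix $c_{kl}$. Any quantum strategy realizing it furnishes condition~2 of \cref{thm:tsirelson}, so by the theorem there is an equivalent representation satisfying condition~3: finite-dimensional $A_k^{(1)}, B_l^{(2)}$ with $(A_k^{(1)})^2 = I_1$, $(B_l^{(2)})^2 = I_2$ and vanishing marginals $\tr[(A_k^{(1)} \otimes I_2)\rho] = 0 = \tr[(I_1 \otimes B_l^{(2)})\rho]$. I would invert the dictionary $A_k^{(1)} = 2\Pi_1^{(0\vert k)} - I_1$, $B_l^{(2)} = 2\Pi_2^{(0\vert l)} - I_2$ by setting $\Pi_1^{(0\vert k)} := (I_1 + A_k^{(1)})/2$, which $(A_k^{(1)})^2 = I_1$ certifies to be a projection. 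The key point is that a reflection $A$ with $\tr[A\rho_1] = 0$ (for the reduced state $\rho_1$) splits its $\pm1$-eigenspaces with equal weight, so $A \neq \pm I$; therefore $\Pi_1^{(0\vert k)}$ is neither $0$ nor $I_1$, and likewise $\Pi_2^{(0\vert l)}$ is nontrivial. By the definition of degeneracy the resulting strategy is non-degenerate, and by construction it reproduces $c_{kl}$.

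The one point requiring care is that condition~3 is phrased with a density matrix $\rho$, whereas \cref{dfn:q_strategy} asks for a pure shared state. Here I would invoke that the canonical realization underlying condition~3 --- the Clifford/gamma-matrix representation of the unit vectors of condition~4 --- can be taken on a maximally entangled pure state, which is moreover known to suffice for XOR problems; the operators in that representation are traceless, so the zero-marginal hypothesis holds automatically and no purification bookkeeping is needed. The main obstacle I anticipate is therefore not the existence of the representation but this translation layer: verifying that the linear change of variables between reflections and projectors preserves both the correlation matrix and non-degeneracy, and that the pure-state realization meets the dimension requirement $q_i \geq \vert\mathcal D_i\vert = 2$ of \cref{dfn:q_strategy}. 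Both are routine once the trace-zero-implies-nontrivial observation above is in hand.
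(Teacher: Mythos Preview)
Your proposal is correct and follows essentially the same route as the paper: both invoke condition~3 of \cref{thm:tsirelson}, and both use the zero-marginal condition $\tr[(A_k^{(1)}\otimes I_2)\rho]=0$ to rule out $A_k^{(1)}=\pm I_1$, thereby forcing all projectors to be nontrivial; the ``moreover'' clause is handled identically by observing that only the $c_{kl}$ enter the expected utility of an XOR problem. Your write-up is in fact more careful than the paper's, which does not address the pure-state-versus-density-matrix translation you flag.
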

\begin{proof}
    Let $\mathcal S_q$ be a quantum strategy and the $c_{kl}$ be the quantum correlation matrix realized. If one of the parties uses a trivial measurement operator, WOLOG the first party, then for some $k$, $\Pi_1^{(0 \vert k)}$ is the zero or identity operator. Then, $A_k = \pm I$. But this implies $\tr[A_k \rho] \neq 0$ for any quantum state $\rho$. Thus, measurement operators obeying condition 3 of~\Cref{thm:tsirelson} must all be nontrivial measurement operators. The first conclusion follows. Since $c_{kl}$ are all that is involved in the expected utility of an XOR problem, the second conclusion follows.
\end{proof}
\noindent Such results can help reduce the search space for numerical optimizers that search over all possible quantum strategies. 


\subsection{Behaviors with Loss}
\label{app:loss}
Here we take a look from a theoretical standpoint the phenomenon of photon loss when physically implementing a quantum strategy. As mentioned in~\Cref{subsec:typeI}, such loss is common when using photons to distribute entanglement~\cite{mermin1986new}. In general, when loss occurs for a certain party, that party can fall back to a predetermined local deterministic strategy. We can rigorize this concept using the following definition.\footnote{Note that instead of defining a semiclassical strategy as a special case of a quantum strategy, we could define it as a generalization of a classical strategy, which would then allow for the choice of trivial measurement operators to be based on observations and shared randomness. This is more general than what we define but does not lead to higher expected utilities. Furthermore, ``semiclassical'' more often implies starting from quantum and then taking the classical limit. }
\begin{dfn}
    Let $S \subseteq [n]$. Then, we define an \textbf{$S$-semiclassical strategy} as a quantum strategy where the measurement operators of the parties in $S$ are all trivial for all possible observations. That is, $\Pi_i^{(d_i \vert o_i)}$ for $i \in S$ are either the zero or identity operator. 
\end{dfn}
\noindent In other words, the parties in $S$ are each implementing a local deterministic strategy. An \textbf{$S$-semiclassical behavior} is the corresponding behavior of an $S$-semiclassical strategy. Now, in general photon loss is stochastic, therefore leading to the following definition.
\begin{dfn}
    Let $\mathcal S_q$ be a quantum strategy, $\mathcal S_d$ a deterministic strategy, and $S \subseteq [n]$. Then, we use $\mathcal S_q \sqcup_S \mathcal S_d$ to denote the $S$-semiclassical strategy obtained by modifying the quantum strategy $\mathcal S_q$ so that the each party in $S$ locally implements the deterministic strategy $\mathcal S_d$. 
    
    Next, let $\eta_i \in [0,1]$ where $i \in [n]$. Then, the \textbf{$\{\eta_i\}_i$-lossy behavior} of the tuple $(\mathcal S_q, \mathcal S_d)$ is given by 
    \begin{align}
    \label{eq:lossy_behavior}
        p_{(\mathcal S_q , \mathcal S_d); \{\eta_i\}_i}(d\vert o) := \sum_{S \subseteq [n]} \prod_{i \in S} (1-\eta_i) \prod_{j \not \in S} \eta_j \cdot p_{\mathcal S_q \sqcup_S \mathcal S_d}(d \vert o),
    \end{align}
    where $p_\mathcal{S}(d\vert o)$ denotes the behavior corresponding to a strategy $\mathcal S$.

    Finally we define $q^*(\{\eta_i\}_i)$ as the \textbf{$\{\eta_i\}_i$-lossy value} of a TC problem the maximum expected utility with respect to all possible $\{\eta_i\}_i$-lossy behaviors.
\end{dfn}
\noindent It will be useful to define the \textbf{$\{\eta_i\}_i$-lossy Bell operator} for a tuple $(\mathcal S_q, \mathcal S_d)$
\begin{align}
\label{eq:lossy_bell_op}
    \sum_{S \subseteq [n]} \prod_{i \in S} (1-\eta_i) \prod_{j \not \in S} \eta_j \cdot
    \sum_{o \in \mathcal O} p_\mathcal{O}(o) \sum_{d \in \mathcal D} \bigotimes_{i=1}^n \Pi_{i}^{(d_i\vert o_i)} (\mathcal S_q \sqcup_S \mathcal S_d) u_o^d,
\end{align}
where $\Pi_i(\mathcal S_q \sqcup_S \mathcal S_d)$ are the measurement operators used in the $S$-semiclassical strategy $\mathcal S_q \sqcup_S \mathcal S_d$, which is itself a quantum strategy. The largest eigenvalue of this operator is highest expected utility for the choice of measurement operators over all possible shared quantum states.

It is straightforward to show that the set of all quantum behaviors $\mathbf Q$ is convex~\cite{pitowsky1986range}. Since a semiclassical behavior is a quantum behavior, by the definition in~\Cref{eq:lossy_behavior} we can conclude a lossy behavior also belongs to $\mathbf Q$. Thus, all possible $\{\eta_i\}_i$-lossy behaviors constitute a subset of $\mathbf Q$. We can interpret $\{\eta_i\}_i$ as a ``shrinking factor'' of $\mathbf Q$ to $\mathbf D \subseteq \mathbf Q$: when $\eta_i =1$ for all $i$, we can attain all of $\mathbf Q$, whereas when $\eta_i = 0$ for all $i$, we can only attain $\mathbf D$.



We will establish the following basic result.
\begin{prp}
\label{prp:trivial_222}
    Consider a (2,2,2) problem. Then, the behavior of a degenerate quantum strategy is a classical behavior. Moreover, the same is true for lossy behavior where the quantum strategy is degenerate.
\end{prp}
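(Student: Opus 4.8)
The plan is to exhibit an explicit classical (local hidden variable) strategy reproducing the behavior, exploiting the fact that in a $(2,2,2)$ problem a degenerate party effectively retains only a single nontrivial measurement setting. Without loss of generality, suppose party $1$ is degenerate, say its measurement for observation $a \in \{0,1\}$ is trivial, so that $\Pi_1^{(d_1\vert a)} = \delta_{d_1,\phi_1(a)}\,\id$ for a fixed decision $\phi_1(a)$. Writing $\rho_1 := \tr_2 \ketbra{\psi}$ and $\rho_2 := \tr_1 \ketbra{\psi}$ for the reduced states, party $1$ then has at most one nontrivial projective measurement $\{\Pi_1^{(d_1\vert a')}\}$, associated with the other observation $a'$; the fully deterministic sub-case where this measurement is also trivial is subsumed by the formulas below.

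First I would construct the classical strategy. Introduce shared randomness $\lambda \in \{0,1\}$ distributed as $\Pr[\lambda] = \tr[\Pi_1^{(\lambda\vert a')}\rho_1]$, interpreted as the hypothetical outcome of pre-measuring $\{\Pi_1^{(d_1\vert a')}\}$ on party $1$'s share, and let $\rho_2^{(\lambda)} := \tr_1[(\Pi_1^{(\lambda\vert a')}\ot\id)\ketbra{\psi}]/\Pr[\lambda]$ be the corresponding conditional state of party $2$. Party $1$ responds deterministically, outputting $\lambda$ on observation $a'$ and $\phi_1(a)$ on observation $a$. Party $2$ responds, for observation $o_2$ and shared value $\lambda$, according to $\Pr[d_2 \vert o_2,\lambda] = \tr[\Pi_2^{(d_2\vert o_2)}\rho_2^{(\lambda)}]$; decomposing this single-party conditional distribution into its four deterministic response functions makes the model a genuine mixture of deterministic strategies, hence a classical strategy.

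The key verification is that this model reproduces the quantum behavior on both of party $1$'s observations. For $o_1 = a'$ this is immediate from the collapse rule, since $\Pr[\lambda]\,\tr[\Pi_2^{(d_2\vert o_2)}\rho_2^{(\lambda)}] = \bra{\psi}\Pi_1^{(\lambda\vert a')}\ot\Pi_2^{(d_2\vert o_2)}\ket{\psi}$. The delicate case is $o_1 = a$, where party $1$'s actual measurement is the identity and therefore leaves party $2$ undisturbed: here I would average over the fictitious outcome using $\sum_\lambda \Pi_1^{(\lambda\vert a')} = \id$ to get $\sum_\lambda \Pr[\lambda]\,\tr[\Pi_2^{(d_2\vert o_2)}\rho_2^{(\lambda)}] = \tr[\Pi_2^{(d_2\vert o_2)}\rho_2]$, which matches $\bra{\psi}\Pi_1^{(d_1\vert a)}\ot\Pi_2^{(d_2\vert o_2)}\ket{\psi} = \delta_{d_1,\phi_1(a)}\,\tr[\Pi_2^{(d_2\vert o_2)}\rho_2]$. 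I expect this averaging step — recognizing that integrating out the unused measurement outcome exactly recovers the undisturbed marginal, so that a party with a single genuine setting can produce no nonlocality — to be the main conceptual obstacle; the remainder is bookkeeping. Since the constructed model is classical, the behavior is classical.

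Finally, for the lossy statement I would argue that degeneracy is inherited by every term of the convex combination in \cref{eq:lossy_behavior}. For each $S \subseteq [n]$, the strategy $\mathcal S_q \sqcup_S \mathcal S_d$ remains degenerate: if $1 \notin S$ party $1$ keeps its trivial $\mathcal S_q$-measurement, and if $1 \in S$ party $1$ is rendered fully deterministic by $\mathcal S_d$. By the first part each $p_{\mathcal S_q \sqcup_S \mathcal S_d}(d\vert o)$ is then a classical behavior, and since classical behaviors are closed under convex combination, the lossy behavior is classical as well.
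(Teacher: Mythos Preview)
Your proof is correct and takes a genuinely different route from the paper's. The paper argues \emph{externally}: it invokes Fine's result that the $(2,2,2)$ local polytope is cut out by positivity and the CHSH inequalities, then computes each CHSH expression for a degenerate strategy and shows it collapses to $\pm(4\tr[P\rho_{AB}]-2)$ for some projector $P$, hence lies in $[-2,2]$. You argue \emph{constructively}: you build an explicit local hidden-variable model by pre-sampling the outcome of the single remaining nontrivial measurement of the degenerate party and steering the other party accordingly. Your approach is more elementary in that it does not rely on Fine's characterization, and the core mechanism---a party with only one genuine setting can always be classically simulated by pre-measurement---immediately extends beyond $(2,2,2)$ to arbitrary local dimensions and outcome alphabets, whereas the paper's CHSH check is tied to the $(2,2,2)$ facet structure. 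The paper's route, on the other hand, ties the result directly to CHSH violation, which meshes well with the surrounding narrative. Both handle the lossy extension identically, by noting that degeneracy persists in every summand of~\cref{eq:lossy_behavior} and that classical behaviors are convex-closed. One tiny point you might make explicit: when $\Pr[\lambda]=0$ the conditional state $\rho_2^{(\lambda)}$ is undefined, but that value of $\lambda$ simply never occurs and can be dropped from the hidden-variable space.
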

\begin{proof}
    The set of all possible classical behaviors is given by a polytope, called the \textbf{local polytope}~\cite{brunner2014bell}. By~\cite{fine1982hidden}, the inequalities that define the local polytope for the case of two parties, two observations, and two decisions are positivity conditions and permutations of the four terms in the CHSH inequality. 
    
    We will explicitly show that the behavior of a degenerate quantum strategy satisfies all possible CHSH inequalities. WOLOG, suppose the first party always outputs 0 when they make the first observation. When they make the second observation, they perform some quantum measurement $\{\Pi_A, I_A - \Pi_A\}$. Let the second party's measurement operators be given by $\{\Pi_B, I_B - \Pi_B\}$ and $\{\Pi_B', I_B - \Pi_B'\}$, in order of their observations. Then, we can compute
    \begin{align*}
        p(\text{XOR}=0\vert (0,0)) & = \mathrm{tr}[(I_A \otimes \Pi_B)\rho_{AB}]\\
        p(\text{XOR}=0\vert (0,1)) & = \mathrm{tr}[(I_A \otimes \Pi_B')\rho_{AB}]\\
        p(\text{XOR}=0\vert (1,0)) & = \mathrm{tr}[(\Pi_A \otimes \Pi_B)\rho_{AB}]+ \mathrm{tr}[((I_A-\Pi_A) \otimes (I_B -\Pi_B))\rho_{AB}] \\
        & =  2\mathrm{tr}[(\Pi_A \otimes \Pi_B)\rho_{AB}]+1-\mathrm{tr}[(I_A \otimes \Pi_B)\rho_{AB}]-\mathrm{tr}[(\Pi_A \otimes I_B)\rho_{AB}]\\
        p(\text{XOR}=0\vert (1,1)) & =  2\mathrm{tr}[(\Pi_A \otimes \Pi_B')\rho_{AB}]+1-\mathrm{tr}[(I_A \otimes \Pi_B')\rho_{AB}] -\mathrm{tr}[(\Pi_A \otimes I_B')\rho_{AB}].
    \end{align*}

    Now, the CHSH inequalities are usually expressed in terms of the quantum correlation matrix elements $c_{kl}$. We first consider the usual permutation 
    \begin{align*}
        &c_{00}+c_{01}+c_{10}-c_{11} \\
        &= 2\mathrm{tr}[(I_A \otimes \Pi_B)\rho_{AB}] -1+2\mathrm{tr}[(I_A \otimes \Pi_B')\rho_{AB}]-1+4\mathrm{tr}[(\Pi_A \otimes \Pi_B)\rho_{AB}]+2-2\mathrm{tr}[(I_A \otimes \Pi_B)\rho_{AB}]\\
        &-2\mathrm{tr}[(\Pi_A \otimes I_B)\rho_{AB}]-1-4\mathrm{tr}[(\Pi_A \otimes \Pi_B')\rho_{AB}]-2+2\mathrm{tr}[(I_A \otimes \Pi_B')\rho_{AB}]+2\mathrm{tr}[(\Pi_A \otimes I_B)\rho_{AB}]+1\\
        & = -2+4\mathrm{tr}[(I_A \otimes \Pi_B')\rho_{AB}]+4\mathrm{tr}[(\Pi_A \otimes \Pi_B)\rho_{AB}]-4\mathrm{tr}[(\Pi_A \otimes \Pi_B')\rho_{AB}]\\
        &=4\mathrm{tr}[(\Pi_A \otimes \Pi_B+ (I_A-\Pi_A) \otimes \Pi_B')\rho_{AB}]-2.
    \end{align*}
    Now, it is easy to see that the operator $\Pi_A \otimes \Pi_B+ (I_A-\Pi_A) \otimes \Pi_B'$ is an orthogonal projector. Thus, its eigenvalues are either 0 or 1. Hence,
    \begin{align*}
        \vert c_{00}+c_{01}+c_{10}-c_{11} \vert \leq 2.
    \end{align*}
    By symmetry, we only need to consider one other permutation:
    \begin{align*}
        &c_{00}+c_{10}+c_{11}-c_{01} \\
        &= 2\mathrm{tr}[(I_A \otimes \Pi_B)\rho_{AB}] -1+4\mathrm{tr}[(\Pi_A \otimes \Pi_B)\rho_{AB}]+2-2\mathrm{tr}[(I_A \otimes \Pi_B)\rho_{AB}]-2\mathrm{tr}[(\Pi_A \otimes I_B)\rho_{AB}]-1\\
        &+4\mathrm{tr}[(\Pi_A \otimes \Pi_B')\rho_{AB}]+2-2\mathrm{tr}[(I_A \otimes \Pi_B')\rho_{AB}]-2\mathrm{tr}[(\Pi_A \otimes I_B)\rho_{AB}]-1-2\mathrm{tr}[(I_A \otimes \Pi_B')\rho_{AB}]+1\\
        & = 2+4\mathrm{tr}[(\Pi_A \otimes \Pi_B)\rho_{AB}]-4\mathrm{tr}[(\Pi_A \otimes I_B)\rho_{AB}] +4\mathrm{tr}[(\Pi_A \otimes \Pi_B')\rho_{AB}]-4\mathrm{tr}[(I_A \otimes \Pi_B')\rho_{AB}]\\
        & = -4\mathrm{tr}[(\Pi_A \otimes (I_B-\Pi_B)+(I_A-\Pi_A)\otimes \Pi_B')\rho_{AB}]+2.
    \end{align*}
    Again, $\Pi_A \otimes (I_B-\Pi_B)+(I_A-\Pi_A)\otimes \Pi_B'$ is an orthogonal projector. Thus,
    \begin{align*}
        \vert c_{00}+c_{10}+c_{11}-c_{01} \vert \leq 2.
    \end{align*}
    We can conclude that the quantum behavior satisfies the inequalities of the local polytope and is therefore a classical behavior.

    Now, given a tuple $(\mathcal S_q, \mathcal S_d)$ where $\mathcal S_q$ is degenerate, the corresponding lossy behavior is a convex combination of quantum behaviors (possibly semiclassical) where each behavior results from a degenerate strategy. Hence, by the previous result, the lossy behavior is a convex combination of classical behaviors and is therefore itself classical.
\end{proof}
\noindent We next prove that 2-dimensional (qubit) quantum systems are sufficient to achieve lossy values for $(n,2,2)$ problems.
\begin{prp}
\label{prp:n22}
    Consider an $(n,2,2)$ problem. Then, the $\{\eta_i\}$-lossy value can be attained where the optimal quantum strategy $\mathcal S_q$ uses only qubit quantum systems.
\end{prp}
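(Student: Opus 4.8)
The plan is to collapse each party's Hilbert space down to a single qubit by invoking Jordan's lemma on the two measurements that party performs, and then to exploit the resulting block structure of the lossy Bell operator. Recall from~\cref{eq:lossy_bell_op} that, for a fixed pair $(\mathcal S_q, \mathcal S_d)$, the best expected utility over all shared states equals the largest eigenvalue of the $\{\eta_i\}$-lossy Bell operator; hence the lossy value is the supremum of this largest eigenvalue over all quantum strategies $\mathcal S_q$ and over the finite set of deterministic fallbacks $\mathcal S_d$. Since each party has exactly two observations and two decisions, the quantum strategy assigns to party $i$ just two $2$-outcome projective measurements on $\mathcal H_i$, determined by the projectors $\Pi_i^{(0|0)}$ and $\Pi_i^{(0|1)}$.

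First I would apply Jordan's lemma to the pair $\{\Pi_i^{(0|0)}, \Pi_i^{(0|1)}\}$ for each $i$, obtaining an orthogonal decomposition $\mathcal H_i = \bigoplus_{\alpha} \mathcal H_{i,\alpha}$ into subspaces of dimension at most $2$ that are simultaneously invariant under both projectors. This induces the decomposition $\bigotimes_{i=1}^n \mathcal H_i = \bigoplus_{\vec\alpha} \bigotimes_{i=1}^n \mathcal H_{i,\alpha_i}$. The key claim is that the lossy Bell operator is block-diagonal with respect to this decomposition: for a party $j \notin S$ the operators $\Pi_j^{(d_j|o_j)}$ preserve every block by Jordan's lemma, while for a party $i \in S$ the semiclassical strategy $\mathcal S_q \sqcup_S \mathcal S_d$ uses the trivial operators $\delta_{d_i, f_i(o_i)} I_{\mathcal H_i}$, which are scalar multiples of the identity and hence preserve every block automatically. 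Thus every summand of~\cref{eq:lossy_bell_op} is block-diagonal.

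Consequently the largest eigenvalue of the lossy Bell operator equals the maximum, over blocks $\vec\alpha$, of the largest eigenvalue of its restriction to $\bigotimes_i \mathcal H_{i,\alpha_i}$. Selecting an optimal block, the restricted measurements are valid $2$-outcome projective measurements on spaces of dimension at most $2$, the restricted fallback is the same deterministic strategy, and the top eigenvector there is a state of the required tensor-product form; padding any $1$-dimensional factor up to $\mathbb{C}^2$ yields a genuine qubit strategy attaining the same expected utility. Since every strategy is thereby matched by a qubit strategy, and the qubit lossy value is itself attained by compactness (the single-qubit measurement parameters, the global state on the unit sphere of $(\mathbb{C}^2)^{\otimes n}$, and the finite set of fallbacks all range over compact sets), the lossy value is attained by a qubit strategy. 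The main obstacle is establishing the block-diagonality cleanly while accounting for the semiclassical fallback terms: one must verify that the decomposition coming purely from the quantum measurements is also respected when the parties in $S$ revert to their deterministic strategy, which holds precisely because those fallback operators are multiples of the identity and therefore block-diagonal in any basis.
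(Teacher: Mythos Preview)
Your argument is correct, and it is genuinely different from the paper's proof. The paper does not invoke Jordan's lemma at all; instead it cites \cite{franz2011extremal,raeburn1989c} for the fact that the extremal points of the $(n,2,2)$ quantum behavior set are realized by qubit strategies, writes the full behavior of an arbitrary $\mathcal S_q$ as a convex combination $\sum_k \lambda_k\, p_{\mathcal S_{q_k}}$ of qubit behaviors, observes that the same convex weights pass through to every marginal $p_{\mathcal S_q}(d_{\bar S}\vert o_{\bar S})$, and hence that the lossy behavior of $(\mathcal S_q,\mathcal S_d)$ is the convex combination $\sum_k \lambda_k$ of the lossy behaviors of $(\mathcal S_{q_k},\mathcal S_d)$. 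Linearity of the expected utility then lets one pick the best $k$.

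Your route is more self-contained: Jordan's lemma gives the Jordan block decomposition directly, and you check block-diagonality of the lossy Bell operator by noting that parties in $S$ contribute only scalar multiples of the identity, which respect any decomposition. This avoids the external extremal-point result (whose proof, in fact, ultimately rests on the same Jordan decomposition). The paper's version is shorter because it outsources that work, but yours would be preferable in a presentation that does not want to rely on those references. Your compactness remark for attainment is also a nice addition; the paper's proof leaves attainment implicit in the word ``achieved''.
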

\begin{proof}
    Fix a deterministic strategy $\mathcal S_d$. Let $\mathcal S_q$ be a quantum strategy. 
    The results in~\cite{franz2011extremal,raeburn1989c} show that the extremal points of the set of all quantum behaviors for $(n,2,2)$ problems are realized by quantum strategies using only qubit quantum systems. Thus, in this setting every quantum behavior is a convex combination of qubit quantum behaviors. 
    We therefore have
    \begin{align*}
        p_{(\mathcal S_q , \mathcal S_d); \{\eta_i\}_i}(d\vert o) &= \sum_{S \subseteq [n]} \prod_{i \in S} (1-\eta_i) \prod_{j \not \in S} \eta_j \cdot p_{\mathcal S_d}(d_S \vert o_S) p_{\mathcal S_q}(d_{\bar S} \vert o_{\bar S}) \\
        & = \sum_{S \subseteq [n]} \prod_{i \in S} (1-\eta_i) \prod_{j \not \in S} \eta_j \cdot p_{\mathcal S_d} (d_S \vert o_S) \left(\sum_k \lambda_k p_{\mathcal S_{q_k}} (d_{\bar S} \vert o_{\bar S})\right) \\
        & = \sum_k \lambda_k \sum_{S \subseteq [n]} \prod_{i \in S} (1-\eta_i) \prod_{j \not \in S} \eta_j \cdot p_{\mathcal S_d} (d_S \vert o_S)  p_{\mathcal S_{q_k}} (d_{\bar S} \vert o_{\bar S}),
    \end{align*}
    where $\bar S$ is the complementary set of $S$, $p(d_S \vert o_S)$ is the marginal probability\footnote{Note that by the no-signaling property the marginal probability does not depend on the observations of $\bar S$. } on $S \subseteq [n]$, $\lambda_k$ is a probability vector, and $\mathcal S_{q_k}$ are qubit strategies. Thus, for fixed $\mathcal S_d$, the maximum expected utility is achieved by a qubit behavior. We can then maximize over $\mathcal S_d$ to obtain the desired conclusion.
    
\end{proof}

\subsection{Noisy Quantum Behavior}
\label{app:noisy}
We give a theoretical exposition on the effect of depolarizing noise in quantum strategies. We make the following definition.
\begin{dfn}
    Let $\mathcal S_q = (\{q_i\}_{i=1}^n, \{M_i\}_{i=1}^n, \vert \psi\rangle)$ be a quantum strategy and $\nu \in [0,1]$. Then, the \textbf{$\nu$-noisy behavior of $\mathcal S_q$} is the behavior that corresponds to the strategy $\mathcal S_q$ except the quantum state $\vert \psi\rangle$ is replaced by $(1-\nu)\vert \psi \rangle\langle \psi\vert + \nu \pi$, where $\pi$ is the maximally mixed state. That is, the $\nu$-noisy behavior is given by
    \begin{align}
    \label{eq:noisy}
        p_o^d(\nu) := \tr\left[\bigotimes_{i=1}^n \Pi_i^{(d_i \vert o_i)} ( (1-\nu)\vert \psi\rangle\langle \psi\vert + \nu \pi)\right].
    \end{align}
    The \textbf{robustness} $\nu^*$ of a quantum strategy whose expected utility is at least the classical value is the smallest $\nu$ such that the expected utility of the $\nu$-noisy behavior is equal to the classical value.
\end{dfn}
\noindent We can expand out~\Cref{eq:noisy}:
\begin{align*}
    p_o^d(\nu) = \tr \left[ \bigotimes_{i=1}^n \Pi_{i}^{(d_i\vert o_i)}  ( (1-\nu)\vert \psi\rangle\langle \psi\vert + \nu \pi)\right] = (1-\nu) p_o^d(0) + \nu \frac{\mathrm{rank}\left[\bigotimes_{i=1}^n \Pi_i^{(d_i \vert o_i)}\right]}{\prod_{i=1}^n q_i}.
\end{align*}
We see that the effect of the noise is completely determined by the ranks of the measurement operators. Now, the fraction in the second term can be interpreted as a behavior. Indeed, we observe that it factorizes into behaviors for individual parties:
\begin{align}
\label{eq:factorizable}
   \frac{\mathrm{rank}\left[\bigotimes_{i=1}^n \Pi_i^{(d_i \vert o_i)}\right]}{\prod_{i=1}^n q_i} = \prod_{i=1}^n \frac{\mathrm{rank}[\Pi_i^{(d_i \vert o_i)}]}{q_i}.
\end{align}
Hence, it is a classical behavior. We can therefore interpret $\nu$ as a shrinking factor of $\mathbf Q$ to the set of all classical behaviors (with rational probabilities) that are factorizable, that is, where the behavior of each party is independent. Note that this set includes $\mathbf D$, so we can always attain the classical value even for $\nu=1$. Furthermore, this implies for a quantum strategy with an expected utility higher than the classical value, the expected utility for the $\nu$-noisy behavior is monotonically decreasing with $\nu$.

For the case of the hedging problem when there is a quantum advantage, if a quantum strategy only uses qubits and achieves the quantum value, the ranks of the measurement operators must be rank 1 according to~\Cref{prp:trivial_222}. Hence, in this case 
\begin{align*}
    p_o^d(\nu) = (1-\nu) p_o^d(0) + \nu  \frac 1 4.
\end{align*}

We claim that in the presence of depolarizing noise, increasing the dimensions of the quantum systems used can increase the expected utility, even for the CHSH inequality. Recall the utility array is given by
\begin{align*}
    \mathcal U =
    \bordermatrix{
    & 0,0 & 0,1 & 1,0 & 1,1 \cr
    0,0 & 1 & 0 & 0 & 1 \cr
    0,1 & 1 & 0 & 0 & 1 \cr
    1,0 & 1 & 0 & 0 & 1 \cr
    1,1 & 0 & 1 & 1 & 0
    },
\end{align*}
and the inputs are uniformly distributed. Then, by~\Cref{thm:tsirelson} and~\Cref{cor:nondegenerate}, there exists an optimal qubit strategy for the noiseless setting, call it $\mathcal S_q$, which is non-degenerate. Thus, the factorizable behavior in~\Cref{eq:factorizable} must be the uniform distribution. The expected utility for the $\nu$-noisy behavior of $\mathcal S_q$ is 
\begin{align*}
    (1-\nu) \cos^2\frac \pi 8 + \nu \frac 1 2.
\end{align*}
Now, consider modified quantum strategy $\mathcal T_q$ where both parties use ququarts (dimension 4). Instead of the original entangled state $\vert \psi\rangle$, we use $\vert \psi \rangle \otimes \vert 00 \rangle$. And for the all the projectors $\Pi_i^{(0 \vert o_i)}$ for the $0$ decision, we replace it with $\Pi_i^{(0 \vert o_i)} \otimes \vert 0 \rangle \langle 0 \vert$ (the corresponding 1 decision projector is the orthogonal complement). This clearly preserves the expected utility of the noiseless quantum behavior. However, the ranks of the projectors are not doubled while the quantum dimensions are. It is not difficult to see that the factorizable behavior in this case is, in matrix form,
\begin{align*}
    \bordermatrix{
    & 0,0 & 0,1 & 1,0 & 1,1 \cr
    0,0 & \frac{1}{16} & \frac{3}{16} & \frac{3}{16} & \frac{9}{16} \cr
    0,1 & \frac{1}{16} & \frac{3}{16} & \frac{3}{16} & \frac{9}{16} \cr
    1,0 & \frac{1}{16} & \frac{3}{16} & \frac{3}{16} & \frac{9}{16} \cr
    1,1 & \frac{1}{16} & \frac{3}{16} & \frac{3}{16} & \frac{9}{16} 
    },
\end{align*}
which implies for $\mathcal T_q$, the $\nu$-noisy behavior achieves an expected utility of
\begin{align*}
    (1-\nu) \cos^2 \frac \pi 8 + \nu \frac{9}{16} > (1-\nu) \cos^2\frac \pi 8 + \nu \frac 1 2
\end{align*}
for $\nu > 0$. This establishes the claim.

\section{Numerical Optimizer}
\label{app:numerical}
We give the details of a numerical optimizer that can be used to compute classical and quantum values. 
The optimizer itself is quite straightforward but can be used for arbitrary TC problems and not just special classes such as XOR problems. In particular, it explicitly parameterizes projective measurements with $\Delta$ possible results in a $q$-dimensional Hilbert space. It then combines these different parameterizations to parameterize an arbitrary quantum strategy, in the process reducing the total number of parameters. To our knowledge, such an explicit parameterization of a quantum strategy for the most general Bell scenario with parameter reductions is not in the literature. Note that to compute the quantum value we use a black-box optimizer, which is not guaranteed to find the globally optimal solution. Hence, we are actually computing lower bounds on the quantum value.

\subsection{Classical Value}
The classical value is straightforward to compute. The optimizer proceeds via brute force by iterating over all possible deterministic strategies and tracking the largest expected utility found. Each deterministic strategy is simply a choice of decision given an observation for each party, which for a $(\{\mathcal O_i\}_{i=1}^n, \{\mathcal D_i\}_{i=1}^n, p_\mathcal{O}(o), u)$ TC problem, would simply be parameterized by  
\begin{align*}
    \sum_{i=1}^n \vert \mathcal O_i \vert
\end{align*}
discrete variables, the variables for party $i$ taking $\vert \mathcal D_i\vert$ different values. Aside from this brute force approach, there are also linear programming methods as detailed in~\cite{zukowski1999strengthening,kaszlikowski2000violations}.

\subsection{Quantum Value}
Our numerical optimizer computes the quantum value via an optimization over explicit parameterizations of projective measurements on a $q$-dimensional Hilbert space $\mathcal H_q$ with $\Delta \leq q$ possible outcomes. This parameterization has both continuous and discrete variables. 

The continuous variables parameterize an orthogonal basis of $\mathcal H_q$ modulo nonzero scalar multiplication. We can do this via~\cite{spengler2010composite}, which gives a parameterization of a unitary matrix via $q^2$ real parameters $\lambda_{m,n}$, $m, n \in [q]$:
\begin{align}
\label{eq:unitary_param}
    U = \left[ \prod_{m=1}^{q-1} \left( \prod_{n=m+1}^q \exp(i P_n \lambda_{n,m}) \exp(i \sigma_{m,n} \lambda_{m,n} \right)\right]  \cdot \left( \prod_{l=1}^q \exp(i P_l \lambda_{l,l})\right),
\end{align}
where the matrix multiplication proceeds from left to right,
\begin{align*}
    P_l := \vert l \rangle \langle l \vert,\,
    \sigma_{m,n} := -i \vert m \rangle \langle n \vert + i \vert n \rangle \langle m \vert.
\end{align*}
The range for $\lambda_{m,n}$ is $[0,2\pi]$ for $m \geq n$ and $[0, \frac \pi 2]$ for $m < n$. Since an orthogonal basis modulo nonzero scalar multiplication is basically a unitary matrix modulo multiplication by a diagonal unitary, we see from~\Cref{eq:unitary_param} that we can simply remove the rightmost term to obtain our desired parameterization. Thus, in the end we only need the $q^2-q$ parameters $\lambda_{m,n}$ where $m \neq n$, and use the parameterization
\begin{align}
\label{eq:basis_param}
    U = \left[ \prod_{m=1}^{q-1} \left( \prod_{n=m+1}^q \exp(i P_n \lambda_{n,m}) \exp(i \sigma_{m,n} \lambda_{m,n} \right)\right],
\end{align}
where the columns of $U$ are the basis vectors that we use.

The discrete variable describes how to partition the $q$-dimensional Hilbert space $\mathcal H_q$, for which we have fixed the basis via the continuous variables, into $\Delta$ possible measurement results. This is a stars and bars partition, so the variable takes
\begin{align*}
    \binom{q+\Delta-1}{\Delta-1}
\end{align*}
possible values.

We have figured out how to parameterize a projective measurement for an observation of one party. We next must combine the parameters for every observation and every party to parameterize a Bell operator in~\Cref{eq:bell_op}. Note that we do not need to parameterize the shared quantum state since we can simply compute the largest eigenvalue of the Bell operator. In fact, WOLOG we can assume each party performs the measurement in the computational basis upon the first observation, with respect to some ordering of the observations. This is because we can always conjugate the Bell operator by a tensor product unitary that rotates for each party the measurement operators used upon making the first observation to the computational basis. In sum, for a TC problem $(\{\mathcal O_i\}_{i=1}^n, \{\mathcal D_i\}_{i=1}^n, p_\mathcal{O}(o), u)$, a quantum strategy where party $i$ uses a quantum system of dimension $q_i$ has a total of
\begin{align}
\label{eq:num_var}
    \sum_{i=1}^n (q_i^2-q_i)(\vert \mathcal O_i \vert -1)
\end{align}
continuous variables and 
\begin{align}
\label{eq:num_var_disc}
    \sum_{i=1}^n \vert \mathcal O_i \vert
\end{align}
discrete variables. 

Interestingly, this is not always the minimum number of variables we need: for some cases we can further eliminate variables. In the case of $(n,2,2)$ problems, we can obtain the following result.
\begin{prp}
\label{prp:n22_var}
    Consider an $(n,2,2)$ problem. Then, it is sufficient to use $n$ continuous variables to parameterize a quantum strategy that achieves the quantum value. Moreover, this is also sufficient to parameterize a quantum strategy such that paired with the appropriate deterministic strategy, it achieves the lossy value.
\end{prp}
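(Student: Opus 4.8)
\section*{Proof proposal}

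The plan is to start from the general count in~\cref{eq:num_var}, which for an $(n,2,2)$ problem gives $2n$ continuous variables (each party has $q_i=2$ and $\vert\mathcal{O}_i\vert=2$, contributing $(q_i^2-q_i)(\vert\mathcal{O}_i\vert-1)=2$), and to show that one local phase per party is redundant, halving the count to $n$. By~\cref{prp:n22} the quantum value, and more generally the lossy value paired with an appropriate deterministic fallback, is already attained by a strategy in which every party holds a qubit. So it suffices to argue that a one-angle-per-party family of qubit measurements exhausts what is needed.

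First I would fix, as the paper already does without loss of generality, each party's measurement for their first observation to be the computational basis $\{\vert 0\rangle\langle 0\vert,\vert 1\rangle\langle 1\vert\}$, which costs no continuous variables. For the second observation, a nontrivial projective qubit measurement is $\{\vert\phi_i\rangle\langle\phi_i\vert,\vert\phi_i^\perp\rangle\langle\phi_i^\perp\vert\}$ with $\vert\phi_i\rangle=\cos\theta_i\vert 0\rangle+e^{i\varphi_i}\sin\theta_i\vert 1\rangle$, i.e.\ the two parameters $\theta_i,\varphi_i$. The key step is to absorb the phase: apply the diagonal local unitary $V_i=\mathrm{diag}(1,e^{-i\varphi_i})$ to party $i$'s qubit. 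Being diagonal, $V_i$ fixes the first-observation projectors $\vert 0\rangle\langle 0\vert,\vert 1\rangle\langle 1\vert$, while it rotates $\vert\phi_i\rangle$ to the real state $\cos\theta_i\vert 0\rangle+\sin\theta_i\vert 1\rangle$. Conjugating the whole strategy by $V=\bigotimes_i V_i$ leaves the behavior, and hence the expected utility, unchanged, but now every measurement operator of every party is a real matrix.

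With all measurement operators real, the Bell operator~\cref{eq:bell_op} is a sum of tensor products of real projectors weighted by the real coefficients $p_{\mathcal{O}}(o)\,u_o^d$, so it is a real symmetric matrix; its largest eigenvalue, which is the value of this measurement choice optimized over the shared state, is therefore attained on a real eigenvector and equals the value of the original strategy. Optimizing over the remaining angles $\theta_1,\ldots,\theta_n$ thus recovers the quantum value using exactly $n$ continuous variables. For the lossy statement I would run the identical argument on the lossy Bell operator~\cref{eq:lossy_bell_op}: in each term of the sum over lost-party subsets $S$, the parties in $S$ contribute trivial ($0$ or $I$) operators, which are invariant under conjugation by the diagonal $V_i$, while the remaining parties contribute the realified measurement projectors. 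Hence the same $n$-angle family, together with the chosen deterministic fallback, attains the lossy value.

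The main thing to get right, rather than a genuine obstacle, is the bookkeeping in the lossy case: one must check that a single diagonal unitary $V_i$ per party simultaneously (i) preserves the computational-basis first measurement, (ii) realifies the second measurement, and (iii) leaves the trivial fallback operators trivial uniformly across all the subsets $S$ appearing in~\cref{eq:lossy_bell_op}, so that conjugating by $V$ realifies the entire lossy operator at once. Everything else is the routine observation that local unitaries preserve expected utility and that a real symmetric Bell operator needs no complex structure.
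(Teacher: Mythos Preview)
Your proposal is correct and follows essentially the same route as the paper: invoke \cref{prp:n22} to reduce to qubits, fix each party's first-observation measurement to the computational basis, then conjugate by a per-party diagonal phase $V_i=\mathrm{diag}(1,e^{-i\varphi_i})$ (the paper's $Z_{-\lambda_{2,1}^{(j)}}$) to kill the phase parameter while leaving the computational-basis projectors and the trivial fallback operators $0,I$ unchanged, so that only the single angle $\theta_i$ per party survives in both the Bell operator \cref{eq:bell_op} and the lossy Bell operator \cref{eq:lossy_bell_op}. Your framing in terms of realifying the operator is a slightly different (and arguably cleaner) way to phrase the same local-unitary invariance argument the paper gives.
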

\begin{proof}
    By~\Cref{prp:n22}, to achieve the quantum value we can assume we are only using qubit systems. By~\Cref{eq:num_var}, we therefore seem to need $\sum_{i=1}^n 2 \cdot (2-1) = 2n$ continuous variables to parameterize a Bell operator up to local unitary equivalence. We can halve this. Let $\lambda^{(j)}_{m,n}$ denote the continuous parameters for party $j$'s projective measurement upon their second observation (recall measurement upon first observation is in the computational basis). Now, for $q=2$,~\Cref{eq:basis_param} gives
    \begin{align*}
        U = 
        \begin{pmatrix}
            \cos \lambda_{1,2}^{(j)} & \sin \lambda_{1,2}^{(j)}\\
            -e^{i \lambda_{2,1}^{(j)}} \sin \lambda_{1,2}^{(j)} & e^{i\lambda_{2,1}^{(j)}} \lambda_{1,2}^{(j)}
        \end{pmatrix}.
    \end{align*}
    Let $\vert v\rangle$ denote the first column vector. Hence, considering all possible observations and partitions, the choices of a projector for party $j$ is the following:
    \begin{align}
    \label{eq:pi_choice}
        \Pi_j \in \{\vert v\rangle \langle v \vert, I - \vert v \rangle \langle v \vert, 0, I, \vert 0 \rangle \langle 0 \vert, \vert 1\rangle \langle 1\vert \}.
    \end{align}
    Then, we observe for all possibilities, the operator
    \begin{align*}
        Z_{-\lambda_{2,1}^{(j)}} \Pi_j Z_{\lambda_{2,1}^{(j)}}
    \end{align*}
    is independent of $\lambda_{2,1}^{(j)}$. Here we used the $Z$ rotation matrix
    \begin{align*}
        Z_\lambda :=
        \begin{pmatrix}
            1 & 0 \\
            0 & e^{i\lambda}
        \end{pmatrix}.
    \end{align*}
    Thus, we see that
    \begin{align*}
        & \left(\bigotimes_{j=1}^n Z_{-\lambda_{2,1}^{(j)}} \right) \sum_{o \in \mathcal O} p_\mathcal{O}(o) \sum_{d \in \mathcal D} \bigotimes_{j=1}^n \Pi_{j}^{(d_j\vert o_j)} u_o^d  \left(\bigotimes_{j=1}^n Z_{\lambda_{2,1}^{(j)}} \right) \\
        & = \sum_{o \in \mathcal O} p_\mathcal{O}(o) \sum_{d \in \mathcal D} \left(\bigotimes_{j=1}^n Z_{-\lambda_{2,1}^{(j)}} \Pi_{j}^{(d_j\vert o_j)} Z_{-\lambda_{2,1}^{(j)}}\right) u_o^d      
    \end{align*}
    is independent of $\lambda_{2,1}^{(j)}$ for all $j$. However, this operator is local unitarily equivalent to the Bell operator. This implies that the largest eigenvalue is independent of these $n$ parameters and so the first conclusion follows.

    In the lossy case, for any quantum strategy $\mathcal S_q$ and deterministic strategy $\mathcal S_d$, every term in the convex combination of~\Cref{eq:lossy_bell_op} also satisfies the above independence property since~\Cref{eq:pi_choice} still holds. The second conclusion therefore follows. 
\end{proof}

For example, by~\Cref{prp:n22_var} a quantum strategy for a $(2,2,2)$ problem attaining the quantum value can be parameterized with only $2$ continuous variables, instead of 4 according to~\Cref{eq:num_var}. By~\Cref{eq:num_var_disc}, we seem to need $2+2 = 4$ discrete variables. However, by~\Cref{prp:trivial_222}, for a $(2,2,2)$ problem a quantum strategy whose behavior is not classical uses only nontrivial measurement operators. Thus, to compute the quantum value, the 4 discrete variables can be eliminated as we only need to consider the $1+1 =2$ partition. 
In particular, for the hedging problem, we need to only optimize over 2 continuous variables. We use our numerical optimizer to obtain~\Cref{fig:app_anti_toxic},~\Cref{fig:app_robustness}, and~\Cref{fig:app_noisy_qadv}, which are equivalent (albeit with lower resolution) to ~\Cref{fig:anti_toxic}(a),~\Cref{fig:robustness}(a) and~\Cref{fig:noisy_qadv}, respectively, in the main text which are computed using the standard SDP algorithm~\cite{cleve2004consequences}. Also,~\Cref{fig:anti_toxic}(b,c) and~\Cref{fig:robustness}(b,c) are obtained with the above numerical optimizer.
\begin{figure}[h!]
    \centering
    \includegraphics[width=0.45\linewidth]{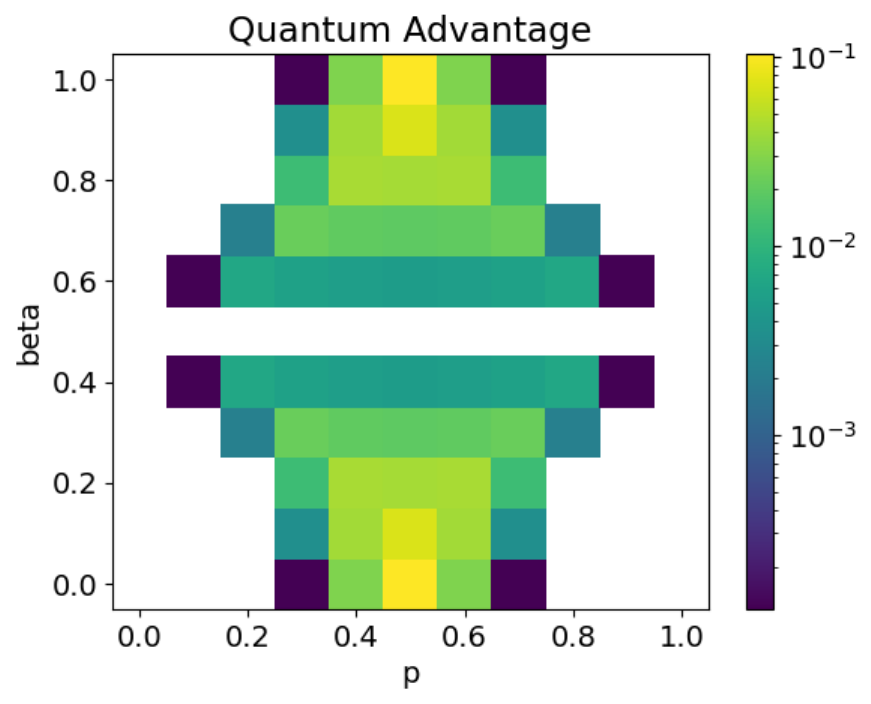}
    \caption{Quantum advantage for the hedging problem as a function of $p$ and $\beta$, where each ranges from 0 to 1 in increments of 0.1. Colors are in log scale. White squares indicate that there was no quantum advantage found up to floating point error.}
    \label{fig:app_anti_toxic}
\end{figure}
\begin{figure}[h!]
    \centering
    \includegraphics[width=0.45\linewidth]{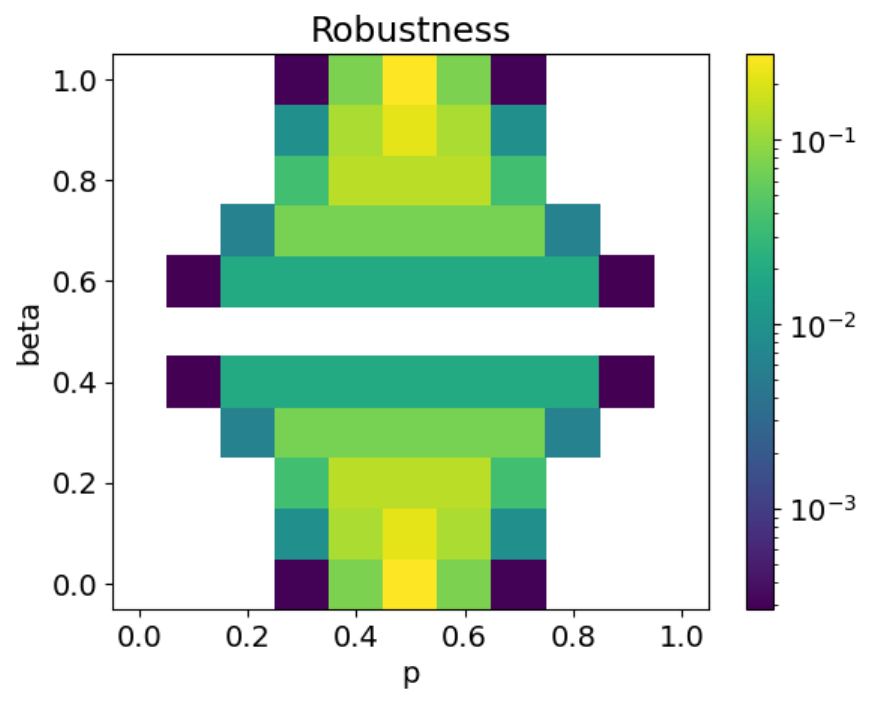}
    \caption{Robustness values computed for the hedging problem with $p$ and $\beta$ taking values between 0 and 1 with increments of 0.1. Colors are in log scale. When robustness is 0, we draw a white square.}
    \label{fig:app_robustness}
\end{figure}
\begin{figure}
        \centering
    \begin{subfigure}[b]{0.48\textwidth}
    \centering
    \includegraphics[width = 0.99\textwidth, right]{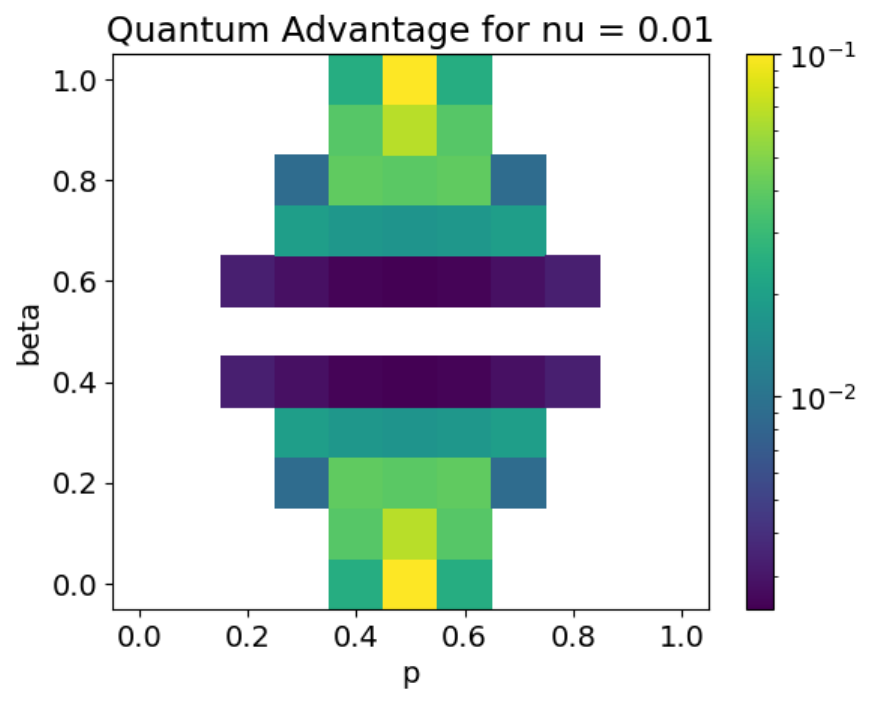}
    \caption{}
    \end{subfigure}
    \\
    \begin{subfigure}[b]{0.48\textwidth}
    \centering
    \includegraphics[width = 0.99\textwidth, left]{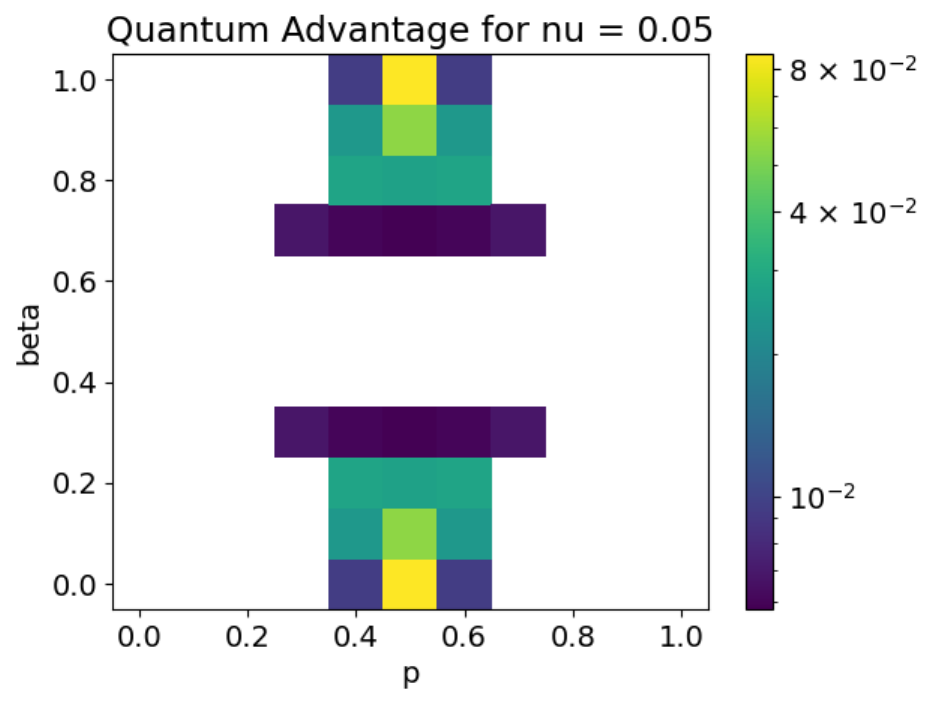}
    \caption{}
    \end{subfigure}
    \quad
    \begin{subfigure}[b]{0.48\textwidth}
    \centering
    \includegraphics[width = 0.99 \textwidth, left]{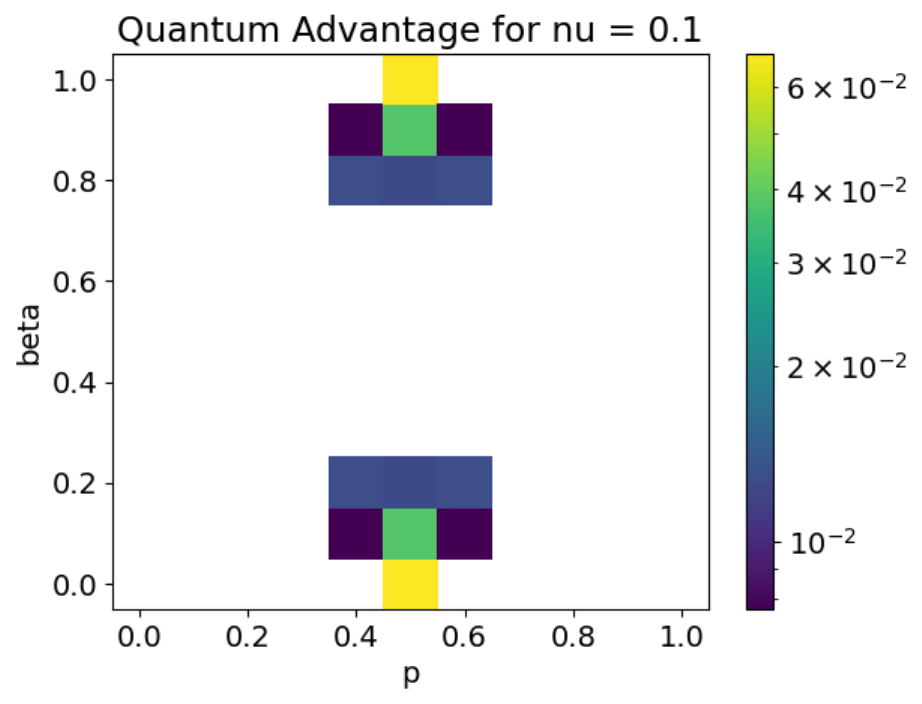}
    \caption{}
    \end{subfigure}
    \caption{Quantum advantage for the hedging problem as a function of $p$ and $\beta$, where each ranges from 0 to 1 in increments of 0.1, for various levels of depolarizing noise. Colors are in log scale. White squares indicate that there was no quantum advantage found up to floating point error.}
    \label{fig:app_noisy_qadv}
\end{figure}
\clearpage
Optimization is conducted via two methods. The first is brute force. For the continuous variables, we choose a grid of values to evaluate over and also perform a gradient descent for each point in the grid. This is implemented via \texttt{scipy.optimize}~\cite{2020SciPy-NMeth}. The discrete variables are evaluated at every possible value. This method is clearly not scalable, so we also include another method, the CMA-ES evolutionary algorithm~\cite{hansen2016cma} with discrete variables~\cite{hamano2022cma}. We use the Python implementation in~\cite{nomura2024cmaes} for our optimizer. Note that CMA-ES does not guarantee a global optimum. However, in practical scenarios it may be sufficient to find a quantum strategy with a higher expected utility than the classical value. To double-check the results in~\Cref{fig:app_anti_toxic},~\Cref{fig:app_robustness},~\Cref{fig:app_noisy_qadv},~\Cref{fig:anti_toxic}(b,c), and~\Cref{fig:robustness}(b,c), we performed both brute force search (with a grid linear size of 20 points) and CMA-ES optimizations and verified that we obtained the same results. 
\subsection{Lossy Value}
To compute the lossy value of a TC problem, we optimize over tuples $(\mathcal S_q, \mathcal S_d)$. This simply combines the parameterization of a quantum strategy with that of a deterministic strategy. The number of continuous variables is the same, but now the number of discrete variables is 
\begin{align*}
    2\sum_{i=1}^n \vert \mathcal O_i \vert,
\end{align*}
but each variable has its own corresponding range of values.
Note that we are now instead computing the largest eigenvalue of the lossy Bell operator in~\Cref{eq:lossy_bell_op}, so we again do not need to parameterize the shared quantum state. We can perform optimization via brute force where we also iterate over all values of the deterministic strategy variables. The CMA-ES optimization can be done by treating the deterministic strategy variables as just additional discrete variables.
\begin{figure}[h!]
    \centering
    \includegraphics[width=0.45\linewidth]{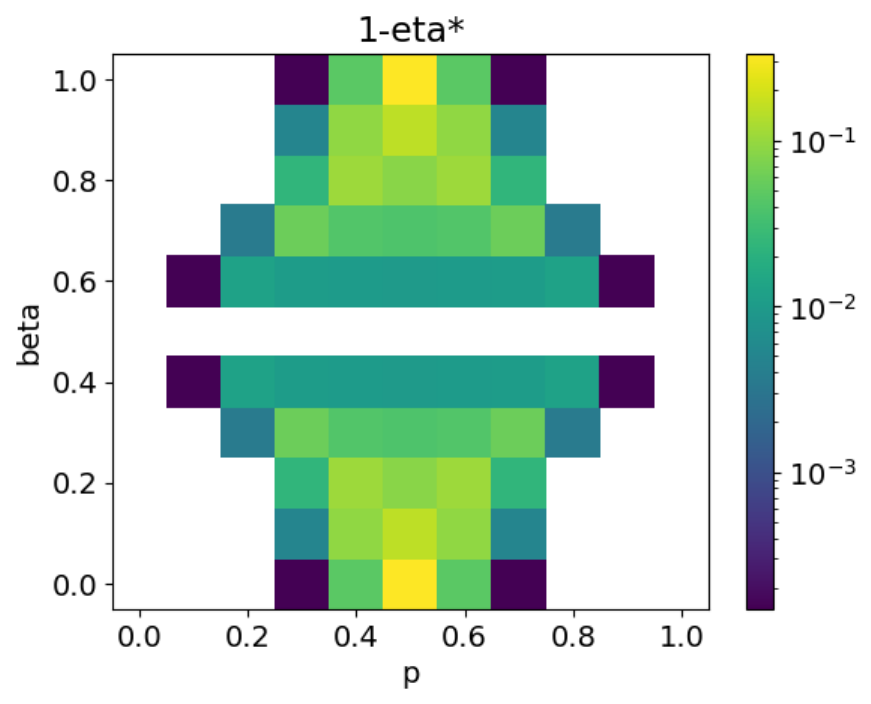}
    \caption{$1 - \eta^*$ values computed for the hedging problem with $p$ and $\beta$ taking values between 0 and 1 with increments of 0.1. Colors are in log scale. When there was no quantum advantage to begin with, by definition $1 - \eta^* = 0$ and we draw a white square.}
    \label{fig:app_eta_star}
\end{figure}

By~\Cref{prp:trivial_222} and~\Cref{prp:n22}, we can compute the lossy value of a $(2,2,2)$ problem by parameterizing a quantum strategy that only uses qubits and nontrivial measurement operators. By~\Cref{prp:n22_var}, we again only need 2 continuous variables. However, we now have 4 discrete variables parameterizing the deterministic strategy which must be included in the optimization. We perform brute force optimization over both continuous (grid linear size of 20) and discrete variables for a given $\eta$ to compute the lossy value, then perform a binary search over $\eta \in [\frac 2 3, 1]$ to find $\eta^*$, thereby obtaining the results in~\Cref{fig:app_eta_star}, which is similar (but with lower resolution) to~\Cref{fig:eta_star}(a) which was computed using a modified NPA algorithm~\cite{navascues2008convergent,tba}. We observe some small differences in computed $\eta^*$ values via NPA and our numerical optimizer primarily due to the numerical imprecision in the computed quantum value which can translate to a larger error in $\eta^*$.
We also use the above optimizer to obtain~\Cref{fig:eta_star}(b,c). 

\section{Computer Architecture Example}
\label{app:arch}
We give an example of how the CHSH game could appear in a computer architecture setting. We first emphasize that this is a ``pseudo-example'' in that it is inconsistent with how modern computer processor technologies work. Hence, the example involves a hypothetical and not realistic computing device. However, we include it to demonstrate some of the concepts and logic that could be involved in a TC problem in a realistic computer architecture scenario that admits a quantum advantage. Furthermore, it is possible that computer processor technologies could change in the future in a way that makes this example more relevant.

In this example, there are two CPU's synchronously making read accesses to memory. We assume that this is happening at such short timescales such that the CPU's do not have time to communicate with each other (This is possible due to the reasons listed in~\Cref{sec:discussion}.). We make two somewhat ad hoc assumptions:
\begin{enumerate}
    \item There are two copies of the memory. Either memory can be accessed by the CPU's.
    \item It is preferable that the CPU's read from the same memory.
\end{enumerate}
These assumptions are difficult to justify given the current technology. But continuing, assume CPU A either accesses address 1 or 2 in memory, while CPU B either accesses address 2 or 3 in memory. This setup is shown in~\Cref{fig:arch}.
\begin{figure}[h!]
    \centering
    \includegraphics[width=0.85\linewidth]{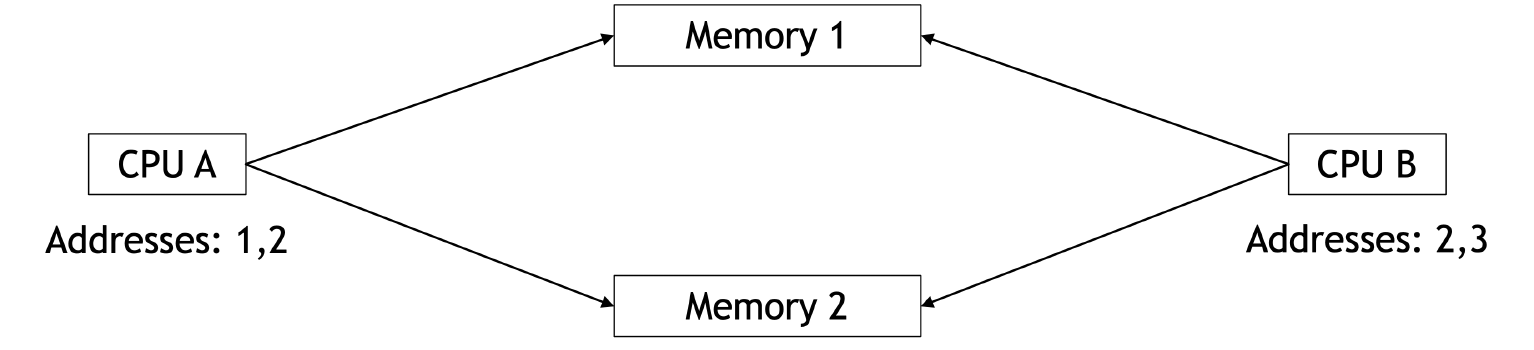}
    \caption{Two CPU's making read accesses to two copies of memory.}
    \label{fig:arch}
\end{figure}

Now, when the two CPU's want to query different addresses, by Assumption 2 they should read from the same memory. However, if they want to query the same address, they should read from different memories, as reading from the same memory would lead to an address collision. Assuming this is a simple binary game (utility matrix entries are 0 or 1), this setup therefore gives rise to the following utility matrix:
\begin{align*}
    \mathcal U =
    \bordermatrix{
    & 1,1 & 1,2 & 2,1 & 2,2 \cr
    1,2 & 1 & 0 & 0 & 1 \cr
    1,3 & 1 & 0 & 0 & 1 \cr
    2,2 & 0 & 1 & 1 & 0 \cr
    2,3 & 1 & 0 & 0 & 1
    },
\end{align*}
where the addresses to be queried by the respective CPU's label the row indices while the memories that are read label the column indices. 
Up to relabeling, this is exactly the CHSH game. Note that this TC problem can easily be generalized to involve more parties, inputs, and outputs by simply increasing the number of CPU's, addresses, and memories, respectively.

\end{document}